\definecolor{backcolour}{RGB}{250, 250, 250}
\newtheorem{definition}{Definition}[section]  
\newtheorem{lemma}{Lemma}[section]            
\newtheorem{theorem}{Theorem}[section]            
\newtheorem{proposition}{Proposition}[section]            
\newcommand{\rmv}[1] {}
\newcommand{\s}[1] {\mathsf{#1}}
\newcommand{\myparagraph}[1] {\subsubsection*{\textbf{#1}}}
\newcommand{\set}[1] {\{#1\}}
\newcommand{\setpred}[2] {\set{#1\: |\: #2}}
\newcommand{\powerset}[1] {2^{#1}} 
\newcommand{\RNum}[1]{\uppercase\expandafter{\romannumeral #1\relax}}
\newcommand{\relcomp} {\circ}
\newcommand{\id}[1] {\s{id}_{#1}}
\newcommand{\pgm} {\s{P}}
\newcommand{\looppgm} {\s{L}}
\newcommand{\loopfree} {\alpha}
\newcommand{\inv} {\s{I}}
\newcommand{\invar} {\inv}
\newcommand{\annopgm} {\s{A}}
\newcommand{\alooppgm} {\s{A}_{\s{L}}}
\newcommand{\prog}{\pgm}
\newcommand{\reach}{\mathcal{R}}
\newcommand{\tempr}{\ensuremath{\mathcal{F}}}
\newcommand{\tempi}{\ensuremath{\mathcal{I}}}
\newcommand{\meet}{\ensuremath{\wedge}}
\newcommand{\strctless}{\ensuremath{\prec}}
\newcommand{\less}{\ensuremath{\preceq}}
\newcommand{\func}{\ensuremath{\Pi}}
\definecolor{syntax}{rgb}{0.0,0.0,1.0}
\definecolor{darkgreen}{rgb}{0.0, 0.5, 0.0}
\newcommand{\checkr}{\texttt{\textcolor{syntax}{\textbf{\small{check}}}}}
\newcommand{\checki}{\texttt{\textcolor{syntax}{\textbf{\small{checkInv}}}}}
\newcommand{\rp}{\mathsf{t}}
\newcommand{\rques}{\mathsf{t}^?}
\newcommand{\gc}{\texttt{\textcolor{syntax}{\textbf{\small{getCandidate}}}}}
\newcommand{\gtraces}{\texttt{\textcolor{syntax}{\textbf{\small{getTraces}}}}}
\newcommand{\gi}{\texttt{\textcolor{syntax}{\textbf{\small{getInv}}}}}
\newcommand{\fr}{\texttt{\textcolor{syntax}{\textbf{\small{findRanking}}}}}
\newcommand{\gcounter}{\texttt{\textcolor{syntax}{\textbf{\small{getCex}}}}}
\newcommand{\gcounterinv}{\texttt{\textcolor{syntax}{\textbf{\small{getCexInv}}}}}
\newcommand{\refine}{\texttt{\textcolor{syntax}{\textbf{\small{refine}}}}}
\newcommand{\prefnum}{\texttt{\textcolor{darkgreen}{\ensuremath{\bm{\mathtt{P}_{ref}}}}}}
\newcommand{\prefiter}{\texttt{\textcolor{darkgreen}{\ensuremath{\bm{\mathtt{P}_{iter}}}}}}
\newcommand{\ptraces}{\texttt{\textcolor{darkgreen}{\ensuremath{\bm{\mathtt{P}_{traces}}}}}}
\newcommand{\lati}{\mathcal{L}}
\newcommand{\true}{\ensuremath{\mathsf{true}}\xspace}
\newcommand{\false}{\ensuremath{\mathsf{false}}\xspace}
\newcommand{\rf}{\ensuremath{f}}
\newcommand{\init}{\ensuremath{Init}\xspace}
\newcommand{\ini}{\init}
\newcommand{\stat}{\ensuremath{S}\xspace}
\newcommand{\dom}{\mathsf{dom}}
\newcommand{\anno}{\annopgm}
\newcommand{\pro}{\pgm}
\newcommand{\bracs}[1]{\llbracket #1 \rrbracket}
\newcommand{\core}[1]{\mathcal{C}(#1)}
\newcommand{\toolname}{\textsc{Syndicate}\xspace}
\definecolor{dkgreen}{rgb}{0,0.6,0}
\definecolor{gray}{rgb}{0.5,0.5,0.5}
\definecolor{mauve}{rgb}{0.58,0,0.82}
\tiny\color{gray},
\bfseries\color{blue},
\newcounter{number}
\newcommand{\mycounter}{...(\thenumber) \stepcounter{number}}
\begin{document}

\title{Efficient Ranking Function-Based Termination Analysis with Bi-Directional Feedback}

\author{Yasmin Chandini Sarita}
\affiliation{%
  \institution{University of Illinois Urbana-Champaign}
  \country{USA}
}
\email{ysarita2@illinois.edu}

\author{Avaljot Singh}
\email{avaljot2@illinois.edu}
\affiliation{%
  \institution{University of Illinois Urbana-Champaign}
  \country{USA}
}

\author{Shaurya Gomber}
\affiliation{%
  \institution{University of Illinois Urbana-Champaign}
  \country{USA}
}
\email{sgomber2@illinois.edu}

\author{Gagandeep Singh}
\affiliation{%
  \institution{University of Illinois Urbana-Champaign}
  \country{USA}
}
\email{ggnds@illinois.edu}

\author{Mahesh Vishwanathan}
\affiliation{%
  \institution{University of Illinois Urbana-Champaign}
  \country{USA}
}
\email{vmahesh@illinois.edu}

\begin{abstract}
Synthesizing ranking functions is a common technique for proving the termination of loops in programs. A ranking function must be bounded and decrease by a specified amount with each iteration for all reachable program states. However, the set of reachable program states is often unknown, and loop invariants are typically used to overapproximate this set. So, proving the termination of a loop requires searching for both a ranking function and a loop invariant. Existing ranking function-based termination analysis techniques can be broadly categorized as (i) those that synthesize the ranking function and invariants independently, (ii) those that combine invariant synthesis with ranking function synthesis into a single query, and (iii) those that offer limited feedback from ranking function synthesis to guide invariant synthesis. 
These approaches either suffer from having too large a search space or inefficiently exploring the smaller, individual search spaces due to a lack of guidance. 

In this work, we present a novel termination analysis framework \toolname, which exploits bi-directional feedback to guide the searches for both ranking functions and invariants in tandem that constitute a proof of termination for a given program. This bi-directional guidance significantly enhances the number of programs that can be proven to terminate and reduces the average time needed to prove termination compared to baselines that do not use the bi-directional feedback. The \toolname framework is general and allows different instantiations of templates, subprocedures, and parameters, offering users the flexibility to adjust and optimize the ranking function synthesis. 

We formally show that, depending on the templates used, \toolname is both relatively complete and efficient, outperforming existing techniques that achieve at most one of these guarantees. Notably, compared to more complex termination analysis methods extending beyond a single ranking function synthesis, \toolname offers a simpler, synergistic approach. Its performance is either comparable to or better than existing tools in terms of the number of benchmarks proved and average runtime. 

\end{abstract}

\keywords{}

\maketitle

\section{Introduction}
\label{sec:intro}
Termination analysis is one of the most challenging components of program verification. It is used for verifying systems code~\cite{terminator} and within software model checking~\cite{completelin}. 
Several techniques for automatically proving termination have been explored, including using various reduction orders, finding recurrent sets, fixpoint logic solving, etc.~\cite{aprove, ultimate, dynamite, popl23}. One of the most common techniques to prove the termination of programs and the focus of this work is finding \textit{ranking functions}.  Ranking functions have the property that for all reachable program states, they are bounded from below and decrease by at least a certain amount in each iteration. Finding ranking functions is non-trivial because computing the exact set of reachable states is undecidable. To overcome this, loop invariants are used to over-approximate the reachable states. So, to prove the termination of a program using this technique, there are two unknowns - the ranking function and the loop invariant. 

In the context of classical property-guided synthesis~\cite{propinvar,pdr}, synthesizing a proof of termination, which requires both a ranking function and an invariant, is challenging. The difficulty arises because when synthesizing either component, we do not have access to the other component, which would act as the target property. For instance, when synthesizing an invariant, our goal is to create one that is useful for proving termination. However, since we don't yet have the ranking function, we can't assess whether the synthesized invariant is useful in this context, as the ranking function would be needed to evaluate its effectiveness in proving termination. So, finding ranking functions and invariants to prove termination is associated with the following challenges.

\textbf{Challenge 1: Dual Search. }
At a high level, existing approaches address the dual search of ranking functions and invariants through one of the following strategies:  (1) Methods like \cite{nuterm, dynamite, ultimate} synthesize the invariants and ranking functions independently, where the invariant synthesis does not know the current state of the ranking function search and vice versa. Due to the independence of the searches, these techniques frequently spend a significant amount of time discovering invariants that are either not useful to prove termination or are stronger than necessary.  (2) Techniques like ~\cite{terminator, bits, cooknew, datadrivenloop} use their current set of candidate ranking functions to guide a reachability analysis procedure. However, this procedure uses an external temporal safety checker, so insights gained during one call to the safety checker cannot be used to guide the ranking function generation procedure or future calls to the safety checker. (3) Methods such as \cite{chc,popl23,togethersmt} formulate both searches as a composite query. These techniques pose the synthesis of ranking functions and an over-approximation of the reachable states as the task of verifying the satisfiability of a combined formula. This couples both searches tightly, making the search space for the combined query significantly large, being a product of the individual search spaces.

\textbf{Challenge 2: Efficiency \& Completeness. }
Achieving efficiency and completeness while handling a wide variety of ranking functions and invariants is a significant challenge. While enumerating all possible candidate ranking functions and invariants can trivially ensure completeness, it often leads to prohibitively large runtimes, making such approaches impractical for real-world applications. Many existing techniques focus on specific classes of ranking functions~\cite{polykincaid, completelin, completelin1}, such as linear or polyhedral ranking functions, providing completeness guarantees. However, this limits the kind of programs and ranking functions they can handle. While some methods offer completeness for a broader set of ranking functions~\cite{chc,popl23}, they still do not provide theoretical guarantees for efficient performance. Striking the right balance between efficiency and completeness while handling complex programs and ranking functions is a challenge in termination analysis.

\textbf{Key Idea: Bi-directional Feedback. }
We argue that for a given program, although the search for ranking functions and the corresponding invariants must not be agnostic of each other, directly encoding them as a monolithic query can make each step of the search expensive. We propose that they should still be searched using separate subroutines, but they should provide feedback to each other in the following manner: (1) Counterexamples to the ranking function validation check that are not reachable program states can guide the invariant synthesizer to generate invariants that exclude these counterexamples. On the other hand, when the counterexamples are reachable and no invariant can exclude them, this search for possible invariants aids in discovering new conclusively reachable program states, thereby supporting the invariant synthesis process. (2) Provably reachable counterexamples found by an invariant synthesizer can guide the ranking function synthesizer to ensure that future potential ranking functions are bounded and reducing on these counterexamples. (1) and (2) constitute a \textit{bi-directional feedback strategy} that enables an efficient synergistic search for a ranking function and an invariant just \textit{strong-enough} to prove termination. This approach is efficient because it maintains smaller, individual search spaces for ranking functions and invariants while leveraging bi-directional feedback to enable faster convergence towards a valid termination proof.

Further, for programs with multiple loops, the bi-directional feedback becomes even more important because the termination proofs across different loops can neither be searched independently due to excessive imprecision nor modeled as a single query due to the prohibitively large product search space. In nested loops, the invariant of the outer loop defines the over-approximation of the initial set of its inner loop. Conversely, the inner invariant describes the over-approximated transition relation of the outer loop's body. Similarly, the searches for termination proofs of sequential loops affect each other. Analyzing multiple loops is more efficient if the ranking function search and the invariant search for all loops guide each other using bi-directional feedback.

\textbf{Our framework: \toolname. }
Based on our key idea, we introduce a new, general framework called \toolname that enables the synergistic synthesis of ranking functions and invariants for efficient termination analysis of complex programs containing arbitrarily nested loops, disjunctive conditionals,  and non-linear statements. It is parameterized by a set of possible invariants $\tempi$ and a set of possible ranking functions $\tempr$ and can be instantiated with different choices to obtain different termination analyses. For each loop, counterexamples from the invariant search guide the ranking function search, and counterexamples from the ranking function search guide the invariant search. Further, the invariant and ranking function searches for each loop are guided by the analyses of the other loops in the program. Based on the choices for $\tempi$ and $\tempr$, \toolname is relatively complete, meaning that if complete procedures exist for the template-specific functions in our framework, then \toolname will either find a ranking function and an invariant within their respective templates to prove termination (if they exist) or will terminate, indicating that no such ranking function and invariant can be found. We show that \toolname achieves this guarantee with greater efficiency than other approaches lacking completeness guarantees.

\textbf{Main Contributions}:
\begin{itemize}
    \item We introduce a novel bi-directional feedback strategy for efficiently synthesizing ranking functions and invariants to prove termination automatically. 
    \item We present a new, general termination analysis framework, \toolname, based on bi-directional feedback, which can be instantiated with different templates for ranking functions and invariants and can handle complex programs with arbitrarily nested and sequential loops, disjunctive conditionals, and non-linear statements. 
    \item We show an additive upper bound on the number of iterations in terms of the depths of lattices of invariants and ranking functions while guaranteeing relative completeness of \toolname for a subset of possible templates (\S~\ref{sec:efficiency}). 
    \item We provide an implementation of \toolname (code in the supplementary material) and evaluate it on challenging benchmarks~\cite{svcomp,termcomp,nuterm} (\S~\ref{sec:evaluation}). Our method is already comparable or even better than existing state-of-the-art termination analysis tools—some of which go beyond ranking function synthesis or optimize SMT solvers~\cite{nuterm, aprove, dynamite, ultimate, verymax}. Notably, we prove the termination of benchmarks that none of these advanced tools can prove.
\end{itemize}

\section{Overview}
\label{sec:overview}
We aim to find a ranking function $\rf$ for a loop program $\looppgm = \texttt{while}(\beta) \texttt{ do }\pgm_1 \texttt{ od}$ in some program $\pgm$. We use $\reach$ to refer to the set of reachable program states at the beginning of the loop and $\bracs{\pgm_1}$ to refer to the relation that captures the semantics of the loop body $\pgm_1$. Note that $\pgm_1$ can itself have loops. A function $\rf$ is a valid ranking function for the loop program $\looppgm$ w.r.t $\reach, \bracs {\pgm_1}$ if it is bounded for all states in $\reach$ and reduces for all pairs of states in $\bracs {\pgm_1}$. In this work, we choose $\epsilon=0, \delta=1$. However, this can be generalized (Lemma~\ref{aplemma:rfgen} in Appendix~\ref{appendix:proofs}). 
\begin{enumerate}
    \item \textit{Bounded: } $\forall s \in \reach \cdot f(s) \geq \epsilon$ 
    \item \textit{Reducing: } $\forall (s, s') \in \pgm_1 \cdot f(s) - f(s') \geq \delta$
\end{enumerate}

\textbf{Algorithm State. }
At each step of our algorithm, we maintain a state represented by a tuple \(\langle t, \alooppgm \rangle\). The first component, \(t\), is a set of pairs of states at the start and end of the loop body, serving as an under-approximation of the reachable states \(\reach\) and the loop body transition relation, i.e., \(\dom(t) \subseteq \reach\) and \(t \subseteq \bracs{\pgm_1}\). The second component, \(\alooppgm\), stores the program along with invariants, $\inv$, for each loop, which provides an over-approximation of the reachable states and transition relations, i.e., \(\reach \subseteq \inv\) and \(\bracs{\pgm_1} \subseteq \inv \times (\inv \cap \llbracket \neg\beta \rrbracket)\). The set \(t\) can be initialized as an empty set or by executing the program on random inputs, while loop invariants are initialized to the trivial invariant, \(\stat\) (the set of all states, i.e., $\inv_{true})$. We define \(\alooppgm = \invar \; @ \; \texttt{while}(\beta) \texttt{ do } \anno_1 \texttt{ od}\) as the \textit{annotated program} corresponding to $\looppgm$, where each loop is annotated with its invariant, and the semantics of \(\looppgm\) can be over-approximated by the invariants in \(\alooppgm\), denoted by \(\llbracket \alooppgm \rrbracket\) (formally defined in \S~\ref{sec:annotatedpgm}).

\textbf{\toolname Framework. }
We introduce \toolname (Fig.~\ref{fig:overview}) as interleaved Generate and Refine phases that interact through bi-directional feedback. The Generate Phase generates a new candidate ranking function, while the Refine phase tries to refine the algorithm state to better approximate the set of reachable states $\reach$ and the loop body transition relation $\bracs{\pgm_1}$. The Refine phase iteratively generates and checks new candidate invariants, but is different from typical counterexample guided synthesis because instead of running until a target property is proved, the refine phase guides the search for the target property, the ranking function. For simplicity, we show \toolname for a single loop, so instead of $t$ and $\annopgm$, we can represent that state as $\langle r, \inv\rangle$, where $r$ is a set of states at the beginning of the loop and $\inv$ is the invariant of the loop. More concretely, after generating a candidate ranking function $\rf$ in the Generate phase, if $\rf$ is invalid, we get a counterexample, $p$. \toolname then uses $p$ to guide the Refine phase, where new candidate invariants are generated and checked for correctness. During this process, \toolname determines whether or not $p$ is reachable, and in the process also finds other reachable states, $q$, generated as counterexamples during the Refine phase. Finally, if the invariant cannot be refined further, i.e., if $p$ is reachable, both $p$ and $q$ are used to guide the next iteration of the Generate phase for generating candidate ranking functions. Otherwise, $\inv$ is refined to exclude $p$, and $q$ is still used to guide the ranking function search. Consequently, both the ranking function search and invariant search guide each other until a termination proof is found, facilitating a more efficient termination analysis. \toolname is parameterized by $\tempi$, the set of possible invariants for a loop, and $\tempr$, the set of possible ranking functions. So, when we say a counterexample, $s$, is reachable, it means that there is no valid invariant in $\tempi$ that excludes $s$.

\begin{figure}
    \centering
    \includegraphics[width=0.7\linewidth]{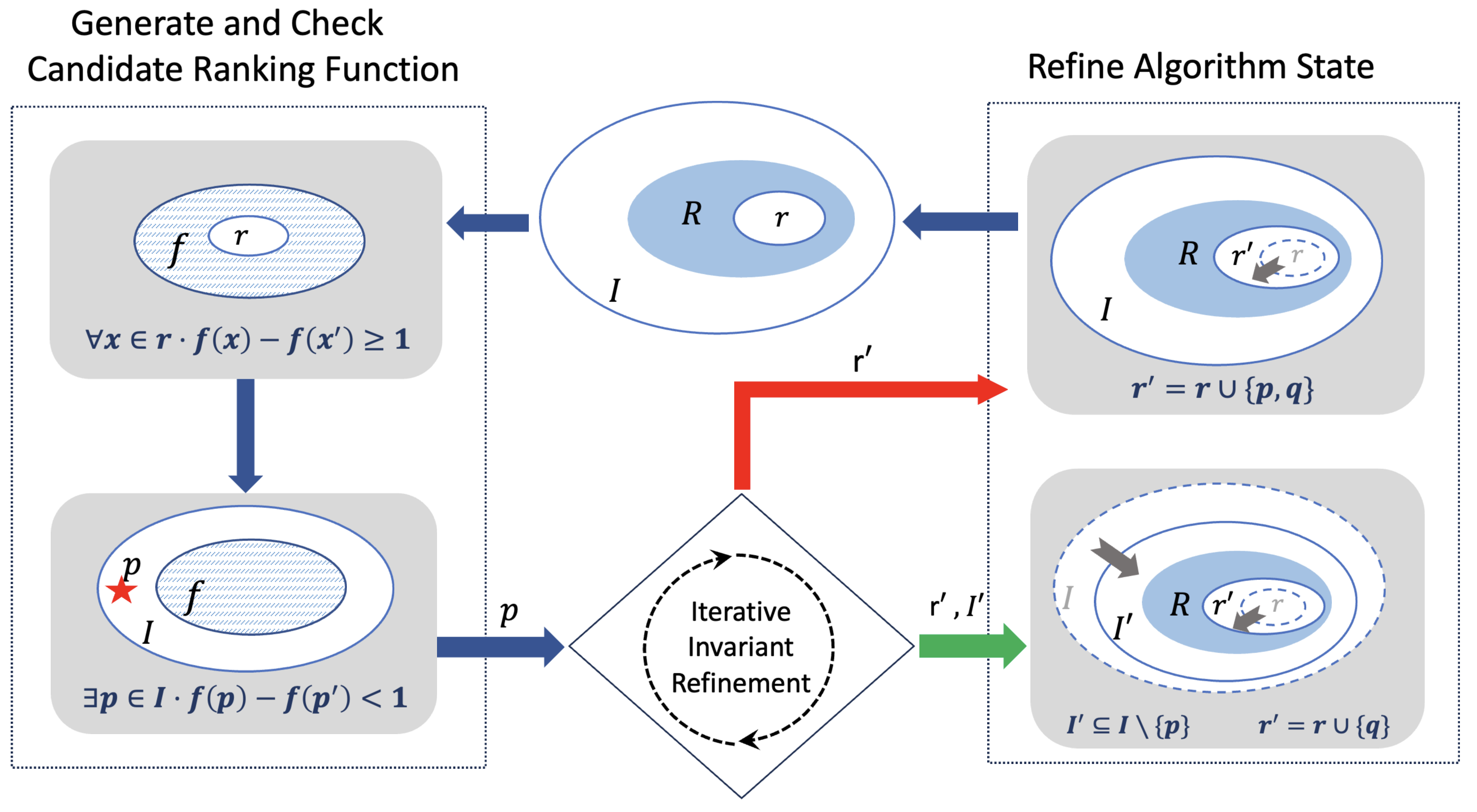}
    \caption{For a single loop, we maintain $r, \invar$ as canonical representations of $t, \alooppgm$ respectively. First, we use the set $r$ to generate a candidate ranking function and then use the invariant $\invar$ to check the validity. If we get a counterexample $p$, we try to refine the invariant $\invar$ to exclude $p$. If $p$ is not reachable, the invariant is refined successfully to $\invar'$ (green arrow). Otherwise (red arrow), we add $p$ to the set $r$. In both cases, the reachable states discovered while trying to refine $\invar$ (denoted by $q$), are also added to the set $r$.}
    \label{fig:overview}
\end{figure}

\subsection{\toolname Through an Example}
\label{sec:example}
\begin{figure}
    \begin{minipage}{0.35\textwidth}
        \centering 
        \begin{lstlisting}[numbers=left,basicstyle=\ttfamily\footnotesize,backgroundcolor=\color{backcolour}]
    y = n; 
    z = n+1;
    while (m+y >= 0 && m <= n){
        m = 2 * m + y;
        y = z;
        z = z + 1; 
    }\end{lstlisting} 
        \caption{Example terminating program}
        \label{overview:code3}
    \end{minipage}
    \hspace{0.4cm}
    \begin{minipage}{0.6\textwidth}
    \scriptsize 
        \begin{align*}
            \qquad&\rf(m', y', z', n') \\ 
            &= \max\big(1-(m'+y'), 0\big) + \max(n' - m' + 1, 0) \\ 
            &= \max\big(1-2m-y-z, 0\big) + \max(n-2m-y+1,0) \\
            &= \max\big(1-2m-y-\fbox{\textbf{(y+1)}}, 0\big) + \max\big(n-m+1-(m+y),0\big) \\ 
            &= \max\big(n-m+1-(m+y),0\big) \\
            &\leq \rf(m, y, z, n) - 1
        \end{align*}
        \caption{Value of Ranking function at the end of one iteration}
        \label{overview:rankingred2}
    \end{minipage}
\end{figure}
We show how \toolname works on a terminating single-loop program shown in Fig.~\ref{overview:code3}. A ranking function for this loop is $\max(n-m+1,0) + \max(1-m-y,0)$ because for all reachable states, either $n-m+1$ is reducing and $1-m-y$ stays constant or $n-m+1$ stays constant and $1-m-y$ reduces from 1 to 0. However, we need an invariant $\inv \equiv y + 1 = z$ to prove its validity (Fig.~\ref{overview:rankingred2}). We refer to the program in Fig.~\ref{overview:code3} as $\pgm$, and the loop-body (lines 4-6) as $\pgm_1$. 

As stated before, for a single loop, we can represent the state as $\langle r, \inv \rangle$. We initialize $r$ with randomly generated traces of the program. Here, $r = \{(1,2,3,4), (2,2,3,4)\}$, where a program state is a tuple ($m, y, z, n$). $\inv$ is initialized with the set of all states, i.e., $\invar = \true$.  \toolname can handle other types of ranking functions and invariants, but for this example, let $\tempr$, the set of candidate ranking functions, be functions of the form $\max(a_0 + \sum_j a_jx_j, 0) + \max(b_0 + \sum_j b_jx_j, 0)$ and $\tempi$, the set of possible invariants, be conjunctions of constraints of the form $d_0 + \sum_jd_jx_j \geq 0$, where $a_j,b_j,d_j \in \mathbb{Z}$.  We find a valid ranking function and the corresponding invariant using separate queries while exchanging key information between the two searches, enabling an efficient termination analysis.
\begin{table}
    \centering
    \scriptsize 
    \resizebox{0.9\textwidth}{!}{
    \begin{tabular}{@{}lrrrr@{}}
    \hline 
         &$r$ & $\invar$ & $\rf$ & $cex$  \\
         \hline
         1&$\{(1,2,3,4),(2,2,3,4)\}$ & $\true$ & $\max(n-m,0)$ & $(1,-1,2,4)$ \\
         \hline
         2&$\{(1,2,3,4),(2,2,3,4)\}$ & \textcolor{blue}{$y+1 \geq z$} & $\max(n-m,0)$ & $(1,-1,-1,1)$ \\
         \hline 
         \multirow{2}{*}{3}&$\{(1,2,3,4),(2,2,3,4),$ & \multirow{2}{*}{$\textcolor{blue}{(y+1\geq z) \wedge (y+1\leq z)}$} & \multirow{2}{*}{$\textcolor{blue}{\max(n-m+1,0)}$} & \multirow{2}{*}{$(1,-1,0,4)$} \\  
         &$\textcolor{blue}{(1,1,2,1)}\}$ &  &  &  \\  
         \hline
         \multirow{2}{*}{4}&$\{(1,2,3,4), (2,2,3,4), $ & \multirow{2}{*}{$(y\geq z-1) \wedge (y\leq z-1)$} & $\textcolor{blue}{\max(n-m+1,0) + }$ & \multirow{2}{*}{-}   \\
         &$(1,1,2,1),\textcolor{blue}{(1,-1,0,4)}\}$ & & $\textcolor{blue}{\max(1-m-y, 0)}$ &   \\
         \hline 
    \end{tabular}
    }
    \caption{Steps taken by \toolname to prove the termination for the program in Fig.~\ref{overview:code3}}
    \label{tab:overview1}
\end{table}
The steps for our running example are explained below and also displayed in Table~\ref{tab:overview1}.
\myparagraph{Iteration 1} First, in the \textbf{Generate Phase}, we find a candidate ranking function that reduces for all states in $r$. Specifically, we find a satisfying assignment for the coefficients $a_j, b_j$ in the following query, where $f \in \tempr$.
We do not have to check the bounded condition because all of the functions in $\tempr$ are bounded by 0.
        \begin{gather*}
        \forall (m,y,z,n) \in r \cdot \big( f(m,y,z,n) - f(\bracs{\pgm_1} (m,y,z,n))\geq 1 \big)
    \end{gather*}
We solve this query using Z3~\cite{z3} and obtain $f(m,y,z,n) = \max(n-m+1,0)$. Now, we check the candidate function's validity w.r.t the current invariant $\invar = \true$. This can be done by checking the satisfiability of the following formula. Here, the condition $(m+y\geq 0) \wedge (m\leq n)$ in the left side of the implication comes from the loop guard.
\begin{gather*}
    \forall (m,y,z,n) \cdot \big((m+y\geq 0) \wedge (m\leq n)\big) \implies  \big( f(m,y,z,n) - f(\bracs{\pgm_1} (m,y,z,n))\geq 1 \big)
\end{gather*}
Solving this gives a counterexample $p_1 = (1,-1,2,4)$. Next, the algorithm state is refined in the \textbf{Refine Phase}. There are two possibilities depending on the reachability of $p_1$. Either $p_1$ is a reachable state and the ranking function is invalid, or $p_1$ is not reachable. We first try to strengthen $\inv$ to $\inv'$ s.t. $p_1 \notin \inv'$. This is done in 2 sub-phases:

\textbf{Generate Invariant Sub-Phase} We generate a candidate invariant $ \inv''$ that includes all states $s \in r$ and does not include the state $p_1$. Here, $\rf$ tries to guide the refinement of an invariant that can prove its validity. This is done by solving the following query for the coefficients $d_j$ to get $\inv'' \equiv d_0 + d_1m + d_2y + d_3z + d_4n \geq 0$ and then taking a conjunction with $\inv$, i.e., $\inv' = \inv'' \wedge \inv$.
\begin{gather*}
    (1, -1, 2, 4) \notin \invar'' \wedge \big( \forall (m,y,z,n) \in r \cdot (m,y,z,n) \in \invar'' \big) 
\end{gather*}
Solving the above query, we get $\inv'' \equiv y-z+1 \geq 0$. We now proceed to check the validity of $\inv'$.

\textbf{Check Invariant Sub-Phase}
The validity of the invariant $\invar'$ is checked by solving for a counterexample for the following query. Here, $\ini$ is the set of states before the loop at line 3 executes.
\begin{gather*}
    \big(\forall s \in \init, s \in \invar'\big) \wedge \big(\forall s \in \invar' \implies \bracs{\pgm_1}s \in \invar'\big), \\ \text{ where }\ini = \{(m,y,z,n) \ | \ (y=n) \wedge (z=n+1)\}
\end{gather*}
We do not get a counterexample, so $y-z+1 \geq 0$ is a valid invariant. 
We update the algorithm state by changing $\inv$ to $\inv'$. Since we store $\inv'$ in the algorithm state, future iterations of the invariant generation phase in our algorithm will start with $\inv'$ instead of $\true$ when trying to determine the reachability of another counterexample. This is unlike algorithms~\cite{dynamite, terminator, bits} that rely on external solvers to find invariants, and so do not have information about the previous iterations.

\myparagraph{Iteration 2} 
As $r$ is unchanged, without generating a new ranking function, we check the validity of the ranking function w.r.t the new invariant.
We get a new counterexample $p_2 = (1,-1,-1,1)$. 

\textbf{Generate Invariant Sub-Phase} We try to strengthen our invariant $y + 1 \geq z$ to exclude $p_2$ by generating a candidate invariants guided by $p_2$, as well as $r$, thus generating the candidate $m-n-1 \ge 0$. 

\textbf{Check Invariant Sub-Phase} When checking the validity of the new invariant, we find the counterexample $q_2^1 = (1,1,2,1)$.  
Since $q_2^1 \in \init$, it is reachable. So, we can add it to $r$ to improve the under-approximation of the reachable states. $q_2^1$ has a property that the other states in $r$ do not, i.e., $m=n$. This guides the generation of candidate ranking functions to reduce on states where $m=n$. Since the new invariant is invalid, we go back to the Generate Invariant Phase with an additional constraint that all candidate invariants should include $q_2^1$. After further iterations of finding and checking candidate invariants, we get the invariant $y + 1 \geq z \wedge y + 1 \leq z$, which excludes $p_2$.

\myparagraph{Iteration 3} 
The set $r$ is changed from the invariant search, so we generate a new candidate ranking function. Guided by the state $q^1_2$, we get the ranking function $\max(n-m+1,0)$. We check the validity of the ranking function w.r.t. the new invariant $y + 1 \geq z \wedge y + 1 \leq z$ to get a new counterexample $p_3 = (1,-1,0,4)$. When refining the algorithm state, we are now unable to find a valid invariant that excludes $p_3$, and thus we can conclude that $p_3$ cannot be excluded from the set of reachable states using any invariant from our template. So, we add it to $r$, refining the algorithm state.

\myparagraph{Iteration 4} As the set $r$ is updated, we generate a new candidate ranking function $\rf = \max(n-m+1,0) + \max(1-m-y, 0)$ that reduces on all states in the updated $r$. We then check its validity w.r.t the current invariant $y + 1 \geq z \wedge y + 1 \leq z$. The validity check succeeds, and $\rf = \max(n-m+1,0) + \max(1-m-y, 0)$ is returned as a valid ranking function for the program.

\myparagraph{Bi-directional Feedback vs. Existing Methods} 
In iterations 1 and 2, the counterexamples found when checking the validity of candidate ranking functions guide the search for stronger, useful invariants. In iterations 2 and 3, the counterexamples found when checking the validity of candidate invariants guide the search for a valid ranking function. The bi-directional feedback makes \toolname an efficient termination analysis algorithm. Existing techniques that search for the invariant and ranking function independently~\cite{dynamite,nuterm} may waste time finding valid but irrelevant invariants such as $y \leq z$, $y \geq n$, $z \geq n + 1$ that do not assist in proving the ranking function’s validity. 

Techniques that use one direction of feedback, by providing the current ranking function counterexample in the form of a trace to an external safety checker~\cite{bits, terminator, cooknew} do not use previously found reachable ranking function counterexamples or the previous state of the invariant search to make each call to the external safety checker more efficient. Unlike these techniques, our invariant search uses $(p,p')$, $t$, and the previously found invariants to efficiently navigate the search space. Further, to the best of our knowledge, no existing technique uses intermediate counterexamples found during the invariant search, such as $q^1_2$, to guide the ranking function search. 

Lastly, some techniques completely combine the ranking function and invariant searches~\cite{verymax, chc,popl23}, but this requires navigating a much larger search space (the product of the individual spaces), whereas \toolname only needs to explore the sum of these spaces. In \S~\ref{sec:efficiency}, we show that our approach has a significantly smaller upper bound on the number of iterations compared to the combined search. In \S~\ref{sec:evaluation}, we demonstrate that this bi-directional feedback approach enables us to prove the termination of more benchmarks in a shorter amount of time than strategies that fully combine the ranking function and invariant searches and strategies that use only one direction of feedback between the two searches. We further show that \toolname can prove more benchmarks terminating than
existing state-of-the-art termination analysis tools.

\subsection{Nested Loops}
\label{sec:nestedloops}
For nested loops, we have to store $t$, which contains pairs of states and serves as an under-approximation of the transition relation of the loop. Consider the program $\pgm$ in Fig.~\ref{fig:nest}. $\pgm_i$ is the inner loop (lines 3-7), $\beta_o$ and $\beta_i$ are the loop guards at lines 2 and 3 respectively. Given a program state $s$ before executing the inner loop body, we cannot compute the state $s'$ just before line 8 because $\bracs{\pgm_i}$ is also unknown. So, we use an invariant $\invar_i$ to over-approximate $\bracs{\pgm_i}$. Similarly, we use an invariant $\invar_o$ for the outer loop. Note that the initial set for $\invar_i$ depends on $\invar_o$. Conversely, $\invar_o$ depends on $\bracs{\pgm_i}$ which is approximated using $\invar_i$. So, the invariants corresponding to all the loops in a program depend on each other. Here, we represent our algorithm state as $\langle t, (\invar_o, \invar_i) \rangle$.

\begin{wrapfigure}{l}{0.45\linewidth}
\begin{lstlisting}[numbers=left,basicstyle=\ttfamily\footnotesize,backgroundcolor=\color{backcolour}]
  y = n; z = n+1;
  while (y+n+1 >= k*k+z && y == z+1){
      while (m+y >= 0 && m <= n){
          m = 2*m + y;
          y = z;
          z = z+1; 
      }
      y++; z++; n--;
  }\end{lstlisting}
\caption{Nested loop program}
\label{fig:nest}
\end{wrapfigure}
We will use the same templates $\tempr$ and $\tempi$ for both loops, but this is not a requirement for our algorithm. Both invariants are initialized as the set of all states, i.e., $\invar_o = \invar_i = \true$. Using this, we compute an initial set $\init_i$ for the inner program $\pgm_i$,  $\init_i = \invar_o \cap \bracs{\beta_o} = \{(m,y,z,n) | (y+n+1 \ge k*k+z) \wedge (y=z+1)\}$. The initial set $\init_o$ for the outer while loop is $\{(m,y,z,n) | (y=n) \wedge (z=n+1)\}$. We can start by searching for a ranking function for either the outer or inner loop first. We choose the inner loop. Since the body of the inner loop is the same as the loop body in the program in Fig.~\ref{overview:code3} and $\init_i$ is a subset of the $\init$ set we computed when analyzing the single loop case, we can go through the same iterations as above to get $\rf_i = \max(n-m+1,0) + \max(1-m-y, 0) $, $\invar_i \equiv y + 1 = z$, and $\invar_o = \true$. Next, \toolname uses similar steps as discussed in \S~\ref{sec:example} to find a ranking function for the outer loop. We first generate a candidate ranking function $\rf$ similar to the single loop case. The one thing that changes while validating $\rf$ is that because we have a loop in the body of the outer loop, we cannot symbolically execute to find the program state at the end of one iteration. Instead, we define fresh variables $x'' = (m'', y'', z'', n'', k'')$ to represent the state just after the inner loop ends.  We assume $x'' \in \invar_i \wedge \neg \beta_i$. We define the validation query as:
\begin{table}
    \centering
    \resizebox{\textwidth}{!}{  
    \begin{tabular}{@{}lrrrr@{}}
    \hline 
         &$t$ & $(\invar_o,\invar_i)$ & $\rf$ & $p, p'$  \\
         \hline
         \multirow{2}{*}{1}&$\{((2,2,3,4,1),(6,7,8,0,1)),$ & \multirow{2}{*}{$(\true, y + 1 = z)$} & \multirow{2}{*}{$\max(n-k,0)$} & \multirow{2}{*}{$(1,2,4,5,5), (7,2,3,5,5)$} \\  
         &$((2,2,3,5,1), (6,8,9,0,1))\}$ &  &  &  \\  
         \hline 
         \multirow{2}{*}{2}&$\{((2,2,3,4,1),(6,7,8,0,1)),$ & \multirow{2}{*}{($\textcolor{blue}{y+1\geq z}, y+1=z)$} & \multirow{2}{*}{$\max(n-k,0)$} & \multirow{2}{*}{$(1,2,2,5,5), (8,3,4,5,5)$} \\ 
         &$((2,2,3,5,1), (6,8,9,0,1))\}$ &  &  &  \\ 
         \hline 
         \multirow{2}{*}{3}&$\{((2,2,3,4,1),(6,7,8,0,1)),$ & \multirow{2}{*}{(
         $\textcolor{blue}{(y+1\geq z) \wedge (y+1\leq z)}$, y + 1 = z)} & \multirow{2}{*}{$\max(n-k,0)$} & \multirow{2}{*}{$(1,2,3,5,5), (7,5,6,5,5)$} \\  
         &$((2,2,3,5,1), (6,8,9,0,1))\}$ &  &  &  \\  
         \hline
         \multirow{3}{*}{4}&$\{((2,2,3,4,1),(6,7,8,0,1)),  $ & \multirow{3}{*}{$((y\geq z-1) \wedge (y\leq z-1), y+1=z)$} & \multirow{3}{*}{$\textcolor{blue}{\max(n-k+1,0)}$} & \multirow{3}{*}{-}   \\
         &$((2,2,3,5,1), (6,8,9,0,1)),$ & & &   \\
         &$\textcolor{blue}{((1,2,3,5,5),(7,5,6,5,5))}\}$ & & &   \\
         \hline 
    \end{tabular}
    }
    \caption{Steps taken by \toolname to prove the termination for the program in Fig.~\ref{fig:nest}}
    \vspace{-0.75cm}
    \label{tab:overview2}
\end{table}
\begin{gather*}
    \invar_o(m,y,z,n,k) \wedge \beta_o(m,y,z,n,k) \wedge \big(f(m,y,z,n,k) - f(m',y',z',n',k') < 1\big) \; \wedge \\
    \invar_i(x'') \; \wedge \; \neg \beta_i(x''), \quad \text{where} \quad  m' = m'', \ y' = y'' + 1, \ z' = z'' + 1, \ n' = n'' - 1, \ k' = k''  
\end{gather*}
\noindent
Then, we refine the state of the algorithm. From the above validation query, we get a counterexample of the form $(p, p'', p')$. Here $p$ and $p'$ are the states at the beginning and end of one iteration of the outer loop, and $p''$ is the state at the exit of the inner loop. So, the refine phase has two options: (a) either refine the outer invariant $\invar_o$ first to exclude $p$ or (b) refine the inner invariant $\invar_i$ to exclude $p''$. If at least one of the invariants is refined, we check the validity of the ranking function again. If it is not possible to refine either of the invariants, we add $(p, p')$ to t. Once we choose an invariant to refine, we iteratively go through the Generate Invariant Sub-Phase and the Check Invariant Sub-Phase. Again, when trying to refine the invariants, if we find any states that are conclusively reachable, such as a state in $\ini_o$ at the beginning of the outer while loop, we use this state to guide the ranking function search in the future. If the invariant we chose cannot be refined, we try to refine the other invariant. Table~\ref{tab:overview2} gives the steps our algorithm takes to find the ranking function for the outer loop. This approach can be generalized to nested loops with arbitrary depth as explained in \S ~\ref{sec:theory} and \S ~\ref{sec:implementation}.

\subsection{Theoretical Guarantees}
While \toolname can be soundly instantiated with various templates, in \S~\ref{sec:theory}, we establish both relative completeness and efficiency guarantees if the template for ranking functions consists of a finite set of possibilities and the template for invariants is closed under intersection and does not have an infinite descending chain (with respect to the partial order defined in \S~\ref{sec:theory}). We further prove that even with a countably infinite ranking function template, \toolname will be able to find a proof of termination if it exists within the templates. 

\toolname's bi-directional feedback improves efficiency by simultaneously guiding the search for ranking functions and invariants. This approach is more efficient than traditional methods, which search for these components independently. Using a meet semilattice structure, we show that the worst-case bound for \toolname is the sum of the depths of the ranking function and invariant lattices, a significant improvement over naive feedback methods, which would have a worst-case bound of the product of the size of the individual lattices. This improvement can be attributed to the synergy between ranking function and invariant synthesis. The worst-case runtime analysis, detailed in \S~\ref{sec:efficiency}, shows that \toolname maintains efficiency while providing theoretical guarantees that are not provided by state-of-the-art algorithms~\cite{nuterm, dynamite, ultimate, aprove, verymax}.
\section{Annotated Programs}
\label{sec:annotatedpgm}
We consider programs in a simple while-programming language whose BNF grammar is given below, where, $x$ is a program variable, $e$ is a program expression, and $\beta$ is a Boolean expression.  
\[
\begin{array}{rcl}
\pro & ::= & \texttt{skip}\ |\ x \leftarrow e\ |\ \pro\:\text{;}\:\pro\ |\ \texttt{if } \beta \texttt{ then } \pro \texttt{ else } \pro\ |\ \texttt{while } \beta \texttt{ do } \pro \texttt{ od} 
\end{array}
\]
We use $\pgm,\pgm_1,\ldots$ to denote programs, $\loopfree,\loopfree_1,\ldots$ to denote loop-free programs, and $\looppgm,\looppgm_1,\ldots$ to denote loops, i.e., programs of the form $\texttt{while } \beta \texttt{ do } \pgm \text{ od}$.
The set of program states of $\pgm$ will be denoted as $\stat_{\pgm}$. The semantics of a program $\pgm$ denoted $\bracs{\pgm}$, is a binary relation on $\stat$ that captures how a program state is transformed when $\pgm$ is executed.

As explained in \S~\ref{sec:nestedloops}, invariants are used to approximate the reachable states for loops, the initial set of states for nested and sequential loops, and the semantics of loops. So, we define the concept of an \textit{annotated programs}, where each loop is associated with an inductive invariant. We then show how to use annotated programs to define the validity of ranking functions and define a partial order and a meet operation over annotated programs.  Using these definitions, in \S~\ref{sec:efficiency}, we ensure that \toolname refines the algorithm state in each iteration of the algorithm and is complete even for arbitrarily nested loops. 

Let us fix a collection $\tempi \subseteq \powerset{\stat}$ of \emph{invariants}, where $\stat$ is the set of program states. For technical reasons that we will emphasize later, we will assume that $\tempi$ is closed under intersection. That is, for any $\inv_1,\inv_2 \in \tempi$, $\inv_1 \cap \inv_2 \in \tempi$. Using the invariants in $\tempi$, we define \textit{annotated programs} as programs defined by the BNF grammar below, where $\inv$ is a member of $\tempi$.
\[
\begin{array}{rcl}
\anno & ::= & \texttt{skip}\ |\ x \leftarrow e\ |\ \anno\:\text{;}\:\anno\ |\ \texttt{if } \beta \texttt{ then } \anno \texttt{ else } \anno\ |\ \inv\: @\: \texttt{while } \beta \texttt{ do } \anno \texttt{ od} 
\end{array}
\]

We use $\annopgm,\annopgm_1, \ldots$ to denote annotated programs. For an annotated program $\annopgm$, its \emph{underlying program} can obtained by removing the invariant annotations on the $\texttt{while}$-loops and we denote this as $\core{\annopgm}$. This is defined in the Appendix~\ref{app:defns} (Definition~\ref{def:core-pgm}). For a (unannotated) program $\pgm$, the collection of all annotated programs over $\pgm$ will be denoted by $\anno(\pgm)$, i.e., $\anno(\pgm) = \setpred{\annopgm }{\core{\annopgm} = \pgm}$. We use $\alooppgm$ to denote an annotated subprogram of $\annopgm$ that corresponds to the loop $\looppgm$ in $\annopgm$. An annotated program, $\annopgm$, is \emph{correct} with respect to initial program states, $\ini$, if the annotations in $\annopgm$ are inductive invariants. This is denoted $\ini \models \annopgm$, and defined formally below. 

\begin{equation*}
    \begin{aligned}
        \ini \models \loopfree \ &\triangleq \ \true \\
        \ini \models \texttt{if } \beta \texttt{ then } \annopgm_1 \texttt{ else }\annopgm_2 \ &\triangleq \ (\ini \cap \bracs{\beta} \models \annopgm_1) \wedge (\ini \cap \bracs{\neg\beta} \models \annopgm_2) \\
        \ini \models \annopgm_1\: ;\: \annopgm_2 \ &\triangleq \ (\ini \models \annopgm_1) \wedge (\ini\relcomp\bracs{\annopgm_1} \models \annopgm_2) \\
        \ini \models \inv\: @\: \texttt{while } \beta \texttt{ do }\annopgm_1 \texttt{ od} \ &\triangleq \ (\ini \subseteq \invar) \wedge ((\inv \cap \bracs{\beta})\relcomp\bracs{\annopgm_1} \subseteq \inv \times \inv) \wedge (\inv \cap\bracs{\beta} \models \annopgm_1) 
    \end{aligned}
\end{equation*}

Next, to define the semantics of an annotated program $\annopgm$ that captures the terminating computations of $\annopgm$, we assume a standard semantics of program expressions and Boolean expressions. We take $\bracs{\loopfree}$ to be the transition semantics of a loop-free program $\loopfree$ and $\bracs{\beta} \subseteq \stat$ to be the semantics of a Boolean expression $\beta$. The semantics of $\annopgm$ are defined inductively as follows.

\begin{equation*}
\label{eq:annosemantics}
    \begin{aligned}
        \bracs{\loopfree} \ &\triangleq \ \bracs{\loopfree} \\
        \bracs{\texttt{if } \beta \texttt{ then } \annopgm_1 \texttt{ else }\annopgm_2} \ &\triangleq \ \id{\bracs{\beta}} \relcomp \bracs{\annopgm_1} \cup \id{\bracs{\neg\beta}} \relcomp \bracs{\annopgm_2}\\
        \bracs{\annopgm_1\: ;\: \annopgm_2} \ &\triangleq \ \bracs{\annopgm_1} \relcomp\bracs{\annopgm_2} \\ 
        \bracs{\inv\: @\: \texttt{while } \beta \texttt{ do }\annopgm_1 \texttt{ od}} \ &\triangleq \ \inv \times (\inv \cap \bracs{\neg \beta}) 
    \end{aligned}
\end{equation*}

The difference between the semantics of an annotated program and that of a regular (unannotated) program lies in the case of loops: $\bracs{\inv\: @\: \texttt{while } \beta \texttt{ do }\annopgm_1 \texttt{ od}}$. We can over-approximate the semantics of the loop using the associated invariant $\invar$. The reachable states, $\reach$, at the entry of the while-loop, must satisfy the invariant. So, $\reach \subseteq \invar$. The states that satisfy $\beta$ enter the loop, while the states that do not satisfy $\beta$ exit. We can use the semantics of annotated programs to define when a ranking function proves the termination of an annotated $\texttt{while}$-loop, given an initial set of states.

\begin{definition}[Correctness of a Ranking Function]
\label{def:correct-rank}
    Let $\ini \subseteq \stat$ be a set of initial program states and $\rf: \stat \to \mathbb{R}$ be a candidate ranking function. We say that $\rf$ \emph{establishes} the termination of annotated program $\alooppgm = \inv\: @\: \texttt{while } \beta \texttt{ do } \annopgm \texttt{ od}$ from initial states $\ini$, i.e., $\ini \models \alooppgm, f$ if

    \begin{itemize}
        \item The annotations of $\alooppgm$ are correct with respect to $\ini$, i.e., $\ini \models \alooppgm$, and
        \item For all $(s,s') \in \id{\inv \cap \bracs{\beta}}\relcomp\bracs{\anno}$, 
        $\rf(s) \geq 0$ and $\rf(s) - \rf(s') \geq 1$
    \end{itemize}
\end{definition}

Note that in Definition~\ref{def:correct-rank}, we define the ranking function $\rf$ to be $\geq0$ and decrease by at least $1$ in each loop iteration. However, this can be generalized to general $\epsilon$ and $\delta$ (Lemma~\ref{aplemma:rfgen} in Appendix~\ref{appendix:proofs}).

Lastly, to reason about the completeness of \toolname when analyzing arbitrarily nested loops, we introduce a partial order ($\less$) and a meet operation ($\meet$) for the set of annotated programs ($\annopgm$). For annotated programs $\annopgm_1,\annopgm_2 \in \anno(\pgm)$, we say $\annopgm_2 \less \annopgm_1$ if for each loop in $\pgm$, the invariant annotating this loop in $\annopgm_2$ is contained in the corresponding invariant in $\annopgm_1$; the formal definition is deferred to Appendix~\ref{app:defns}. Next, $\annopgm_1 \meet \annopgm_2$ denotes the annotated program where each loop is annotated by the intersection of the invariants annotating the loop in $\annopgm_1$ and $\annopgm_2$. With these definitions, we establish the following proposition. The proof is in Appendix~\ref{appendix:proofs}:

\begin{proposition}
\label{prop:correct-refine}
    Let $\annopgm_1,\annopgm_2 \in \anno(\pgm)$ be two annotated programs. 
    \begin{enumerate}
        \item If $\ini \models \annopgm_2$ and $\ini \models \annopgm_1$, then $\ini \models \annopgm_1 \meet \annopgm_2$.
        \item If $\ini \models \annopgm_1,f$, $\ini \models \annopgm_2$, and $\annopgm_2 \less \annopgm_1$, then $\ini \models \annopgm_2,f$.
    \end{enumerate}
\end{proposition}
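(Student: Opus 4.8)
The plan is to derive both parts from two monotonicity facts about annotated programs, each proved by structural induction on the underlying program $\pgm$, and then to assemble them. The two facts are: \textbf{(M1)} if $\annopgm_2 \less \annopgm_1$ then $\bracs{\annopgm_2} \subseteq \bracs{\annopgm_1}$; and \textbf{(M2)} if $\ini' \subseteq \ini$ and $\ini \models \annopgm$ then $\ini' \models \annopgm$. For (M1), the loop-free cases are immediate (the semantics of sequences and conditionals are monotone in the semantics of their parts, which shrink by the inductive hypothesis), and the loop case reduces to $\inv_2 \times (\inv_2 \cap \bracs{\neg\beta}) \subseteq \inv_1 \times (\inv_1 \cap \bracs{\neg\beta})$, which holds because $\annopgm_2 \less \annopgm_1$ forces $\inv_2 \subseteq \inv_1$ and the loop semantics does not depend on the body annotations. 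For (M2), the only clause of $\models$ mentioning the initial set directly is the loop clause $\ini \subseteq \inv$, which survives shrinking; the sequence case additionally uses $\ini' \relcomp \bracs{\annopgm_1} \subseteq \ini \relcomp \bracs{\annopgm_1}$ before applying the inductive hypothesis to the second component. I would also record the immediate fact that $\annopgm_1 \meet \annopgm_2 \less \annopgm_1$ and $\annopgm_1 \meet \annopgm_2 \less \annopgm_2$, since the meet intersects invariants loop-by-loop.

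For Part~(1) I would prove the universally quantified statement ``for all $\ini$, if $\ini \models \annopgm_1$ and $\ini \models \annopgm_2$ then $\ini \models \annopgm_1 \meet \annopgm_2$'' by structural induction on $\pgm$, keeping $\ini$ quantified so the hypothesis is usable at every recursive call. The conditional case applies the inductive hypothesis to $\ini \cap \bracs{\beta}$ and $\ini \cap \bracs{\neg\beta}$. In the loop case, writing $\annopgm_j = \inv_j\,@\,\texttt{while}\ \beta\ \texttt{do}\ \annopgm_j'\ \texttt{od}$, the containment $\ini \subseteq \inv_1 \cap \inv_2$ is immediate; for the inductiveness obligation, I use (M1) to contain the meet body's semantics inside each $\bracs{\annopgm_j'}$, so a body transition out of $(\inv_1 \cap \inv_2) \cap \bracs{\beta}$ lands in $\inv_1$ by $\annopgm_1$'s inductiveness and in $\inv_2$ by $\annopgm_2$'s, hence in $\inv_1 \cap \inv_2$; and the remaining obligation $(\inv_1 \cap \inv_2) \cap \bracs{\beta} \models \annopgm_1' \meet \annopgm_2'$ follows by (M2), restricting to the smaller initial set, then the inductive hypothesis.

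The step I expect to be the main obstacle is the \emph{sequence} case $\annopgm_1;\annopgm_2$ of Part~(1), because correctness of the second component is checked at the image $\ini \relcomp \bracs{\cdot}$ of the first, and this image changes when invariants are intersected. The resolution is: by the recorded fact $\annopgm_1^{(1)} \meet \annopgm_2^{(1)} \less \annopgm_j^{(1)}$ together with (M1), the image $\ini \relcomp \bracs{\annopgm_1^{(1)} \meet \annopgm_2^{(1)}}$ is contained in both $\ini \relcomp \bracs{\annopgm_1^{(1)}}$ and $\ini \relcomp \bracs{\annopgm_2^{(1)}}$; then (M2) transports $\ini \relcomp \bracs{\annopgm_1^{(1)}} \models \annopgm_1^{(2)}$ and its $\annopgm_2$ counterpart down to this smaller image, and the inductive hypothesis on the second component closes the case.

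Part~(2) is then a short corollary of (M1). Writing $\annopgm_j = \inv_j\,@\,\texttt{while}\ \beta\ \texttt{do}\ \annopgm_j'\ \texttt{od}$, the hypothesis $\annopgm_2 \less \annopgm_1$ gives $\inv_2 \subseteq \inv_1$ and $\annopgm_2' \less \annopgm_1'$, so $\id{\inv_2 \cap \bracs{\beta}} \subseteq \id{\inv_1 \cap \bracs{\beta}}$ and, by (M1), $\bracs{\annopgm_2'} \subseteq \bracs{\annopgm_1'}$; monotonicity of relational composition yields $\id{\inv_2 \cap \bracs{\beta}} \relcomp \bracs{\annopgm_2'} \subseteq \id{\inv_1 \cap \bracs{\beta}} \relcomp \bracs{\annopgm_1'}$. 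Hence every pair for which the boundedness and decrease conditions must be verified under $\annopgm_2$ is already among the pairs checked under $\annopgm_1$, so these conditions hold by the hypothesis $\ini \models \annopgm_1, f$; since $\ini \models \annopgm_2$ is assumed, we conclude $\ini \models \annopgm_2, f$.
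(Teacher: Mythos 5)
Your proof is correct and takes essentially the same route as the paper: your (M1) and (M2) are precisely the paper's Lemma~\ref{aplemma:refinetrans} and Lemma~\ref{aplemma:less}, your Part~(1) induction mirrors Lemma~\ref{aplemma:conj} case by case (including restricting via (M2) before applying the inductive hypothesis in the loop case), and your Part~(2) is the argument of Lemma~\ref{aplemma:name}. The one mild difference is that in the sequence case of Part~(1) the paper invokes the exact equality $\bracs{\annopgm_1 \meet \annopgm_2} = \bracs{\annopgm_1} \cap \bracs{\annopgm_2}$ (Lemma~\ref{aplemma:annos}), whereas you derive the single containment actually needed from $\annopgm_1 \meet \annopgm_2 \less \annopgm_j$ together with (M1), which lets you dispense with that lemma entirely.
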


\section{Syndicate Framework}
\label{sec:theory}
The \toolname framework models bi-directional synergy, enabling the ranking function and invariant search to guide each other efficiently. The framework can be instantiated with \textit{templates} that define the set of possible ranking functions ($\tempr$) and invariants ($\tempi$). Each loop in a program can also be instantiated with a different template. We assume that the set of all possible program states ($\invar_{\true}$) is in $\tempi$ and that $\tempi$ is closed under intersection, i.e., if $\inv_1,\inv_2 \in \tempi$ then $\inv_1 \cap \inv_2 \in \tempi$. Consider a terminating program $\pgm$. The goal is to synthesize ranking functions from $\tempr$ and invariants from $\tempi$ that demonstrate the termination of every loop inside $\pgm$. To achieve this, we propose an algorithm that is parameterized by four sub-procedures: $\gc$, $\checkr$, $\gcounter$, and $\refine$. These sub-procedures can be instantiated or implemented using any method based on specific use cases, as long as they adhere to the semantics described in \S~\ref{sec:main-algorithm}. In \S~\ref{sec:implementation}, we present implementations for these sub-procedures that satisfy these semantics. In ~\S~\ref{sec:efficiency}, we also provide this algorithm's relative completeness and efficiency guarantees. 

\subsection{Algorithm State}
Consider a scenario where \toolname attempts to prove the termination of a loop defined as $\looppgm = \texttt{while } \beta \texttt{ do } \pgm_1 \texttt{ od}$ inside program $\pgm$ for a given set of initial states $\ini$. \toolname maintains an algorithm state that is used to guess a candidate ranking function and also verify its correctness. It then iteratively \textit{refines} the algorithm state until convergence. The algorithm state has two components: (1) an annotated program $\annopgm \in \anno(\pgm)$ such that $\ini \models \annopgm$; this can be easily initialized by labeling each loop in $\pgm$ with $\stat$, the set of all program states, (2) a finite set $\rp \subseteq \stat \times \stat$ such that for any correct annotation $\annopgm'$ of $\pgm$, $\rp \subseteq \id{\inv'\cap\bracs{\beta}}\relcomp\bracs{\annopgm'[\pgm_1]}$, where $\inv'$ is the annotation of $\looppgm$ in $\annopgm'$. The set $\rp$ can be initialized by executing $\pgm$ on randomly selected initial states from $\ini$ and collecting the execution traces. Since the pairs in $t$ are in the semantics of the loop body $\pgm_1$, it can be used to guess a candidate ranking function from the set $\func(\rp, \tempr) \triangleq \setpred{f \in \tempr}{\forall (s, s') \in \rp, f(s) \geq 0 \wedge f(s) - f(s') \geq 1}$, that contains all the functions that reduce and are bounded on the pairs in $t$. Next, the annotated program can be used to check the validity of the guessed candidate ranking functions.

\myparagraph{Refinement of the Algorithm State}
Checking the correctness of $\rf$ may lead to at least one of the following cases. (1) The identification of additional pairs of states $(s_i, s'_i) \notin t$ over which every valid ranking function must decrease. We add these newly discovered pairs to the algorithm state, i.e., $t' \leftarrow t \cup \{(s_i, s'_i)\}$. As a consequence, $\func(\rp', \tempr) \subset \func(\rp, \tempr)$. (2) The \textit{refinement} of the annotated program to $\annopgm'$ such that $\annopgm' \strctless \annopgm$. Thus, the algorithm progresses either by eliminating additional potential ranking functions or constructing a more precise model for the semantics of the program by a progressive refinement of the annotations of $\pgm$.

\begin{wrapfigure}{r}{0.3\linewidth}
    \centering
    \begin{tikzpicture}
    \node[ellipse, draw, minimum width=2cm, minimum height=3cm] (oval) {};
    
    \node[circle, fill=red!70!black, inner sep=2pt, label=above left:$\annopgm$] at (-0.4,0.5) {};
    \node[circle, fill=green!70!black, inner sep=2pt, label=below:$\annopgm'$] at (-0.1, -0.5) {};
    \node[circle, fill=green!70!black, inner sep=2pt, label=right:$\annopgm^*$] at (0.25, 0.8) {};
    \draw[->, decorate, decoration={snake, amplitude=0.5mm, segment length=5mm}] (-0.5,0.5) -- (-0.15, -0.4);
\end{tikzpicture}
    \caption{Assume that $\anno^*$ can prove the validity of $f$. Since $\anno' \less \anno$, $\anno$ can be refined to $\anno'$, which can also prove the validity of $f$.}
    \label{fig:lemma}
\end{wrapfigure}
Note that in our algorithm, the new annotation $\annopgm'$ is always $\less \annopgm$. An interesting and important observation at this point, described formally in Lemma~\ref{lem:meet}, is that if there exists an annotated program $\annopgm^*$ which can prove the validity of a ranking function $\rf$, then there also exists an annotated program $\annopgm' \less \annopgm$, which can prove $\rf$ valid (Fig.~\ref{fig:lemma}). This implies that when we refine the program's annotations, although we narrow down the space of all possible annotations, this does not eliminate the possibility of verifying the validity of any correct ranking function. This fact allows \toolname to descend the semi-lattice using any path and still make progress towards finding an annotated program with invariants that prove the validity of a ranking function. This is crucial in \toolname's correctness and efficiency. 


\begin{lemma}
\label{lem:meet}
Let $\annopgm,\annopgm^* \in \anno(\pgm)$ and $\rf$ be a ranking function. If $\ini \models \annopgm$ and $\ini \models \annopgm^*, \rf$ then $\exists \anno' \less \anno$ and $\ini \models \anno', \rf$.
\end{lemma}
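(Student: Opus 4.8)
The plan is to produce an explicit witness for $\anno'$, namely the meet $\anno' \triangleq \annopgm \meet \annopgm^*$, and then discharge the two obligations $\anno' \less \annopgm$ and $\ini \models \anno', \rf$ by invoking Proposition~\ref{prop:correct-refine}. First I would check that $\anno'$ is genuinely an element of $\anno(\pgm)$: annotating each loop of $\pgm$ by the intersection of its invariant in $\annopgm$ with its invariant in $\annopgm^*$ again yields invariants drawn from $\tempi$, precisely because $\tempi$ is assumed closed under intersection, and it leaves the underlying program untouched, so $\core{\anno'} = \pgm$. This closure assumption is the only place where it is essential.

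Next I would record the two order relations. By definition of $\meet$, each loop of $\anno'$ carries the invariant $\inv \cap \inv^*$, where $\inv$ and $\inv^*$ are the corresponding invariants in $\annopgm$ and $\annopgm^*$. Since $\inv \cap \inv^* \subseteq \inv$ and $\inv \cap \inv^* \subseteq \inv^*$ loopwise, the definition of $\less$ immediately yields both $\anno' \less \annopgm$ and $\anno' \less \annopgm^*$. The first of these is already one half of the conclusion, so it remains only to prove $\ini \models \anno', \rf$.

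For correctness of the annotations, I would extract $\ini \models \annopgm^*$ from the hypothesis $\ini \models \annopgm^*, \rf$ (correctness is the first conjunct of Definition~\ref{def:correct-rank}), combine it with the given $\ini \models \annopgm$, and apply Proposition~\ref{prop:correct-refine}(1) to obtain $\ini \models \annopgm \meet \annopgm^* = \anno'$. To transport the ranking condition down to the refined program, I would then apply Proposition~\ref{prop:correct-refine}(2) with $\annopgm_1 = \annopgm^*$ and $\annopgm_2 = \anno'$: the hypothesis $\ini \models \annopgm^*, \rf$, the just-established $\ini \models \anno'$, and the order relation $\anno' \less \annopgm^*$ together give $\ini \models \anno', \rf$. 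Together with $\anno' \less \annopgm$, this is exactly what the lemma asserts.

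I do not anticipate a genuine obstacle, since the statement is essentially a corollary of Proposition~\ref{prop:correct-refine}, which I may assume. The one point requiring care is the orientation of the partial order: the ranking property must be transported along $\anno' \less \annopgm^*$ rather than $\anno' \less \annopgm$, because shrinking a valid over-approximation only tightens the loop-body relation $\id{\inv \cap \bracs{\beta}}\relcomp\bracs{\anno}$ over which $\rf$ is required to be bounded and decreasing, whereas no such guarantee is available for $\annopgm$, whose annotations are merely correct but need not support $\rf$. The substantive reasoning—that intersecting annotations preserves both the boundedness and the descent conditions—lives inside Proposition~\ref{prop:correct-refine}(2), so the present proof reduces to selecting the meet as witness and citing that result with the correct arguments.
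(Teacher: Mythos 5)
Your proposal is correct and follows exactly the paper's own proof: the paper likewise takes $\anno' = \annopgm \meet \annopgm^*$, observes $\anno' \less \annopgm$, and cites Proposition~\ref{prop:correct-refine} (implicitly part (1) for $\ini \models \anno'$ and part (2) along $\anno' \less \annopgm^*$) to conclude $\ini \models \anno', \rf$. You have merely made explicit the details the paper leaves tacit—closure of $\tempi$ under intersection ensuring $\anno' \in \anno(\pgm)$, and the correct orientation of the order when transporting the ranking condition—both of which are accurate.
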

\begin{proof}
    Let $\anno' = \anno \meet \anno^*$. Clearly, $\annopgm' \less \anno$. It follows from Proposition~\ref{prop:correct-refine} that $\ini \models \anno', f$. 
\end{proof}

\subsection{Parameterized Algorithm}
\label{sec:main-algorithm}
\begin{wrapfigure}{r}{0.53\textwidth}
    \begin{minipage}{\linewidth}
        \begin{algorithm}[H]
            \caption{\toolname}
            \label{general-algorithm}
            \begin{algorithmic}[1]
                \Procedure{\texttt{find\_ranking}}{$\rp, \annopgm$}
                \State $\texttt{refined} \gets \false$
                    \While{\true }
                        \If{$\neg\texttt{refined}$}
                            \State $\texttt{gen},f \gets \gc(\rp, \tempr)$
                            \If{$\neg \texttt{gen}$}
                                \State \Return $\false$
                            \EndIf
                        \EndIf
                        \If{$\checkr(f, \anno)$} 
                            \State \Return $\true$, $f$
                        \EndIf
                        \State $(p,p') \gets \gcounter(f, \annopgm)$
                        \State $\texttt{refined}, \; \annopgm, t \gets \refine(\annopgm, (p,p'), t, \tempi)$
                        \If{$\neg \texttt{refined}$}
                            \State $\rp \gets \rp \cup \{(p, p')\}$
                        \EndIf
                    \EndWhile 
                \EndProcedure
            \end{algorithmic}
        \end{algorithm}
    \end{minipage}
\end{wrapfigure}

\toolname (Algorithm~\ref{general-algorithm}) iteratively generates a candidate ranking function and then uses it to refine the algorithm state if the candidate ranking function is not valid. When analyzing a program with multiple loops, Algorithm~\ref{general-algorithm} can be used on each loop in any order to prove the termination of every loop in the program. The algorithm is parameterized by four main sub-procedures: $\gc$, $\checkr$, $\gcounter$ and $\refine$ and the templates for ranking functions ($\tempr$) and invariants ($\tempi$). The function $\gc$ (line 5) should return a ranking function from the set $\func(\rp, \tempr)$ if one exists and should return \false if $\func(\rp, \tempr)$ is empty. The function $\checkr$ (line 10) should return \true if $\init \models \anno, f$ and \false otherwise. If $\checkr$ fails, the function $\gcounter$ should produce a \emph{counterexample} (line 13). The counterexample should be a pair of states $(p,p')$ such that (i) $(p,p') \in \bracs{\annopgm[\pgm_1]}$, (ii) $p \in \inv \cap \bracs{\beta}$, where $\inv$ is the invariant labeling $\looppgm$ in $\annopgm$, and (iii) either $\rf(p) < 0$ or $\rf(p) - \rf(p') < 1$. 

If $\rf$ is not a valid ranking function for $\looppgm$ then the counterexample $(p,p')$ is used to refine the program state (line 14). When $\pgm_1$ is loop-free, the counterexample can be eliminated in one of two ways: (1) The invariants in $\annopgm$ are refined so that $p$ is not an entry state of loop $\looppgm$ or $(p,p')$ is not in the semantics of the body $\pgm_1$, or (2) a new ranking function is synthesized that behaves correctly on the pair $(p,p')$. When $\pgm_1$ has loops, we define a sequence of states, $(p_0, p_1, \cdots, p_n)$ where $p_0 = p$, $p_n = p'$ and $p_1, \cdots, p_{n-1}$ are states occurring at the heads of the loops within $\pgm_1$. We then try to refine the invariants so that this sequence of states can no longer occur within the semantics of $\pgm_1$. It may still be the case that $(p,p')$ are in the semantics of $\pgm_1$ using the refined annotated program. In both the case of loop-free body and body containing inner loops, the counterexample $(p,p')$ is used to guide the refinement of $\anno$, demonstrating the \textit{first direction of feedback}: ranking function search guiding invariant(s) search. The crucial property required of $\refine$ is: 

\begin{enumerate}[noitemsep, nolistsep]
    \item If $\refine(\annopgm,(p,p'),\rp, \tempi)$ returns $(\true, \annopgm', t')$ then it must be the case $\annopgm' \strctless \annopgm$ and $\ini \models \annopgm'$.
    \item If $\refine(\annopgm,(p,p'),\rp, \tempi)$ returns $(\false, \annopgm', t')$ then $\annopgm' = \annopgm$ and for every correct $\annopgm''$ ($\ini \models \annopgm''$) with $\annopgm'' \strctless \annopgm$, $(p,p') \in \id{\inv'' \cap \bracs{\beta}}\relcomp\bracs{\annopgm''[\pgm_1]}$ where $\inv''$ is the invariant labelling $\looppgm$ in $\annopgm''$. In other words, $(p,p')$ is in the semantics of the body of the loop $\looppgm$ in every refinement of $\annopgm$.
\end{enumerate}

Implementing $\refine$ to satisfy these properties requires searching through the space of possible annotated programs, for example using a CEGIS loop. As we explain in \S~\ref{sec:sub-algorithms}, this involves generating candidate invariants that exclude $(p, p')$ and then checking their validity. While checking the validity of the generated invariants, $\refine$ can discover new reachable states and update $\rp$ to $\rp'$, including these states. Consequently, $t \subseteq t' \subseteq \bracs{\pgm_1}$ and $\dom(t') \subseteq \reach$. These newly discovered states in $\rp'$ can then guide the ranking function search towards potentially correct ranking functions, demonstrating \textit{the second direction of feedback}: invariant search guiding ranking function search.

Note that if $\refine$ is unable to refine (returns $\false$), then by the definition of $\refine$, for every $\annopgm'' \strctless \annopgm$, $(p,p')$ is in the semantics of the loop $\looppgm$ captured by  $\annopgm''$. The following lemma lifts this to all possible valid $\annopgm''$ and proves that if $\refine$ returns $\false$, it means that $(p,p')$ is in the semantics of the loop $\looppgm$ captured by \textit{all valid} annotated programs (not limited to $\strctless \annopgm$). So, we add $(p,p')$ to $\rp$ to get a new set that continues to satisfy the property that we required of $\rp$.

\begin{lemma}
    \label{lem:refine}
    Suppose $\refine(\annopgm,(p,p'),\rp, \tempi)$ returns $(\false, \annopgm, t')$. Consider any annotated program $\annopgm' \in \anno(\pgm)$ such that $\ini \models \annopgm'$. Let $\inv'$ be the annotation of $\looppgm$ in $\annopgm'$. Then $(p,p') \in \id{\inv'\cap\bracs{\beta}}\relcomp\bracs{\annopgm'[\pgm_1]}$.
\end{lemma}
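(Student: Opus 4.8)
The plan is to reduce the claim about an \emph{arbitrary} correct annotation $\annopgm'$ to the strict-refinement guarantee that property (2) of $\refine$ already supplies for annotations $\strctless \annopgm$, using the meet operation as a bridge together with monotonicity of the annotated-program semantics. Concretely, I would fix an arbitrary $\annopgm' \in \anno(\pgm)$ with $\ini \models \annopgm'$, let $\inv'$ be the annotation of $\looppgm$ in $\annopgm'$, and form the meet $\annopgm_m \triangleq \annopgm \meet \annopgm'$. By Proposition~\ref{prop:correct-refine}(1) we get $\ini \models \annopgm_m$, and by the definition of $\meet$ we have $\annopgm_m \less \annopgm$ and $\annopgm_m \less \annopgm'$, with the invariant $\inv_m$ of $\looppgm$ in $\annopgm_m$ being exactly $\inv \cap \inv'$, where $\inv$ is the invariant labelling $\looppgm$ in $\annopgm$.

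The second ingredient is monotonicity of the semantics: I would establish by a routine structural induction on the shared underlying program that whenever $\annopgm_2 \less \annopgm_1$ we have $\bracs{\annopgm_2[\pgm_1]} \subseteq \bracs{\annopgm_1[\pgm_1]}$. The only construct in which invariants enter the semantics is the loop case, where $\bracs{\inv\: @\: \texttt{while } \beta \texttt{ do }\annopgm_1 \texttt{ od}} = \inv \times (\inv \cap \bracs{\neg\beta})$ is visibly monotone in $\inv$; every other case (sequencing, conditionals, loop-free code) is built monotonically from its subprograms, so monotonicity propagates up the syntax tree, and in particular over bodies $\pgm_1$ that themselves contain nested loops.

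With these in hand, I would first prove $(p,p') \in \id{\inv_m \cap \bracs{\beta}}\relcomp\bracs{\annopgm_m[\pgm_1]}$ by a case split on whether $\annopgm \less \annopgm'$. If $\annopgm \not\less \annopgm'$, then $\annopgm_m \ne \annopgm$, so $\annopgm_m \strctless \annopgm$ is a correct strict refinement, and property (2) of $\refine$ (which the hypothesis that $\refine$ returned $\false$ makes available) yields the membership directly. If instead $\annopgm \less \annopgm'$, then $\annopgm_m = \annopgm$ and $\inv_m = \inv$, and the desired membership is precisely the counterexample guarantee on $(p,p')$ relative to $\annopgm$ itself, namely $(p,p') \in \bracs{\annopgm[\pgm_1]}$ with $p \in \inv \cap \bracs{\beta}$, which holds because $(p,p')$ was produced as $\gcounter(f,\annopgm)$ and thus meets the specification of $\gcounter$. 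In either case, since $\annopgm_m \less \annopgm'$ gives $\inv_m \subseteq \inv'$ and, by the monotonicity step, $\bracs{\annopgm_m[\pgm_1]} \subseteq \bracs{\annopgm'[\pgm_1]}$, the membership transfers upward to $(p,p') \in \id{\inv' \cap \bracs{\beta}}\relcomp\bracs{\annopgm'[\pgm_1]}$, which is the conclusion.

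I expect the main obstacle to be the boundary case $\annopgm \less \annopgm'$. Property (2) of $\refine$ speaks only of \emph{strict} refinements $\strctless \annopgm$, so it cannot be invoked when the meet collapses back to $\annopgm$; one must instead fall back on the defining property of the counterexample $(p,p')$ with respect to $\annopgm$. The remaining delicate point is making the monotonicity argument rigorous and confirming that intersecting invariants interacts correctly with the nested-loop semantics for bodies $\pgm_1$ containing inner loops, but this is a straightforward induction rather than a conceptual difficulty.
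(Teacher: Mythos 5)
Your proof is correct and takes essentially the same route as the paper's: form the meet $\annopgm \meet \annopgm'$, invoke Proposition~\ref{prop:correct-refine}(1) for its correctness, and combine monotonicity of $\bracs{\cdot}$ under $\less$ (the paper's Lemma~\ref{aplemma:refinetrans}) with property~(2) of $\refine$ --- the paper merely packages this as a proof by contradiction rather than your direct argument with a case split. Your explicit treatment of the boundary case $\annopgm \less \annopgm'$, where the meet collapses to $\annopgm$ and one must fall back on the $\gcounter$ specification because property~(2) of $\refine$ covers only \emph{strict} refinements $\strctless \annopgm$, addresses a subtlety that the paper's proof silently glosses over, so your version is in fact slightly more careful.
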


\begin{proof}
    Suppose the claim is not true, i.e., for some correct annotated program $\annopgm'$, $(p,p') \not\in \id{\inv'\cap\bracs{\beta}}\relcomp\bracs{\annopgm'[\pgm_1]}$. Consider the program $\annopgm_1 = \annopgm \meet \annopgm' \less \annopgm$. As $\ini \models \annopgm$ and $\ini \models \annopgm'$,  it follows from Proposition~\ref{prop:correct-refine} that $\ini \models \annopgm_1$. Also, as $(p,p') \not\in \id{\inv'\cap\bracs{\beta}}\relcomp\bracs{\annopgm'[\pgm_1]}$ and $\annopgm_1 \less \annopgm'$, it follows from lemma \ref{aplemma:refinetrans} that $(p,p') \not\in \id{\inv_1 \cap \bracs{\beta}} \relcomp \bracs{\annopgm_1[\pgm_1]}$. So, $\refine$ can not return $\false$ as there is a correctly annotated program $\annopgm_1 \less \annopgm$ which does not contain $(p, p')$, leading to a contradiction.
\end{proof}

By processing counterexample $(p,p')$, \toolname always progresses. There are two cases to consider. If $\refine$ returns $\true$ then we have a new correct annotated program $\annopgm'$ that refines $\annopgm$. On the other hand, if $\refine$ returns $\false$ then we have new set $\rp_{\s{new}}$ with the property that $\func(\rp_{\s{new}},\tempr) \subset \func(\rp, \tempr)$. This is because $\rf \in \func(\rp, \tempr)\setminus\func(\rp_{\s{new}},\tempr)$ since $\rf$ is not a valid ranking function for the pair $(p,p')$. Thus, in this case $\func(\rp_{\s{new}},\tempr)$ becomes smaller.

\subsection{Soundness, Completeness, and Efficiency of \toolname}
\label{sec:efficiency}
Achieving efficiency or completeness independently in termination analysis is relatively straightforward, and many techniques exhibit either one. However, combining these properties, along with soundness, into a single framework is a challenge. While sound and complete algorithms can be devised by exhaustively exploring the search space, they tend to be inefficient due to the vast number of possibilities they need to explore. On the other hand, algorithms optimized for efficiency may sacrifice completeness or soundness by limiting the search space or relying on heuristics, potentially overlooking valid termination proofs. One of the contributions of this work is that \toolname achieves all three properties by employing a bi-directional feedback mechanism. 

Algorithm~\ref{general-algorithm} can be shown to be sound for proving the termination of loop $\looppgm$ of $\pgm$ when instantiated with any ranking function and invariant templates if the four sub-procedures are sound. In this context, soundness means that if our algorithm succeeds and returns a ranking function $\rf$ from $\tempr$, then there exists an annotated program, $\anno$, using invariants in $\tempi$ such that $\ini \models \anno, \rf$.

\begin{theorem}[Soundness]
    \label{thm:mainsoundness}
Suppose there exist sound procedures for the functions $\gc$, $\checkr$, $\gcounter$, and $\refine$ that satisfy the properties defined in \S~\ref{sec:main-algorithm}. Then if Algorithm~\ref{general-algorithm} return $(\true, f)$, there exists $\anno$ using invariants from $\tempi$ such that $\ini \models \anno, \rf$.
\end{theorem}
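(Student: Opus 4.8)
The plan is to prove soundness by tracing through the structure of Algorithm~\ref{general-algorithm} and showing that the return statement on line~11 can only be reached when the algorithm state encodes a genuine proof of termination. The key observation is that the algorithm maintains, as an invariant of the main \texttt{while}-loop, the property that the current annotated program $\annopgm$ is correct with respect to $\ini$, i.e., $\ini \models \annopgm$. Since line~11 returns $(\true, f)$ only immediately after $\checkr(f, \annopgm)$ returns \true on line~10, the soundness of the four sub-procedures lets us directly conclude that $\ini \models \annopgm, \rf$ for the current $\annopgm$, which is exactly what we must show.

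First I would establish the loop invariant $\ini \models \annopgm$ by induction on the iterations of the \texttt{while}-loop. For the base case, the initial annotated program is obtained by labeling each loop of $\pgm$ with the trivial invariant $\stat$, and one checks from the definition of $\ini \models \annopgm$ in \S~\ref{sec:annotatedpgm} that this trivial annotation is correct (every set is contained in $\stat$, and the trivial relational constraints hold). For the inductive step, $\annopgm$ is only ever modified on line~14 via a call to $\refine$. By the first crucial property of $\refine$ stated in \S~\ref{sec:main-algorithm}, if $\refine$ returns $(\true, \annopgm', t')$ then $\annopgm' \strctless \annopgm$ and, importantly, $\ini \models \annopgm'$; if it returns $(\false, \ldots)$ then $\annopgm' = \annopgm$ and the annotation is unchanged. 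In either case the updated annotated program remains correct with respect to $\ini$, so the loop invariant is preserved.

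Second, I would invoke the specified semantics of the sub-procedures at the return site. By the soundness assumption, $\checkr(f, \annopgm)$ returns \true only when $\ini \models \annopgm, \rf$ holds in the sense of Definition~\ref{def:correct-rank}. Combining this with the loop invariant $\ini \models \annopgm$ just established, the annotated program $\annopgm$ present in the algorithm state at the moment of returning uses only invariants drawn from $\tempi$ (since $\refine$ is parameterized by $\tempi$ and the initial annotation uses $\stat \in \tempi$), and it satisfies $\ini \models \annopgm, \rf$. Taking this $\annopgm$ as the witness $\anno$ in the statement completes the argument, since $\rf \in \tempr$ was produced by $\gc$ on line~5.

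I do not expect a serious obstacle here, as the theorem is essentially a contract-checking argument that rides entirely on the specified behaviour of the parameterized sub-procedures. The one point requiring mild care is to argue that the annotation at the return site genuinely uses invariants from $\tempi$ and never escapes the template; this follows because every refinement goes through $\refine(\cdot, \cdot, \cdot, \tempi)$ and $\tempi$ is closed under intersection, so any invariant ever installed lies in $\tempi$. The main subtlety, rather than an obstacle, is simply being explicit that the loop-invariant reasoning applies uniformly regardless of whether $\refine$ succeeds or fails in a given iteration, so that correctness of $\annopgm$ is never violated between the start of an iteration and the subsequent $\checkr$ test.
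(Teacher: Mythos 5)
Your proposal is correct and follows essentially the same route as the paper's own proof sketch: maintain the loop invariant $\ini \models \annopgm$ via the contract of $\refine$ (with the trivial initial annotation as the base case), then invoke the soundness of $\checkr$ at the return site to conclude $\ini \models \annopgm, \rf$. Your added observations---the explicit induction over iterations and the check that every installed invariant stays within $\tempi$ because refinement goes through $\refine(\cdot,\cdot,\cdot,\tempi)$ with $\inv_{\true} \in \tempi$---are details the paper leaves implicit, not a different argument.
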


\begin{proof}[Proof Sketch]
From the soundness of $\refine$, we can infer that the program state $\annopgm$ will always be correct, i.e., $\ini \models \annopgm$. Combined with the soundness of $\checkr(\rf, \annopgm)$, this ensures that any ranking function returned by the algorithm will be correct.
\end{proof}

Algorithm~\ref{general-algorithm} can also be shown to be complete for a subset of the possible ranking function and invariant templates, meaning that if there is a function $f \in \tempr$ and an annotated program, $\anno$, using invariants from $\tempi$ such that $f$ can be proved valid using $\anno$, then Algorithm~\ref{general-algorithm} will prove termination, and if there does not exist such an  $f$ and $\anno$, then Algorithm~\ref{general-algorithm} will terminate concluding that no ranking function and invariant can be found within the templates.

\begin{theorem}[Completeness]
    \label{thm:main}
    Suppose there exist sound and complete procedures for the functions $\gc$, $\checkr$, $\gcounter$, and $\refine$ that satisfy the properties defined in \S~\ref{sec:main-algorithm}. Further, suppose the set $\tempr$ is finite, and $\tempi$ is closed under intersection and has no infinite descending chains (w.r.t the subset relation). Then, if there exists $f \in \tempr$ and $\anno$ using invariants in $\tempi$ such that $\ini \models \anno, f$, Algorithm~\ref{general-algorithm} will return $(\true, f')$ for a valid $f'$, and otherwise, will return $\false$.
\end{theorem}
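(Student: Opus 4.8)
The plan is to decompose the statement into three parts and then glue them together with the soundness guarantee: (i) Algorithm~\ref{general-algorithm} always halts; (ii) if a witnessing pair $(f^*, \annopgm^*)$ with $\ini \models \annopgm^*, f^*$ exists then the algorithm cannot return $\false$; and (iii) if no such pair exists then it cannot return $\true$. Part (iii) is immediate from Theorem~\ref{thm:mainsoundness} (contrapositive of soundness), so combined with (i) the algorithm returns $\false$ in that case. For the positive case, (i) and (ii) together force a $(\true, f')$ return, and $f'$ is valid by the soundness of $\checkr$.

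For termination, I would exhibit a well-founded measure that strictly decreases on every non-returning iteration. The key observation is that the two state components evolve \emph{globally} monotonically over the whole run: by property~1 of $\refine$ a successful refinement gives $\annopgm' \strctless \annopgm$, and by property~2 a failed refinement leaves $\annopgm$ unchanged, so $\annopgm$ never ascends; meanwhile $\rp$ only grows, so $\func(\rp,\tempr)$ only shrinks. Each non-returning iteration invokes $\refine$ exactly once: on a $\true$ return $\annopgm$ strictly descends, and on a $\false$ return $(p,p')$ is added to $\rp$ with $f \in \func(\rp,\tempr)\setminus\func(\rp',\tempr)$ (as argued just before the theorem), so $\func(\rp,\tempr)$ strictly shrinks. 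Hence the number of $\true$-refinements over the entire run is at most the length of the longest descending chain of annotated programs from the initial $\annopgm$, and the number of $\false$-refinements is at most $|\func(\rp_0,\tempr)| \le |\tempr|$. The descending chain is finite because an annotated program over $\pgm$ is determined by the tuple of invariants labeling the finitely many loops of $\pgm$, $\less$ is the componentwise subset order, and a finite product of orders with no infinite descending chain (which $\tempi$ has by hypothesis) again has no infinite descending chain. Thus the total count of non-returning iterations is finite, independent of whether a solution exists.

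For part (ii), suppose $f^* \in \tempr$ and $\annopgm^*$ satisfy $\ini \models \annopgm^*, f^*$. I would prove the loop invariant $f^* \in \func(\rp, \tempr)$. This rests on the property maintained for $\rp$ (\S~\ref{sec:main-algorithm}): for every correct annotation $\annopgm'$, $\rp \subseteq \id{\inv'\cap\bracs{\beta}}\relcomp\bracs{\annopgm'[\pgm_1]}$. It holds initially since traces lie in the body semantics of every correct annotation; it is preserved on a $\false$-refinement by Lemma~\ref{lem:refine}, which places the newly added pair in the body semantics of \emph{every} valid annotated program, and on a $\true$-refinement because the extra pairs $\refine$ discovers are genuinely reachable and hence belong to every correct annotation. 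Instantiating this at $\annopgm' = \annopgm^*$ gives $\rp \subseteq \id{\inv^*\cap\bracs{\beta}}\relcomp\bracs{\annopgm^*[\pgm_1]}$, and then Definition~\ref{def:correct-rank} applied to $\ini \models \annopgm^*, f^*$ shows $f^*$ is bounded and reducing on every pair of $\rp$, i.e.\ $f^* \in \func(\rp,\tempr)$. Consequently $\func(\rp,\tempr)$ is never empty, $\gc$ never returns $\false$, and the algorithm cannot return $\false$; by termination it returns $(\true, f')$, valid by Theorem~\ref{thm:mainsoundness}.

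I expect the main obstacle to be the termination argument rather than (ii) or (iii). The delicate points are establishing that the two progress measures are genuinely globally monotone -- so that strict-decrease events can be counted across the whole execution instead of per iteration, which matters because a $\true$-refinement reuses the same candidate $f$ and only re-runs $\checkr$ -- and correctly lifting the no-infinite-descending-chain hypothesis from $\tempi$ to the product order on annotated programs. A secondary subtlety, needed for (ii), is verifying that the $\rp$-invariant survives a \emph{successful} refinement, where $\refine$ may simultaneously shrink $\annopgm$ and enlarge $\rp$ with discovered reachable states; this is where the reachability of those states (and hence their membership in the body semantics of every correct annotation) must be argued carefully.
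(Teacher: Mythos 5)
Your proposal is correct, and its skeleton is the paper's: the invariant $f^* \in \func(\rp,\tempr)$ maintained through the stated property of $\rp$ (via Lemma~\ref{lem:refine} and reachability of states discovered by $\refine$), a well-foundedness argument to bound refinement steps, and soundness to certify the returned function. You deviate in two ways, both improvements. First, the paper's sketch routes through Lemma~\ref{lem:meet} to argue that descending refinements preserve a correct annotation $\annopgm' \less \annopgm$ validating $f^*$; your decomposition into (termination always) plus (never returns $\false$) renders that lemma unnecessary here, since termination forces one of the two return statements and the $f^*$-invariant blocks the $\false$ exit at $\gc$. Second, the paper asserts that $\annopgm\downarrow = \setpred{\annopgm'}{\annopgm' \less \annopgm}$ is \emph{finite}, which does not follow from $\tempi$ merely lacking infinite descending chains (an infinite antichain below $\annopgm$ is consistent with DCC); your argument — the run traces a strictly descending chain in a finite product of DCC orders, hence finite — is the correct repair. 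One shared hairline crack: the claim that every failed refinement strictly shrinks $\func(\rp,\tempr)$ presumes the reused $f$ still lies in $\func(\rp,\tempr)$, which can fail if an intervening successful refinement enlarged $t$ with pairs on which $f$ fails; you inherit this from the paper's pre-theorem argument, and it is harmless (at worst one iteration passes without a measure decrease before regeneration), but it deserves a remark in a full formalization.
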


\begin{proof}[Proof Sketch]
    Suppose there is a ranking function $\rf^*$ and an annotated program $\annopgm^*$ such that $\annopgm^*$ is a correct annotation of $\pgm$ and $\rf^*$ is a valid ranking function that proves the termination of $\looppgm$. Let $(\annopgm,\rp)$ be the state of the algorithm at any point. Observe that because of the property that the algorithm maintains of $\rp$, we have $\rf^* \in \func(\rp,\tempr)$. By Lemma~\ref{lem:meet} there is a correct annotated program $\annopgm' (= \annopgm \meet \annopgm^*) \less \annopgm$ such that $\rf^*$ proves the termination of $\looppgm$ in $\annopgm'$. Finally, the assumptions on $\tempi$ and $\tempr$ mean that $\func(\rp,\tempr)$ and $\annopgm\downarrow = \setpred{\annopgm'}{\annopgm' \less \annopgm}$ are both finite sets and hence there can only be finitely many steps of refinement. Thus Algorithm~\ref{general-algorithm} will terminate with the right answer.
\end{proof}

\myparagraph{Ensuring Efficiency with Completeness} \toolname's bi-directional feedback ensures efficiency, even in instantiations that have the completeness guarantees stated above. Consider a meet semilattice ($\mathcal{L}_\tempr, \less, \meet)$ induced by $\tempr$ where each element is a set of ranking functions from $\tempr$ and $\less, \meet$ are defined by the set operations $\subseteq, \cap$. In each refinement step, we either (1) add a pair of states to $\rp$ creating $\rp_{new}$, which decreases the size of the set of ranking functions $\func$ determined by the algorithm's state, as $\func(\rp_{new},\tempr) \subset \func(\rp,\tempr)$. The number of times this refinement can be performed is bounded by the depth of the semilattice $\mathcal{L}_\tempr$, or (2) we refine the current annotation $\annopgm$ to $\annopgm'$, where $\annopgm' \strctless \annopgm$. This is bounded by the depth of the meet semilattice for the annotated programs $\mathcal{L}_\annopgm$. This bounds the number of iterations of our algorithm to $\mathsf{depth}(\mathcal{L}_\tempr) + \mathsf{depth}(\mathcal{L}_\annopgm)$. 

This worst-case bound is much better than an unguided search for invariants and ranking functions. A naive termination algorithm that searches for invariants and ranking functions independently would realize a worst-case bound of $|\mathcal{L}_\tempr| \ \times \ |\mathcal{L}_\annopgm|$, where $|\cdot|$ represents the number of elements in the lattice. Even a strategy that ensures refinement, a downward movement in the respective lattices, at each step of the algorithm, would have a worst-case bound of $\mathsf{depth}(\mathcal{L}_\tempr) \times \mathsf{depth}(\mathcal{L}_\annopgm)$. Thus, we argue that \toolname's synergistic synthesis for invariants and ranking functions has significant merit in improving runtimes. 

\myparagraph{Infinite Ranking Function Templates}
While the size of ranking function templates that satisfy the assumptions of Theorem~\ref{thm:main} can be very large, it may be useful to consider infinite sets of ranking functions. When given an infinite, recursively enumerable ranking function template, if the assumptions about the sub-procedures and the invariant template from Theorem~\ref{thm:main} hold, we can still ensure that \toolname can find a ranking function and an annotated program that proves the termination of a loop if such a ranking function and invariants exist in their respective templates.  For example, if $\tempr$ is the set of all linear ranking functions with integer coefficients, we can define a sequence of finite sets such that for all $f \in \tempr$, $f$ is in at least one of the finite sets. In this case, we can define sets $\tempr^{10}, \tempr^{100}, \tempr^{1000}, \cdots$, where $\tempr^n$ refers to the subset of $\tempr$ where the absolute value of each coefficient is bounded by $n$. Each of these sets is finite and every ranking function in $\tempr$ is included in at least one of the sets. We can then call Algorithm ~\ref{general-algorithm} on each set $\tempr^n$. Since each $\tempr^n$ is finite, Algorithm ~\ref{general-algorithm} will terminate for each set, either finding a ranking function and invariant or proving that none exist within their templates. This leads to an algorithm that will always find a ranking function and annotated program that proves termination if they exist given $\tempr$ and $\tempi$. This result holds for recursively enumerable ranking function templates because we can define a series of finite sets that include every ranking function in the template.

\section{Instantiation of  \toolname}
\label{sec:implementation}
In this section, we detail one instantiation of \toolname.  \S~\ref{sec:sub-algorithms} outlines the sound modeling of the four primary sub-procedures: $\gc$, $\checkr$, $\gcounter$, and $\refine$. These sub-procedures, based on the templates and programs being analyzed, can produce relatively complete algorithms for termination, as discussed in \S~\ref{sec:efficiency}. In \S~\ref{sec:spectrum}, we introduce new parameters specific to our instantiation that adjust the exploration of the ranking function search space versus the invariant search space. Tuning these parameters can enhance the algorithm's efficiency in practical applications. 

\subsection{Modeling sub-procedures using SMT} 
\label{sec:sub-algorithms}
\begin{wrapfigure}{r}{0.5\textwidth}
    \begin{minipage}{\linewidth}
        \begin{algorithm}[H]
    \caption{Algorithm for Refine State}\label{algorithm1}
    \begin{algorithmic}[1]    
        \Procedure{$\mathsf{refine}$}{$\anno, (p, p'), t, \tempi$}
            \State $\mathsf{inv_c}$ = \{ \}
            \State $\anno', \text{gen} \gets \gi(t, (p, p'), \mathsf{inv_c})$
            \While{\text{gen}}
                \If{$\checki(\anno')$} 
                    \State \Return $\true$, $\anno \meet \anno'$, $t$
                \EndIf
                \State $(c,c'), \mathsf{for\_pre} \gets \gcounterinv(\anno')$
                \If{$\mathsf{for\_pre}$}
                    \State $t \gets t \cup \{(c, c')\}$
                \Else
                    \State $\mathsf{inv_c} \gets \mathsf{inv_c} \cup \{(c, c')\}$ 
                \EndIf
                \State $\anno', \text{gen} \gets \gi(t, (p, p'), \mathsf{inv_c}, \tempi)$
            \EndWhile
            \State \Return $\false$, $\anno$, $t$
        \EndProcedure
    \end{algorithmic}
    \end{algorithm}
    \end{minipage}
\end{wrapfigure}
In this instantiation of \toolname, we use SMT queries to check the soundness of ranking functions for a given loop. We define $\tempr$ as a template for ranking functions. For the function $\gc$, given $t$, we generate an SMT query encoding a symbolic ranking function in $\tempr$ that satisfies the reducing and bounded conditions for all of the pairs of states in $t$. Finding a satisfying assignment to this query is equivalent to finding a candidate ranking function. If there is no satisfying assignment, then $\func(\rp, \tempr)$ is empty, and $\gc$ returns $\false$.

We can also use an SMT query for a function that implements the semantics of both $\checkr$ and $\gcounter$. $\checkr$ checks the validity of $f$ given $\anno$. First, we define variables representing the state at the start of one iteration of the loop, $(x_1, \cdots, x_j)$, and add the condition $(x_1, \cdots, x_j) \in \invar$ where $\invar$ is the invariant of the loop we are analyzing, retrieved from $\anno$. Then, we encode the over-approximation of the loop body defined by $\anno$ into the SMT query.  For each loop in the body of the outer loop, we define new variables, $(x_{i,1}, \cdots, x_{i,j})$ to represent the states at the exit of each inner loop, adding the condition $(x_{i,1}, \cdots, x_{i,j}) \in \invar_i \cap \llbracket \neg \beta_i \rrbracket$, where $\invar_i$ and $\beta_i$ are the invariant and loop guard of the inner loop from $\anno$. We define variables representing the state at the end of the iteration, $(x'_1, \cdots, x'_j)$, and set these variables equal to the output state of $\bracs \anno$ when the input state is $(x_1, \cdots, x_j)$. Checking the validity of $f$ using $\anno$ is reduced to determining the satisfiability of the formula: $ (f(x_1, \cdots, x_j)-f(x'_1, \cdots, x'_j) < 1) \vee (f(x_1, \cdots, x_j) < 0) $. We return $\true$ if this formula is unsatisfiable, and $\false$ otherwise. In Algorithm~\ref{general-algorithm}, $\gcounter$ is only called when $\checkr$ returns $\false$. In this case, from the SMT query in $\checkr$, we have a satisfying assignment to $(x_1, \cdots, x_j)$ and $(x'_1, \cdots, x'j)$. Since we defined $\checkr$ to define symbolic states,$(x_{i,1}, \cdots, x_{i,j})$, at the end of each inner loop, instead of only returning $(p,p')$, $\checkr$ returns $(p_1, \cdots, p_m)$, where $p_1 = p$, $p_m = p'$ and $p_i = (x_{i,1}, \cdots, x_{i,j})$ for all $i \in [1,m]$. 

We now describe an algorithm that implements the semantics of $\refine$ defined in \S~\ref{sec:main-algorithm}. First, we consider a program with a single loop. Suppose we have a candidate ranking function, $f$, and a counterexample, $(p,p')$. In our algorithm state, we have an annotated program with one invariant, $\inv$. First we try to refine $\invar$ to $\invar'$ s.t. $\invar' \subseteq \invar \setminus \{p\}$. For this, we first try to find a valid invariant, $\invar''$, that excludes the state $p$ and includes all states $s$ such that $(s,s') \in t$. If this is possible, we return $\invar \cap \invar''$. Note that $\invar \cap \invar'' \in \tempi$ because $\tempi$ is closed under intersection, and $\invar \cap \invar'' \subset \invar$. To find $\invar''$, we iteratively generate possible invariants and check their validity until either we have found a valid invariant that excludes $p$ or proved that no such invariant exists. We show this procedure in Algorithm~\ref{algorithm1}. In lines 3 and 14, we generate new possible invariants. In line 5, we check the validity of the invariant. In lines 9-13, based on the output of the validity check we update the information used to generate new invariants. We use $inv_c$ to represent the set of pairs of states we obtained as counterexamples in previous iterations of checking the validity of possible invariants.
$inv_c$ is initialized to $\{\}$. We use the following SMT queries to find a candidate for $\invar''$ (\gi) and check its correctness (\checki).
\begin{equation}
\label{equation:invgen}
    \Big( p \notin \invar'' \Big) \wedge \Big( \bigwedge_{(s, s') \in t} s \in \invar'' \Big) \wedge\Big( \bigwedge_{(s,s') \in inv_c} s \in \invar'' \implies s' \in \invar'' \Big)
\end{equation}
\begin{equation}
\label{equation:invcheck}
    \Big( s \in \ini \wedge s \notin \invar'' \Big) \vee \Big( s \in \invar'' \wedge \llbracket \prog \rrbracket s \notin I'' \Big)
\end{equation}

If the SMT query checking the validity of $\invar''$ is satisfiable, $\gcounterinv$ returns a pair of states $(c, c')$, s.t. either $c \in \init \wedge (c,c') \not\in\invar''$ or $c \in \invar'' \wedge c' \notin I''$. In the former case, we add this state to $t$ (line 10). In the latter case, we add this state to $inv_c$ (line 12). With updated $t$ and $inv_c$, we repeat the process of generating and checking a new possible invariant. When $\refine$ returns the updated $t$, the counterexamples from the invariant search are used to guide the ranking function search.

In the case of multiple loops, we need to find $\anno' \strctless \anno$, where $\anno$ is the annotated loop program in our algorithm state. 
As discussed earlier, $\checkr$ will return $(p_1, \cdots, p_m)$ instead of only $(p,p')$. So, we have counterexamples, $p_i$, for each invariant, $\invar_i$, in the program. For each invariant, we maintain the set, $t_i$ and $inv_{c_i}$. $t_i$ is initialized from the initial program execution traces. $inv_{c_i}$ is initialized to an empty set. We use an SMT query to generate candidate invariants, $\invar'_i$ for all of the loops such that at least one of the loops in $\annopgm$ that excludes the counterexample, $p_i$ associated with $\invar_i$. Formally, we find a satisfying assignment to the constants in $\invar'_i$ such that 
\begin{equation}
\label{equation:invchecknested}
    \bigvee_i \Big( \neg p_i \in \invar_i \wedge \bigwedge_{(s,s') \in t_i} s \in \invar_i \wedge \big(\bigwedge_{(s,s') \in inv_{c_i}} s \in \invar_i \implies s' \in \invar_i \big) \Big)
\end{equation}

Once we find candidate invariants for all of the loops such that for at least one of the loops, $\invar'_i$ excludes $p_i$, we check the validity of the candidate invariants. This validity check is done the same way as in the single loop case, and the counterexamples are either added to $t_i$ or $inv_{c_i}$ accordingly. We repeatedly generate candidates until at least one of the invariants is refined or there does not exist any valid invariant for any of the loops that exclude the corresponding counterexample, $p_i$. In the latter case, we have shown that there is a sequence of reachable states w.r.t. the corresponding invariant templates that transition from $p$ to $p'$. Therefore, $(p,p')$ is in the semantics of the body of the loop for every possible refinement of $\anno$. In this case, we return $\false$. 

For the sub-procedures defined above to be complete, the set of all ranking functions in $\tempr$ and the bounded and reducing conditions on these ranking functions must be expressible in a decidable SMT theory.
Further, both the invariant generation SMT query and the validity check of candidate invariants have to be expressible in a decidable SMT theory. 

\subsection{Efficient Implementation through Adaptive Exploration of Search Spaces}
\label{sec:spectrum}
Although Algorithm~\ref{general-algorithm} is sound and complete with appropriate templates, it can be inefficient in practical settings. To address this, we introduce two parameters: (1) $\prefnum$, the maximum number of times $\refine$ is called for a given candidate ranking function, and (2) $\prefiter$, the maximum number of iterations within $\refine$. These parameters prevent \toolname from getting stuck in finding many invariants while attempting to prove the validity of an invalid ranking function or to prove a reachable state unreachable. Once this maximum number of calls to $\refine$ or iterations within $\refine$ is reached, \toolname will no longer try to refine the invariant. We show the code for this modification in Appendix~\ref{ap:genalgo} and ~\ref{ap:refalgo}. These parameters can be set by users depending on how much they want to explore the invariant search space compared to the ranking function search space.

When encountering a counterexample $(p,p')$ for the ranking function, if $\refine$ is not called or $\refine$ was not able to fully explore the invariant search space, we cannot conclude that $(p,p')$ is reachable. Instead of adding $(p,p')$ to $\rp$, we add it to a new set, $\rques$, which contains pairs of states that are not conclusively reachable or unreachable. Whenever we refine an invariant, we check if the current annotated program, $\anno$, can prove that any states in $\rques$ are unreachable and remove them if so. When generating candidate ranking functions, we use $\rp \cup \rques$. Thus, the generation phase outputs a function $f \in \tempr$ that decreases and is bounded on all states in $\rp \cup \rques$. If no candidate ranking function is found, the algorithm terminates. Since $\rques$ may contain unreachable pairs, this can lead to terminating without finding a valid ranking function even if one exists in $\tempr$. To maintain completeness, if we cannot find a valid ranking function, we ignore $\rques$ and run \toolname with $\prefnum$ and $\prefiter$ set to $\infty$, starting with the last $\rp$ and $\anno$. This approach leverages the information gained from the efficient analysis while preserving the algorithm's completeness guarantees.

\section{Evaluation}
\label{sec:evaluation}
We implemented \toolname using the instantiation and parameters introduced in \S~\ref{sec:implementation}. For ranking functions, we choose a commonly used template---lexicographic functions, $\langle e_1, \cdots, e_n \rangle$, where each $e_k = \sum_i\max(a^i_{0} + \sum_j a^i_{j}x_j,0)$. Here, $x_j$ represents program variables and $a^i_{j} \in \mathbb{Z}$. This includes lexicographic, linear, and non-linear ranking functions. There are two parameters when defining this template, $i$ and $n$. We define $T(i,n)$ to be a template in this form with $i$ summands per lexicographic expression and $n$ lexicographic expressions. Our implementation supports ranking function templates $T(i,n)$ for any  $i,n \in \mathbb{Z}$. However, in our evaluation, we use the following templates: $\tempr_1 = T(1,1), \; \tempr_2 = T(1,2), \; \tempr_3 = T(1,3), \; \tempr_4 = T(2,1), \text{and} \; \tempr_5 = T(2,2)$. For invariants, our implementation uses conjunctions of linear constraints,  $d_0 + \sum_jd_jx_j \ge 0$, where $d_j \in \mathbb{Z}$. 

The implementation supports integer programs and does not support function calls, except for calls to a random number generator. However, the implementation can be directly extended to handle arithmetic data types supported by SMT solvers, and non-recursive functions through inlining. For multiple loops, since Equation~\ref{equation:invchecknested} can be a large SMT query and is often not required to find useful invariants, we try to refine the invariants for the loops using individual SMT queries that exclude the state $p_i$ from $\invar_i$. If this is not possible, we add $(p,p')$ to the set $\rques$, described in Section~\ref{sec:spectrum}. We use a timeout of 120s. When computing the average time, we use 120s for instances that timeout.
We implemented \toolname in Python and used the Z3-Java API to check the validity of the ranking functions and invariants. All experiments are run on a 2.50 GHz 16 core 11th Gen Intel i9-11900H CPU with a main memory of 64 GB. 

\myparagraph{Benchmarks} We collect 168 benchmarks from $\nu$Term~\cite{nuterm}. These benchmarks are comprised of all of the terminating programs in the \textit{Term-crafted} set of SV-COMP~\cite{svcomp}, the \textit{AProVE\_09} set from TermComp~\cite{termcomp}, and \textit{$\nu$Term-advantage} from $\nu$Term~\cite{nuterm}. The SV-COMP and TermComp benchmarks are from the Software Verification Competition and Termination Competition respectively, which are used as a standard to judge state-of-the-art termination analysis tools. These benchmarks contain challenging integer programs, with non-linear assignments and loop guards, non-determinism, nested loops of maximum depth 3, and sequential loops. Integer programs constitute a strong dataset as they can be constructed by over-approximating the semantics of general, even heap-based programs~\cite{heap1, heap2, heap3} in a way that termination of the integer program implies the termination of the more general program. So, the termination of such integer programs is the standard evaluation metric for a termination analysis framework~\cite{ultimate, dynamite, nuterm, verymax, polykincaid, bits, popl23}. The programs, originally in C, were translated to Java by ~\cite{nuterm}. 

As the primary evaluation in \S~\ref{sec:ablation}, we first demonstrate the advantages of using the novel bi-directional feedback between separate ranking functions and invariant searches. To this end, we implement baseline methods that either employ only a single direction of feedback, as used in ~\cite{bits, terminator, cooknew,datadrivenloop}, or directly search through the combined ranking function and invariant spaces, as used in ~\cite{verymax, popl23, chc}. Additionally, we evaluate several algorithms within \toolname that all use the bi-directional feedback approach and analyze the cases where \toolname fails. In the secondary evaluation in \S~\ref{sec:othertools}, we compare \toolname with state-of-the-art termination analysis tools~\cite{ultimate, aprove, nuterm, verymax, dynamite}.
These tools rely on complex and sophisticated techniques, such as neural networks, multiple ranking functions tailored to different loop components, reduction orders, and optimizations for SMT solvers, specifically designed for the termination analysis queries of programs. 
Despite not using these advanced techniques, \toolname already has comparable or better performance than these tools, owing to its novel bi-directional feedback. Given that some of these techniques are orthogonal to \toolname, they can be combined with \toolname in the future to further improve the results. Finally, in \S~\ref{sec:case-studies}, we provide a deeper analysis of two representative benchmarks, demonstrating how \toolname quickly proves termination where other techniques fail, including a benchmark that none of the state-of-the-art tools could prove.

\subsection{Efficacy of the Synergistic Approach}
\label{sec:ablation}
As discussed in Section~\ref{sec:intro}, existing ranking function-based termination analysis tools can be broadly classified into three categories: (1) those that search for ranking functions and invariants completely independently~\cite{nuterm,dynamite}, (2) those that provide limited uni-directional feedback from one search to the other~\cite{bits, terminator, cooknew,datadrivenloop}, and (3) those that search for both the ranking function and the invariant through a single, monolithic query~\cite{verymax,popl23,chc}. We demonstrate that the bi-directional feedback employed in \toolname---where the invariant guides the ranking function and vice versa---leads to proving the termination of more benchmarks and using less average time than search strategies that use either direction of feedback alone or the single-query approach, where both components are synthesized together. Our experiments further reveal that the benefits of bi-directional feedback are consistent, regardless of the specific instantiation of the ranking function template. 

We implemented 3 baselines - \textbf{B-synergy1}, \textbf{B-synergy2}, and \textbf{B-combined}. B-synergy1 does not use the intermediate counterexamples found during the invariant search to guide the generation of new candidate ranking functions. B-synergy2 does not use the counterexamples from the ranking function search to guide the generation of new candidate invariants. B-combined searches for both a ranking function and invariants with a single monolithic query. We designed this query to match the form of the state-of-the-art termination analysis tool VeryMax~\cite{togethersmt}. In the following experiments, the baselines use the same templates as \toolname and only vary in how the ranking function and invariant searches interact. B-synergy1 and B-synergy2 are instantiated with the 5 ranking function templates $\tempr_i$ described in \S~\ref{sec:evaluation}. B-combined is only instantiated with $\tempr_1$ because B-combined does not scale well even with $\tempr_1$. We instantiate all of the baselines and \toolname with 0 initial traces. \toolname,  B-synergy1, and B-synergy2 all use $\prefnum = \prefiter = 10$.

\begin{figure}[]
    \centering
    \begin{minipage}{\linewidth}
        \centering
        \small
        \begin{tabular}{@{}c c c c@{}}
        \toprule 
        Ranking function Template & B-synergy1 & B-synergy2 & \toolname \\ 
        \midrule
        $\tempr_1$ & 76 & 84 & \textbf{98} \\ 
        $\tempr_2$ & 125 & 128 & \textbf{142} \\ 
        $\tempr_3$ & 132 & 127 & \textbf{148} \\ 
        $\tempr_4$ & 110 & 95 & \textbf{128} \\ 
        $\tempr_5$ & 115 & 113 & \textbf{134} \\
        \bottomrule
    \end{tabular}
    \captionof{table}{Number of benchmarks out of 168 proved by \toolname and baselines for different templates.}
    \label{table:baselines}
    \end{minipage}
    
    \begin{minipage}{\textwidth}
        \centering
        \begin{subfigure}{0.45\textwidth}
            \centering
            \includegraphics[width=\linewidth]{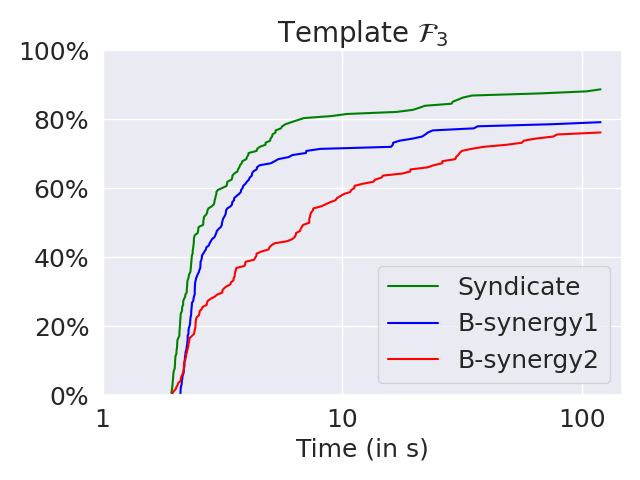}
            \caption{B-synergy1 and B-synergy2}
            \label{fig:baselines12}
        \end{subfigure}
        \hfill
        \begin{subfigure}{0.45\textwidth}
            \centering
            \includegraphics[width=\linewidth]{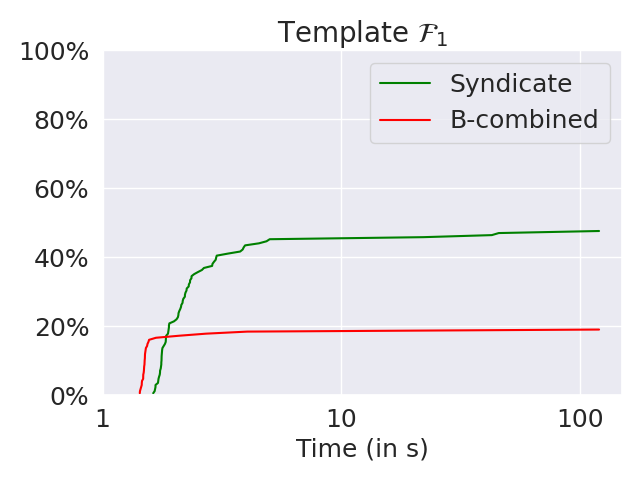}
            \caption{B-combined}
            \label{fig:baseline3}
        \end{subfigure}
        \caption{\% benchmarks proved with growing running time. Fig ~\subref{fig:baselines12} uses 168 benchmarks and Fig ~\subref{fig:baseline3} uses 144.}
        \label{fig:ablation}
        
    \end{minipage}

    \begin{minipage}{\textwidth}
        \centering
        \begin{subfigure}{0.45\textwidth}
        \centering
        \includegraphics[width=\linewidth]{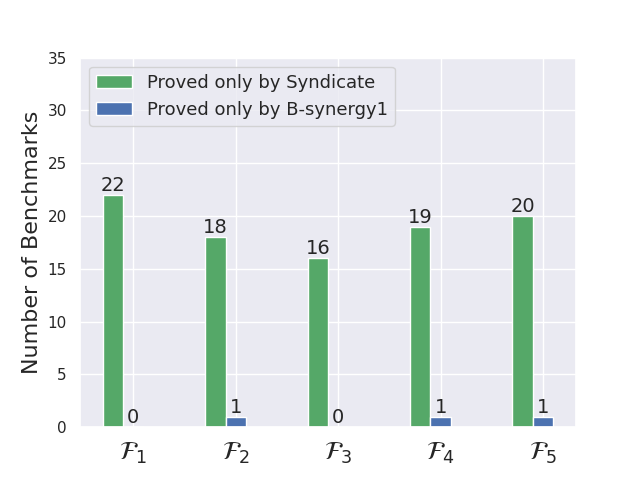}
        \caption{\toolname vs B-synergy1}
        \label{fig:bar-baseline1}
    \end{subfigure}
    \hfill
    \begin{subfigure}{0.45\textwidth}
        \centering
        \includegraphics[width=\linewidth]{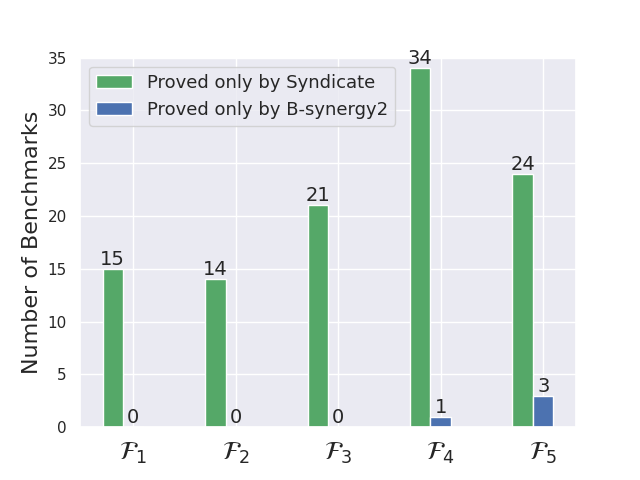}
        \caption{\toolname vs B-synergy2}
        \label{fig:bar-baseline2}
    \end{subfigure}
    \caption{No. of benchmarks proved by \toolname but not by the baselines out of 168 benchmarks.}
    \label{fig:bar-baseline}
    \end{minipage}
\end{figure}

In Fig.~\ref{fig:ablation}\subref{fig:baselines12}, we show the percentage of benchmarks proved with growing time for \toolname, B-synergy1, and B-synergy2 in the case of template $\tempr_3$. Both baselines perform worse than $\toolname$, indicating that guidance from both directions is vital for increasing the number of benchmarks proved within the timeout. The graphs for other templates are similar and can be found in Appendix~\ref{appendix:ablation}. In Table~\ref{table:baselines}, we show the numbers of benchmarks proved by \toolname, B-synergy1, and B-synergy2 using each template. There is up to 29\% improvement over B-synergy1 and up to 35\% improvement over B-synergy2.  Since these baselines use the same templates as \toolname, the improved performance is solely due to the searches guiding each other. In Fig.~\ref{fig:bar-baseline}, we show the number of benchmarks that are uniquely solved by B-synergy1 and B-synergy2 as compared to \toolname. We again see that \toolname is significantly better than both the baselines. We observe that in a few benchmarks, the baselines are better than \toolname due to the non-determinism involved in guessing the candidates. In Fig.~\ref{fig:ablation}\subref{fig:baseline3}, we compare the \toolname to B-combined, using $\tempr_1$ for both. Our implementation of B-combined only handles non-nested loops, so we use 144 benchmarks to evaluate B-combined. B-combined can solve only 31 of the total benchmarks. It solves these benchmarks very quickly because B-combined requires fewer iterations, but times out on relatively more complicated benchmarks because there is a large search space for the query. This shows that such a strategy does not scale as the complexity of the benchmarks grows.

\myparagraph{Variations within \toolname}
\begin{table}[]
    \centering
    \small
    \begin{tabular}{@{}lcccc@{}}
        \toprule
         & Parallel & Dynamic & Bounded & $\prefnum, \prefiter \neq \infty$ \\
         \midrule
         \textsc{\toolname-best} & $\checkmark$ & $\checkmark$ & $\checkmark$ & $\checkmark$ \\
         \textsc{\toolname-seq} & $\boldsymbol{\times}$& $\checkmark$ & $\checkmark$ & $\checkmark$ \\
         \textsc{\toolname-static} & $\checkmark$ & $\boldsymbol{\times}$ & $\checkmark$ & $\checkmark$ \\
         \textsc{\toolname-unbounded} & $\checkmark$ & $\checkmark$ & $\boldsymbol{\times}$ & $\checkmark$ \\
         \textsc{\toolname-full} & $\checkmark$ & $\checkmark$ & $\checkmark$ & $\boldsymbol{\times}$ \\
         \bottomrule
    \end{tabular}
    \caption{Different configurations of \toolname used in our evaluation.}
    \label{tab:variations}
\end{table}

\toolname allows many variations of analyses, all using our novel bi-directional feedback, each with different desirable properties for different types of benchmarks. We use representative implementations, introduced in Table~\ref{tab:variations}. \textsc{\toolname-seq} uses only one template, $\tempr_3$, and no parallelism. \textsc{\toolname-static} uses 0 initial traces for a fully static analysis. \textsc{\toolname-unbounded} uses ranking functions and invariants with no bounds on the coefficients, yielding infinite search spaces for ranking functions and invariants. \textsc{\toolname-full} sets $\prefnum=\prefiter=\infty$, producing a complete algorithm when analyzing a single loop. The best-performing version of \toolname in our evaluation is \textsc{\toolname-best}, which runs the 5 templates ($\tempr_1, \cdots, \tempr_5$) in parallel, starts with 100 initial traces, bounds the sum of the coefficients for each ranking function and invariant by $10000$, and uses $\prefnum = \prefiter = 10$. Table~\ref{table:versions} shows the number of benchmarks proved and the average time for the different algorithms within \toolname. 

\begin{wraptable}{r}{0.5\linewidth}
\centering
\small 
\begin{tabular}{@{}l@{\hspace{0mm}}r@{\hspace{3mm}}r@{}}
            \toprule 
            & $\#$ Proved  & Avg. Time (s) \\  
            \midrule
            \textsc{\toolname-best} & \textbf{151} & \textbf{15.37}  \\
            \textsc{\toolname-seq} & 147 & 18.35 \\
            \textsc{\toolname-static} & \textbf{151} & 15.91  \\
            \textsc{\toolname-unbounded} & 150 & 16.83 \\
            \textsc{\toolname-full} & 148 & 18.54  \\
            \bottomrule
        \end{tabular}
        \caption{Performance of variations of \toolname}
        \label{table:versions}
\end{wraptable}
All variations of \toolname rely on the same bi-directional feedback search strategy. In this evaluation, we removed or modified other features such as finite templates, parallelism, dynamic versus static analysis, and completeness to isolate the impact of the synergistic search strategy. Despite these modifications, all variations continue to perform well, demonstrating that the synergy in the search strategy is the key factor behind the strong performance.


\myparagraph{Failure Analysis of \toolname}
We also analyze the programs that \textsc{\toolname-best} could not prove terminating within the timeout. Out of the benchmarks, 17 programs could not be proven to terminate. For 9 of these programs, there are no ranking functions and corresponding invariants in the templates we chose for our evaluation. We can conclude that there is no proof of termination for these templates because \textsc{\toolname-full}, where $\prefnum = \prefiter = \infty$ terminates within the timeout, proving the absence of a termination argument. For 4 of the remaining 8 programs, $\refine$ is unable to find an invariant that rules out the reachability of an unreachable state, so \toolname gets stuck in a loop generating new candidate invariants. For the other 4 benchmarks, \toolname keeps generating new possible ranking functions until the timeout.

\subsection{Comparison with State-of-the-Art Termination Analysis Tools}
\label{sec:othertools}
This section compares \toolname against various state-of-the-art termination analysis tools. Unlike \toolname, these tools leverage a wide array of sophisticated techniques, such as neural networks, multiple ranking functions tailored to different loop components, reduction orders, and optimizations for SMT solvers, specifically designed for the termination analysis queries of programs. Through our evaluation, we show that despite \toolname relying on a simpler termination proof approach---comprising a single ranking function for each loop and a set of invariants generated using a standard off-the-shelf SMT solver---it consistently performs on par with, or even better than, these more complex tools. It proves the termination of a broader range of benchmarks and reduces the average runtime by 16\% to 71\%.

\begin{table}[]
\centering
\resizebox{\textwidth}{!}{
\begin{tabular}{@{}l r r r r r@{}}
\toprule
\textbf{Tool Name} & \textbf{Ranking Function Generation} & \textbf{Invariant Synthesis} & \textbf{Main Technique} \\ 
\midrule
\multirow{2}{*}{\textbf{\textsc{Ultimate}}}   & Separate ranking function for & \multirow{2}{*}{Synthesized independently} & Learning termination \\ 
& each lasso-shaped sub-program &  & programs from traces  \\ \hline
\multirow{2}{*}{\textbf{VeryMax} }   & Multiple ranking functions & \multirow{2}{*}{Tightly coupled like in B-combined} & \multirow{2}{*}{SMT optimizations} \\ 
& for program subcomponents &  &  \\ \hline
\multirow{2}{*}{\textbf{AProVE}}     & Multiple ranking functions & \multirow{2}{*}{Synthesized independently} & Reduction orders,  \\ 
& for program subcomponents & & Dependency pairs  \\ \hline
\multirow{2}{*}{\textbf{DynamiTe}}   & \multirow{2}{*}{Single ranking function for each loop} & Uses external tool (DIG) & Ranking functions,  \\
& & for invariant synthesis & Recurrent sets  \\ \hline
\multirow{2}{*}{\textbf{${\nu}$Term}} & Neural network-based & \multirow{2}{*}{Enumerates invariants from a fixed set} & Neural ranking  \\ 
& ranking function generation & & function synthesis \\ \hline
\multirow{2}{*}{\textbf{\toolname}}  & \multirow{2}{*}{Single ranking function for each loop} & Synthesized independently with feedback & \multirow{2}{*}{Synergistic Synthesis} \\ 
&  &  from ranking function synthesis &  \\ 
\bottomrule
\end{tabular}
}
\caption{Characterization of termination analysis tools}
\label{table:characterization}
\end{table}

We compare \toolname against the following state-of-the-art tools for termination analysis characterized in Table~\ref{table:characterization}.
\textbf{\textsc{Ultimate}}~\cite{ultimate} decomposes programs and constructs ranking functions for these sub-programs. It then checks whether the current set of sub-programs captures the behavior of the entire loop. 
\textbf{VeryMax}~\cite{verymax} also combines many ranking functions to produce a termination argument, searching for different preconditions under which each ranking function is valid. This technique used for each ranking function is closely related to the B-combined baseline we described in the previous section. However, to make VeryMax work effectively, it introduces several complex optimizations for SMT solvers to ensure the generated queries are solvable. 
\textbf{AProVE}~\cite{aprove} is an ensemble of several proof techniques, including termination proofs through ranking functions, reduction orders, and dependency pairs.
\textbf{DynamiTe}\cite{dynamite} combines dynamic analysis with static strategies to infer ranking functions for termination and recurrent sets for non-termination. It uses traces and counterexamples similar to \toolname, but the invariant generation is outsourced to an external tool---DIG\cite{dig}. Despite using a state-of-the-art invariant synthesis tool, DynamiTe still cannot match \toolname's performance because, in \toolname, the invariants are directly guided by the ranking function and directly guide the ranking function search, allowing for a more efficient termination proof process. 
\textbf{${\nu}$Term}~\cite{nuterm} uses neural networks to learn a ranking function from program traces. While neural networks are powerful and efficient, \toolname still outperforms ${\nu}$Term due to the superior search strategy.

It is worth pointing out that \textsc{Ultimate}, VeryMax, and AProVE have previously won the Termination Competition~\cite{termcomp} and AProVE has also won the Software Verification Competition~\cite{svcomp}. All these tools employ parallel implementations in their approach. AProVE and VeryMax are fully parallelized, \textsc{Ultimate} and DynamiTe use parallelism when checking the validity of ranking functions, and both $\nu$Term and \toolname exploit parallelism to generate traces. In our evaluation, we allow full parallelism by using all available cores for these tools.

\begin{figure}
    \centering
    \begin{minipage}[t]{0.45\textwidth}
        \centering
        \small
        \vspace{-40mm}
        \begin{tabular}{@{}l@{\hspace{5mm}}r@{\hspace{5mm}}r@{\hspace{5mm}}}
         \toprule 
            & $\#$ Proved  & Avg. Time (s) \\  
            \midrule
            \textsc{\toolname-best} & \textbf{151} & \textbf{15.37}  \\
            \textsc{Ultimate} & 144 & 23.69 \\
            VeryMax & 143 & 18.30\\
            AProVE & 142 & 22.31\\
            DynamiTe & 126 & 53.04 \\
            $\nu$Term & $96^*$ & $33.13^*$ \\
            \bottomrule
        \end{tabular}
        \captionof{table}{Number of benchmarks out of 168 proved by \toolname and baselines for different templates. Other configurations of \toolname (Table~\ref{table:versions}) also prove more benchmarks within the timeout than other state-of-the-art tools. \textsc{\toolname-seq} matches or beats the average time per benchmark despite using no parallelism. $(^*)$The results for $\nu$Term are for 123 benchmarks.}
        \label{table:all}
    \end{minipage}
    \hfill
    \begin{minipage}[t]{0.48\textwidth}
        \centering
        \includegraphics[width=0.9\linewidth]{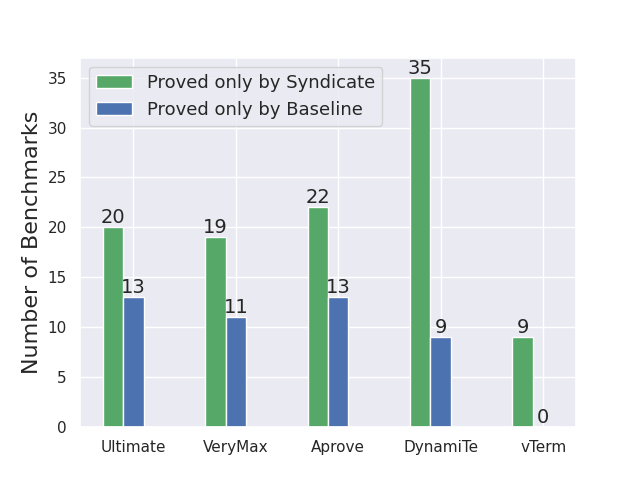}
        \caption{No. of benchmarks proved terminating by \toolname but not by the existing tool and vice-versa. 
        The total number of benchmarks used is 123 for comparison with $\nu$Term and 168 for all others.
        }
        \label{fig:comp-bar}
    \end{minipage}
\end{figure}

In Table~\ref{table:all}, we present the total number of benchmarks each technique was able to prove and the average time taken to run them. The efficacy of \toolname is demonstrated by its ability to prove the highest number of benchmarks in the shortest average time. \toolname achieves significant efficiency improvements over existing tools, reducing average runtime by 35.12\% compared to \textsc{Ultimate}, 16.01\% compared to VeryMax, 31.11\% compared to AProVE, and 71.02\% compared to DynamiTe. Notably, \toolname outperforms AProVE and VeryMax in runtime despite the latter two being fully parallelized and employing different termination arguments and SMT optimizations that could potentially complement \toolname's approach. Additionally, $\nu$Term does not handle sequential loops and calls to random number generators, supporting only 123 of the total benchmarks. Since we can exactly match the template of ranking functions used by $\nu$Term, when comparing \toolname to $\nu$Term, we have both tools use $\tempr_4$. $\nu$Term uses 1000 initial traces for its best performance. On these 123 benchmarks, $\toolname$ proves 105 benchmarks with an average time of 24.86 s and $\nu$Term proves 96 benchmarks with an average time of 33.13 s, yielding a 25\% improvement in runtime for $\toolname$.

From Tables~\ref{table:versions} and \ref{table:all}, it is evident that all variations of \toolname prove more programs terminating than the state-of-the-art techniques in Table~\ref{table:all}. Even \textsc{\toolname-seq}, which is run sequentially with only 1 template for ranking functions ($\tempr_3$), outperforms the existing tools that employ parallelism and are not limited to this template. We can also see that \toolname can efficiently handle infinite sets of possible ranking functions and invariants. In Fig.~\ref{fig:comp-bar}, we present the number of benchmarks that \textsc{\toolname-best} could prove terminating, but the other tools could not, and the number of benchmarks that each existing tool could prove terminating, and \textsc{\toolname-best} could not, within the timeout. This analysis highlights the unique benefits of \toolname's bi-directional synergistic approach compared to the existing state-of-the-art termination analysis tools.

\begin{figure}
    \centering
    \begin{subfigure}[t]{0.4\linewidth}
        \centering
        \begin{lstlisting}[numbers=left, basicstyle=\ttfamily\footnotesize, backgroundcolor=\color{backcolour}]
int x = y + 42;
while (x >= 0) {
    y = 2 * y - x;
    x = (y + x) / 2;
}
        \end{lstlisting}
        \caption{Not proved by AProVE or \textsc{Ultimate}}
        \label{lst:first}
    \end{subfigure}
    \hspace{0.1\linewidth}
    \begin{subfigure}[t]{0.4\linewidth}
        \centering
        \begin{lstlisting}[numbers=left, basicstyle=\ttfamily\footnotesize, backgroundcolor=\color{backcolour}]
int a = 0, b = 0;
while (a * b <= n || a * b <= m) {
    a = a + 1;
    b = b + 1;
}
        \end{lstlisting}
        \caption{Not proved by any existing tool}
        \label{lst:third}
    \end{subfigure}
    \caption{Benchmarks for Case Studies}
    \label{fig:casestudies}
\end{figure}

\subsection{Case Studies Showing Benefits of Bi-directional Feedback}
\label{sec:case-studies}
To shed more light on the efficacy of \toolname, we discuss two non-trivial programs from our benchmarks where some of the existing tools fail to prove termination within the timeout while \toolname succeeds in just 1.8s and 5.4s respectively. We compare more qualitatively against DynamiTe and $\nu$Term because they most closely resemble parts of \toolname. Consider the first program in Fig~\ref{fig:casestudies}. To prove its termination, it suffices to prove that $x$ reduces in successive iterations. A non-trivial algebraic rewriting reveals that, in each iteration, both $x$ and $y$ decrease by the initial value of $x-y$. One needs an invariant at least as strong as $x-y > 0$ to show that $x$ always reduces. AProVE and \textsc{Ultimate} are unable to prove this benchmark despite the termination argument being within their search space. \toolname proves this benchmark in just 1.8s owing to its synergistic search for a termination argument and results in $\rf^* \equiv \max(x+1, 0)$ and $\invar^* \equiv x-y\geq 1$. In this example, DynamiTe, which also relies on traces to come up with candidate ranking functions, is 5$\times$ slower and takes 11.9s to solve the problem. This is potentially because the invariant generation in DynamiTe is outsourced to DIG, and so is not guided by the ranking function search. Also, the traces DynamiTe uses to generate ranking functions are not guided by the invariant search. Invariants other than $\invar^*$ (like $x-y\geq42$ or even stronger) may also prove the validity of $\rf^*$ but a naive unguided search for invariants results in longer runtime. Further, $\nu$Term can also prove the termination in 2.56s, but its reliance on hard-coded invariants is not generalizable. 

The second program in Fig.~\ref{fig:casestudies} shows another non-trivial example consisting of a non-linear loop guard, for which none of the existing tools used in our experiments could prove termination. \toolname, however, generates a valid lexicographic ranking function $\langle \max(n-2m+a-4b+1, 0), \max(m+a-2b+6, 0) \rangle$ and loop invariant $a \geq b$ that proves that the generated ranking function is valid in just 5.4s. DynamiTe is unable to prove this benchmark, and our experiments with the invariant generation engine of DynamiTe (DIG) indicate that it does not produce invariants that are useful for proving the validity of the generated ranking functions within the timeout.  Further, even though there is a valid ranking function in the template used by $\nu$Term, it fails to find it because of its inability to guess the required invariants. These examples demonstrate that guiding both the invariant search and ranking function search the way \toolname does increase the number of benchmarks that can be proved terminating and significantly reduces the analysis time.

\section{Related Works}
\label{sec:related}
\textbf{Combined Search.} Numerous techniques have been designed to reason about program termination~\cite{nuterm, bits, terminator, loopster, dynamite, extremal, completelin, termpol, cooknew, togethersmt, verymax, polykincaid,popl23,chc}. Some of these techniques~\cite{chc, popl23,togethersmt,verymax,polykincaid} directly search for both the ranking function and over-approximation of the reachable states together by encoding both the transition relations of the program statements and properties of ranking functions in one logical formula, which can be prohibitively expensive to solve. Instead, \toolname separates the ranking function and invariant searches while exchanging sufficient bi-directional feedback to efficiently navigate through the space of possible ranking functions and invariants. 

\noindent
\textbf{Uni-directional Feedback.} Some termination analysis tools~\cite{dynamite, terminator, cooknew, bits, datadrivenloop, llms} call an external invariant generation engine or safety checker, which implicitly generates invariants after synthesizing candidate ranking functions. However, these methods typically provide no feedback to the ranking function search beyond the final invariants. So, the invariant search does not guide the ranking function search. Further, the external invariant synthesizer or safety checker has to effectively restart its search for invariants each time it is called because the invariant search state is not used to speed up future iterations of searching for invariants. \toolname, in contrast, uses its $\refine$ procedure to guide both subsequent iterations of $\refine$ itself and future iterations of generating candidate ranking functions. In \S~\ref{sec:ablation}, we show that having the invariant guide the ranking function search significantly increases the number of benchmarks proved within the timeout compared to only having one direction of feedback. Similar to how \toolname utilizes synergy between the invariants of multiple loops in a program and the ranking function of one of the loops, ~\cite{bitprecise} shares information about preconditions necessary for the termination of one function and the current invariants at the end of other functions for an inter-procedural analysis. While ~\cite{bitprecise} shares information when some code affects the initial set of other code, the use of annotated programs to define this synergy within \toolname allows it to have efficiency and completeness guarantees.

\noindent
\textbf{Formal Guarantees.} Some techniques for termination analysis provide completeness guarantees but handle a restricted class of programs\cite{extremal, completelin, completelin1}, such as linear ranking functions, which are not sufficient for more complicated programs. ~\cite{polykincaid} provides relative completeness guarantees for polyhedral ranking functions after statically computing a finite set of possible polyhedral ranking functions for each loop. However, \toolname provides relative completeness guarantees for more general types of non-linearity within ranking functions, including logical operators. ~\cite{chc} and ~\cite{popl23} provide relative completeness guarantees for a general class of ranking functions. However, neither of these approaches provides completeness and efficiency guarantees simultaneously. ~\cite{popl23} is empirically scalable because of its reliance on information from a parallel non-termination analysis. \toolname focuses only on termination and offers relative completeness guarantees while provably maintaining efficiency. \toolname also allows the user to adjust the analysis to be more efficient for specific use cases without sacrificing the relative completeness guarantees. 

\section{Conclusion}
\label{sec:conclusion}
We introduced a new general framework called \toolname for automated termination analysis based on bi-directional feedback between the invariant and ranking function searches. \toolname can be instantiated with different templates for invariants and ranking functions, is efficient, relatively complete, and can handle complex programs. \toolname can also be adapted to different practical settings by adjusting the relative exploration of the invariant and ranking function search spaces. We show that \toolname can prove significantly more benchmarks in less average time than state-of-the-art termination analysis techniques. \toolname is limited to the syntax defined in ~\S~\ref{sec:annotatedpgm} and does not handle arbitrary function calls, memory allocation, threads, or higher-order types. We leave extending the theory to handle these constructs as future work.

\clearpage
\section{Data-Availability Statement}
We have made our tool available in a VM that can be downloaded \href{https://zenodo.org/records/13822537?token=eyJhbGciOiJIUzUxMiJ9.eyJpZCI6ImZmMTU1MTV[…]KQqud0aqXjJ4XankUPoPwgSzHhnbn0toO2fIKDzrjcOhxWlz8WWEB6rR0AMr9sQ}{here}. This VM contains the logs for the experiments run in the paper in \texttt{\scriptsize{syndicate/src/paper\_results/}} and \texttt{\scriptsize{syndicate/src/paper\_results\_summary/}}, so the invariants and ranking functions found can be verified. The VM also contains the source code for the \toolname implementation, and allows the user to run the code on new Java benchmarks. It outputs the logs for the new Java benchmarks in \texttt{\scriptsize{syndicate/src/out/}}. We have also uploaded the same source code (and experimental logs) with a Dockerfile that can be downloaded \href{https://zenodo.org/records/13822873?token=eyJhbGciOiJIUzUxMiJ9.eyJpZCI6ImNlNTYxMGM[…]4kOQLcLfGViPfXbLX9ZVZzXDums_LlQiLYY1CizxUfiVDC-_-JrHtbpXUAi4iAA}{here} if the reviewer wants to use it in place of the VM. We have provided instructions for using this in a \texttt{\scriptsize{README}} file.
\bibliographystyle{ACM-Reference-Format}
\bibliography{main}
\clearpage
\appendix
\pagebreak
\section{Formal Definitions Omitted from Main Sections}
\label{app:defns}

\begin{definition}
\label{def:anno-correct}
Let $\ini \subseteq \stat$ be a set of initial program states and let $\annopgm$ be an annotated program. $\annopgm$ is said to be \emph{correct} with respect to initial states $\ini$ if the annotations in $\annopgm$ are inductive invariants. This can be inductively defined based on the structure of $\annopgm$ as follows.
\begin{equation*}
\label{def:correct-anno}
    \begin{aligned}
        \ini \models \loopfree \ &\triangleq \ \true \\
        \ini \models \texttt{if } \beta \texttt{ then } \annopgm_1 \texttt{ else }\annopgm_2 \ &\triangleq \ (\ini \cap \bracs{\beta} \models \annopgm_1) \wedge (\ini \cap \bracs{\neg\beta} \models \annopgm_2) \\
        \ini \models \annopgm_1\: ;\: \annopgm_2 \ &\triangleq \ (\ini \models \annopgm_1) \wedge (\ini\relcomp\bracs{\annopgm_1} \models \annopgm_2) \\
        \ini \models \inv\: @\: \texttt{while } \beta \texttt{ do }\annopgm_1 \texttt{ od} \ &\triangleq \ (\ini \subseteq \invar) \wedge ((\inv \cap \bracs{\beta})\relcomp\bracs{\annopgm_1} \subseteq \inv \times \inv) \wedge (\inv \cap\bracs{\beta} \models \annopgm_1) 
    \end{aligned}
\end{equation*}

\end{definition}

Note that in the above definition, given an initial set for the annotated program $\alooppgm$, we use its constituent invariants to compute the initial set for the annotated subprograms of $\alooppgm$. The individual initial sets are consequently used for defining the correctness of the annotated programs.

\begin{definition}
\label{def:core-pgm}
    For an annotated program $\annopgm$, the \emph{underlying program} $\core{\annopgm}$ is the program where the annotations on the while loops have been removed. This is can be formally defined based on the structure of $\annopgm$ as follows.
    \begin{align*}
    \core{\alpha} &\triangleq \alpha \\
    \core{\texttt{if } \beta \texttt{ then } \annopgm{_1} \texttt{ else }\annopgm{_2}} &\triangleq \texttt{if } \beta \texttt{ then } \core{\annopgm{_1}} \texttt{ else } \core{\annopgm{_2}} \\
    \core{\annopgm{_1}\: ;\: \annopgm{_2}} &\triangleq \core{\annopgm{_1}}\: ;\: \core{\annopgm{_2}} \\
    \core{\inv \: @ \: \texttt{while } \beta \texttt{ do }\annopgm{_1} \texttt{ od}} &\triangleq \texttt{while } \beta \texttt{ do } \core{\annopgm{_1}} \texttt{ od} \\
\end{align*}
\end{definition}

\begin{figure}[htbp]
    \centering
        $
        \begin{array}{c}
            \inferrule*[lab={\footnotesize{\textsc{Atomic statements}}}]
            {
            }
            {
            \alpha \less \alpha 
            } 
            \hspace{1cm}
            \inferrule*[lab={\footnotesize{\textsc{Sequence}}}]
            {
            \anno'_1 \less \anno'_2 \quad
            \anno''_1 \less \anno''_2
            }
            {
            \anno'_1 ; \anno''_1 \less \anno'_2 ; \anno''_2
            }  
            \\\\
            \inferrule*[lab={\footnotesize{\textsc{Loop}}}]
            {
            \invar_1 \subseteq \invar_2 \quad 
            \anno'_1 \less \anno'_2
            }
            {
            \Big(\invar_1 \ @ \ \texttt{while}(\beta) \texttt{ do }\anno'_1 \texttt{ od }\Big) \less \Big(\invar_2 \ @ \ \texttt{while}(\beta) \texttt{ do }\anno'_2 \texttt{ od }
            \Big)} 
            \\\\ 
            \inferrule*[lab={\footnotesize{\textsc{If}}}]
            {
            \anno'_1 \less \anno'_2 \quad
            \anno''_1 \less \anno''_2
            }
            {
            \Big(\texttt{if}(\beta) \texttt{ then } \anno'_1 \texttt{ else }\anno''_1\Big) \less \Big(\texttt{if}(\beta) \texttt{ then } \anno'_2 \texttt{ else }\anno''_2
            \Big)} 
        \end{array}
    $
    \caption{$\anno_1 \less \anno_2$}
    \label{fig:annoless}
    \label{fig:anno}
\end{figure}

\begin{figure}[htbp]
    \centering
    $
        \begin{array}{c}
            \inferrule*[lab={\footnotesize{\textsc{Atomic statements}}}]
            {
            }
            {
            \alpha \meet \alpha = \alpha  
            } 
            \hspace{1cm}
            \inferrule*[lab={\footnotesize{\textsc{Sequence}}}]
            {
            \anno' = \anno'_1 \meet \anno'_2 \\\\
            \anno'' = \anno''_1 \meet \anno''_2
            }
            {
            \anno'_1 ; \anno''_1 \meet \anno'_2 ; \anno''_2 = \anno' ; \anno''
            } 
            \\\\
            \inferrule*[lab={\footnotesize{\textsc{Loop}}}]
            {
            \invar = \invar_1 \cap \invar_2 \quad 
            \anno = \anno'_1 \meet \anno'_2
            }
            {
            \Big(\invar_1 \ @ \ \texttt{while}(\beta) \texttt{ do }\anno'_1 \texttt{ od }\Big) \meet \Big(\invar_2 \ @ \ \texttt{while}(\beta) \texttt{ do }\anno'_2 \texttt{ od}\Big) = 
            \invar \ @ \ \texttt{while}(\beta) \texttt{ do }\anno \texttt{ od }
            } 
            \\\\ 
            \inferrule*[lab={\footnotesize{\textsc{If}}}]
            {
            \anno' = \anno'_1 \meet \anno'_2 \quad
            \anno'' = \anno''_1 \meet \anno''_2
            }
            {
            \Big(\texttt{if}(\beta) \texttt{ then } \anno'_1 \texttt{ else }\anno''_1 \Big)\meet \Big(\texttt{if}(\beta) \texttt{ then } \anno'_2 \texttt{ else }\anno''_2\Big) =           \texttt{if}(\beta) \texttt{ then } \anno' \texttt{ else }\anno''
            } 
        \end{array}
    $
    \caption{$\anno_1 \meet  \anno_2$}
    \label{fig:annomeet}
\end{figure}
\clearpage
\section{Lemmas and Proofs}
\label{appendix:proofs}
\noindent

\begin{lemma}
\label{aplemma:rfgen}
    If a ranking function $g$ for a loop is bounded from below by $\theta$ and decreases by at least $\delta (> 0)$ in each iteration, then there exists a ranking function $f$ for the loop that is bounded from below by 0 and decreases by at least 1 in each iteration.
\end{lemma}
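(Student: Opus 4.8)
The plan is to construct $f$ explicitly as an affine reparametrization of $g$ that simultaneously normalizes the lower bound to $0$ and the per-iteration decrease to $1$. Since $g$ is bounded below by $\theta$ and decreases by at least $\delta > 0$ on every loop-body transition, I would set
\[
  f(s) = \frac{g(s) - \theta}{\delta}.
\]
Because $\delta > 0$, this is a well-defined real-valued function on program states (recall from Definition~\ref{def:correct-rank} that ranking functions map into $\mathbb{R}$), so it is a legitimate candidate ranking function for the same loop.

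The remaining work is to verify the two defining conditions for $f$, both of which follow by elementary manipulations. For the bounded condition, every relevant state $s$ satisfies $g(s) \geq \theta$, hence $g(s) - \theta \geq 0$, and dividing by the positive constant $\delta$ gives $f(s) \geq 0$. For the reducing condition, for every body transition $(s,s')$ we have $g(s) - g(s') \geq \delta$; the additive shift by $\theta$ cancels in the difference, so $f(s) - f(s') = \bigl(g(s) - g(s')\bigr)/\delta \geq \delta/\delta = 1$.

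I expect no genuine obstacle in this argument. The only points that need care are that $\delta > 0$ (guaranteed by hypothesis), so that dividing by $\delta$ preserves the direction of both inequalities, and that $f$ and $g$ are ranking functions for the \emph{same} loop, so the set of states and the body transition relation over which the bounded and reducing conditions are quantified are identical for the two functions. Consequently the quantified properties transfer verbatim from $g$ to $f$, which establishes the claim and justifies the choice $\epsilon = 0, \delta = 1$ made in the main development.
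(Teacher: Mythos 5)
Your proposal is correct and matches the paper's own proof essentially verbatim: both define $f(s) = (g(s)-\theta)/\delta$ and verify the bounded and reducing conditions by the same elementary manipulations, using $\delta > 0$ to preserve the inequality directions. If anything, your write-up is slightly cleaner, since you state the reducing hypothesis as $g(s)-g(s') \geq \delta$ in the standard orientation, whereas the paper's displayed steps write the difference with the state indices swapped.
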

\begin{proof}
    Consider $f(s)= (g(s) - \theta) / \delta$. Let $R$ be the set of reachable states at the start of the loop body and $T = \{(s_i, s_j)\}$ be the set of possible state pairs before and after the loop body.

    \begin{enumerate}
        \item \textbf{Bounded.}
        \setcounter{number}{1}
        \begin{align*}
            & \forall s \in R. \; g(s) \geq \theta &&&& \mycounter & \\
            & \forall s \in R. \; g(s) - \theta \geq 0 &&&& \mycounter & \\
            & \forall s \in R. \; (g(s) - \theta)/\delta \geq 0 & (as \; \delta > 0) &&& \mycounter & \\
            & \forall s \in R. \; f(s) \geq 0
        \end{align*}
        \item \textbf{Reducing.}
        \setcounter{number}{1}
        \begin{align*}
            & \forall (s_i, s_j) \in T. \; g(s_j) - g(s_i) \geq \delta &&&& \mycounter & \\
            & \forall (s_i, s_j) \in T. \; (g(s_j) - \theta) - (g(s_i) - \theta) \geq \delta &&&& \mycounter & \\
            & \forall (s_i, s_j) \in T. \; ((g(s_j) - \theta)/\delta) - ((g(s_i) - \theta)/\delta) \geq 1 &  (as \; \delta > 0) &&& \mycounter & \\
            & \forall (s_i, s_j) \in T. \; f(s_j) - f(s_i) \geq 1
        \end{align*}        
    \end{enumerate}
\end{proof}

\begin{lemma}
\label{aplemma:annos}
    If $\core{\annopgm{_1}} = \core{\annopgm{_2}}$, then $\bracs{\annopgm{_1} \meet \annopgm{_2}} = \bracs{\annopgm{_1}} \cap \bracs{\annopgm{_2}}$
\end{lemma}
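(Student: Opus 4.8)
The plan is to prove the identity by structural induction on the common underlying program $\pgm = \core{\annopgm_1} = \core{\annopgm_2}$. The point of the hypothesis $\core{\annopgm_1}=\core{\annopgm_2}$ is that both $\bracs{\cdot}$ and the meet operation (Fig.~\ref{fig:annomeet}) are defined by the \emph{same} syntactic recursion, so fixing $\pgm$ forces $\annopgm_1$ and $\annopgm_2$ to have identical shape, differing only in the loop invariants. This lets me compare the two sides constructor by constructor. The base case is when $\pgm$ is loop-free: a loop-free $\loopfree$ carries no annotations, so $\annopgm_1 = \annopgm_2 = \loopfree$, hence $\annopgm_1 \meet \annopgm_2 = \loopfree$ and $\bracs{\loopfree} = \bracs{\loopfree} \cap \bracs{\loopfree}$ trivially.

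For the loop constructor $\annopgm_i = \inv_i\,@\,\texttt{while}\,\beta\,\texttt{do}\,\ldots$, the semantics is the rectangle $\inv_i \times (\inv_i \cap \bracs{\neg\beta})$ and the meet annotates the loop with $\inv_1 \cap \inv_2$, so the case reduces to the set identity $(\inv_1 \cap \inv_2) \times ((\inv_1 \cap \inv_2)\cap\bracs{\neg\beta}) = (\inv_1 \times (\inv_1\cap\bracs{\neg\beta})) \cap (\inv_2 \times (\inv_2\cap\bracs{\neg\beta}))$, i.e. Cartesian product distributes over intersection; note the body invariants never appear here. For the conditional constructor both sides split over the guard as $\id{\bracs{\beta}}\relcomp(\cdot) \cup \id{\bracs{\neg\beta}}\relcomp(\cdot)$. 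Applying the inductive hypothesis on each branch, using that domain-restriction $\id{S}\relcomp(\cdot)$ commutes with intersection, and observing that the two cross terms in the expansion of the right-hand side vanish because the then- and else-branches have disjoint input domains $\bracs{\beta}$ and $\bracs{\neg\beta}$, closes this case. Both of these are routine set algebra.

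The sequence constructor is the crux. Writing $R_i = \bracs{\annopgm'_i}$ and $S_i = \bracs{\annopgm''_i}$, the inductive hypotheses give $\bracs{\annopgm'_1 \meet \annopgm'_2} = R_1 \cap R_2$ and $\bracs{\annopgm''_1 \meet \annopgm''_2} = S_1 \cap S_2$, so the goal becomes the relational identity $(R_1 \cap R_2)\relcomp(S_1 \cap S_2) = (R_1 \relcomp S_1) \cap (R_2 \relcomp S_2)$. The inclusion $\subseteq$ is immediate and is really just monotonicity of $\bracs{\cdot}$ with respect to $\less$ (since $\annopgm_1 \meet \annopgm_2 \less \annopgm_1,\annopgm_2$ by Fig.~\ref{fig:annoless}, a fact provable by a parallel induction). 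The reverse inclusion $\supseteq$ is the hard part: given $(s,s')$ with an intermediate witness $m_1$ for $R_1\relcomp S_1$ and a \emph{different} witness $m_2$ for $R_2\relcomp S_2$, one must produce a single interface state lying in both $R_1\cap R_2$ as an image of $s$ and in the domain of $S_1\cap S_2$ for $s'$.

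I expect this common-witness construction to be the main obstacle, and it is where I would concentrate the argument. Relational composition does not distribute over intersection for arbitrary relations, so the proof cannot be purely algebraic; it must exploit the specific shape of annotated semantics — namely that loop blocks contribute \emph{rectangles}, whose image sets are independent of the particular input, and that loop-free blocks are pinned down by the base case — to force $m_1$ and $m_2$ to collapse to a common witness $m$. If the rectangle/loop-free structure alone does not suffice to guarantee such an $m$, the honest fallback is to carry the inductiveness of the annotations ($\ini \models \annopgm_i$, in the spirit of Proposition~\ref{prop:correct-refine}) through the induction and use it to rule out the spurious mismatched intermediate states; pinning down exactly what structural hypothesis makes the sequence step go through is the delicate part of the whole proof.
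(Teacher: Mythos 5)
Your induction skeleton and your base, loop, and conditional cases coincide exactly with the paper's proof of Lemma~\ref{aplemma:annos} (same rectangle identity for annotated loops, same disjoint-guard splitting with vanishing cross terms for conditionals), and your monotonicity remark for the $\subseteq$ inclusion is the paper's Lemma~\ref{aplemma:refinetrans}. But your proposal stops short precisely where you predicted trouble: you never produce the common-witness argument for the sequence case, so as written it does not prove the stated equality. The substantive news is that your suspicion is correct and no such argument exists --- the equality is false in general, and only the inclusion $\bracs{\annopgm_1 \meet \annopgm_2} \subseteq \bracs{\annopgm_1} \cap \bracs{\annopgm_2}$ survives. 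Concretely, take states $(x,y) \in \mathbb{Z}^2$ and
\begin{equation*}
\annopgm_i \;=\; \big(\inv_i \: @ \: \texttt{while } (x+y\ge 1) \texttt{ do skip od}\big)\:;\:\big(\stat \: @ \: \texttt{while } (0 \ge 1) \texttt{ do skip od}\big),
\end{equation*}
with $\inv_1 = \bracs{x\ge1}$ and $\inv_2 = \bracs{y\ge1}$. Since annotated-loop semantics is a rectangle that ignores the body, $\bracs{\annopgm_i} = \inv_i \times \stat$ (the first rectangle is nonempty because $\inv_i \cap \bracs{x+y\le 0} \neq \emptyset$), so $\bracs{\annopgm_1}\cap\bracs{\annopgm_2}$ contains $\big((1,1),(0,0)\big)$, witnessed by the \emph{distinct} intermediate states $m_1=(1,-2)$ and $m_2=(-2,1)$ --- exactly the mismatched-witness obstruction you isolated. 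But $\inv_1 \cap \inv_2 \cap \bracs{x+y\le 0} = \emptyset$, so the meet's first rectangle is empty and $\bracs{\annopgm_1 \meet \annopgm_2} = \emptyset$. Note that neither of your candidate rescues can work: the rectangles are the culprit rather than the cure (they erase the dependence of the output on the witness), and both $\annopgm_i$ above are correct with respect to $\ini = \{(1,1)\}$, so carrying $\ini \models \annopgm_i$ through the induction cannot restore equality either.

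For comparison, the paper's own proof of the sequence case commits exactly the step you refused to make: from its lines (4) and (5) it concludes the equality, tacitly appealing to $(R_1 \cap R_2)\relcomp(S_1 \cap S_2) = (R_1 \relcomp S_1)\cap(R_2 \relcomp S_2)$, i.e., the false distribution of relational composition over intersection. So your blind attempt diagnoses the situation more accurately than the published argument. The damage to the development is contained: the only appeal to Lemma~\ref{aplemma:annos} is step (8) in the sequence case of Lemma~\ref{aplemma:conj}, and steps (9)--(10) there consume only the consequences $S \relcomp \bracs{\annopgm_{11} \meet \annopgm_{21}} \subseteq S \relcomp \bracs{\annopgm_{11}}$ and $S \relcomp \bracs{\annopgm_{11} \meet \annopgm_{21}} \subseteq S \relcomp \bracs{\annopgm_{21}}$, which need only the $\subseteq$ half. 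The right repair --- one your proposal essentially already contains --- is to weaken the lemma to $\bracs{\annopgm_1 \meet \annopgm_2} \subseteq \bracs{\annopgm_1} \cap \bracs{\annopgm_2}$, proved either by your observation that $\annopgm_1 \meet \annopgm_2 \less \annopgm_i$ (Figs.~\ref{fig:annoless} and~\ref{fig:annomeet}) together with Lemma~\ref{aplemma:refinetrans}, or directly by the easy half of the witness argument.
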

\begin{proof}

Let $\core{\annopgm{_1}} = \core{\annopgm{_2}} = \prog$. We will do the proof by induction on the structure of $\prog$.\\

\noindent 
\textbf{Case 1 (Base Case): $\annopgm{_1} \equiv \annopgm{_2} \equiv \alpha$}
\setcounter{number}{1}
\begin{align*}
    & \annopgm{_1} \meet \annopgm{_2} = \alpha && && \mycounter & \\
    & \text{From (1), }\bracs{\annopgm{_1} \meet \annopgm{_2}} = \bracs{\alpha} &&&& \mycounter & \\
    & \bracs{\annopgm{_1}} = \bracs{\alpha} &&&& \mycounter & \\
    & \bracs{\annopgm{_2}} = \bracs{\alpha} &&&& \mycounter & \\
    & \text{From (3, 4), }\bracs{\annopgm{_1}} \cap \bracs{\annopgm{_2}} = \bracs{\alpha} &&&& \mycounter & \\
    & \text{From (2, 5), }\bracs{\annopgm{_1} \meet \annopgm{_2}} = \bracs{\annopgm{_1}} \cap \bracs{\annopgm{_2}}
\end{align*}

\noindent 
\textbf{Case 2: }$\annopgm{_1} \equiv \invar_1 \ @ \ \texttt{while }(\beta) \texttt{ do }\annopgm'_1 \texttt{ od} \qquad \annopgm{_2} \equiv \invar_2 \ @ \ \texttt{while }(\beta) \texttt{ do }\annopgm'_2 \texttt{ od}$
\setcounter{number}{1}
\begin{align*}
    & \annopgm{_1} \meet \annopgm{_2} = (\invar_1 \cap \invar_2) \ @ \ \texttt{while }(\beta) \texttt{ do } (\annopgm'_1 \meet \annopgm'_2) \texttt{ od} && && \mycounter & \\
    & \text{From (1), }\bracs{\annopgm{_1} \meet \annopgm{_2}} = (\invar_1 \cap \invar_2) \times (\invar_1 \cap \invar_2 \cap \bracs{\neg \beta}) &&&& \mycounter & \\
    & \bracs{\annopgm{_1}} = \invar_1 \times (\invar_1 \cap \bracs{\neg \beta}) &&&& \mycounter & \\
    & \bracs{\annopgm{_2}} = \invar_2 \times (\invar_2 \cap \bracs{\neg \beta}) &&&& \mycounter & \\
    & \text{From (3, 4), }\bracs{\annopgm{_1}} \cap \bracs{\annopgm{_2}} = (\invar_1 \cap \invar_2) \times (\invar_1 \cap \invar_2 \cap \bracs{\neg \beta}) &&&& \mycounter & \\
    & \text{From (2, 5), }\bracs{\annopgm{_1} \meet \annopgm{_2}} = \bracs{\annopgm{_1}} \cap \bracs{\annopgm{_2}}
\end{align*}

\noindent 
\textbf{Case 3: }$\annopgm{_1} \equiv \annopgm'_1 ; \annopgm''_1 \qquad \annopgm{_2} \equiv \annopgm'_2 ; \annopgm''_2$
\setcounter{number}{1}
\begin{align*}
    & \annopgm{_1} \meet \annopgm{_2} = (\annopgm'_1 \meet \annopgm'_2) ; (\annopgm''_1 \meet \annopgm''_2) && && \mycounter & \\
    & \text{From IH, }\bracs{\annopgm'_1 \meet \annopgm'_2} = \bracs{(\annopgm'_1} \cap \bracs{\annopgm'_2)} &&&& \mycounter & \\
    & \text{From IH, }\bracs{\annopgm''_1 \meet \annopgm''_2} = \bracs{(\annopgm''_1} \cap \bracs{\annopgm''_2)} &&&& \mycounter & \\
    & \bracs{(\annopgm'_1 \meet \annopgm'_2) ; (\annopgm''_1 \meet \annopgm''_2)} = \bracs{\annopgm'_1 \meet \annopgm'_2} \circ \bracs{\annopgm'_1 \meet \annopgm'_2} &&&& \mycounter & \\
    & \bracs{\annopgm{_1}} \cap \bracs{\annopgm{_2}} = (\bracs{\annopgm'_{1}} \circ \bracs{\annopgm''_{1}}) \cap (\bracs{\annopgm'_{2}} \circ \bracs{\annopgm''_{2}}) &&&& \mycounter & \\
    & \text{From (4, 5), }\bracs{\annopgm{_1} \meet \annopgm{_2}} = \bracs{\annopgm{_1}} \cap \bracs{\annopgm{_2}} 
\end{align*}

\noindent 
\textbf{Case 4: } $\annopgm{_1} \equiv \texttt{if}(\beta) \texttt{ then } \annopgm'_{1} \texttt{ else }\annopgm''_{1} \quad \annopgm{_2} \equiv \texttt{if}(\beta) \texttt{ then } \annopgm'_{2} \texttt{ else }\annopgm''_{2}$

\setcounter{number}{1}
\begin{align*}
    & \annopgm_1 \meet \annopgm_2 \equiv \texttt{if }(\beta)\texttt{ then }\annopgm'_1 \meet \annopgm'_2 \texttt{ else }\annopgm''_1 \meet \annopgm''_2 && && \mycounter & \\
    & \bracs{\annopgm{_1} \meet \annopgm{_2}} = \id{\bracs{\beta}}\circ\bracs{\annopgm'_1 \meet \annopgm'_2} \cup \id{\bracs{\neg\beta}}\circ\bracs{\annopgm''_1 \meet \annopgm''_2} &&&& \mycounter & \\
    & \text{From IH, }\bracs{\annopgm'_1 \meet \annopgm'_2} = \bracs{(\annopgm'_1} \cap \bracs{\annopgm'_2)} &&&& \mycounter & \\
    & \text{From IH, }\bracs{\annopgm''_1 \meet \annopgm''_2} = \bracs{(\annopgm''_1} \cap \bracs{\annopgm''_2)} &&&& \mycounter & \\
    & \bracs{(\annopgm{_1} \meet \annopgm{_2})} = \id{\bracs{\beta}}\circ(\bracs{(\annopgm'_1} \cap \bracs{\annopgm'_2)}) \cup \id{\bracs{\neg\beta}}\circ(\bracs{(\annopgm''_1} \cap \bracs{\annopgm''_2)}) &&&& \mycounter & \\
    & \bracs{\annopgm{_1}} = \id{\bracs{\beta}}\circ\bracs{(\annopgm'_1} \cup \id{\bracs{\neg\beta}}\circ\bracs{\annopgm''_1} &&&& \mycounter & \\
    & \bracs{\annopgm{_2}} = \id{\bracs{\beta}}\circ\bracs{(\annopgm'_2} \cup \id{\bracs{\neg\beta}}\circ\bracs{\annopgm''_2} &&&& \mycounter & \\
    & \text{From (5, 6, 7), }\bracs{\annopgm_1 \meet \annopgm_2} = \bracs{\annopgm{_1}} \cap \bracs{\annopgm{_2}} &&&&
\end{align*}

\end{proof}

\begin{lemma}
\label{aplemma:less}
    If 
    \begin{enumerate}
        \item $S \models \annopgm$
        \item $S' \subseteq S$
    \end{enumerate}
    Then $S' \models \annopgm$ 
\end{lemma}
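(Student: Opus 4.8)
The plan is to prove the statement by structural induction on the annotated program $\annopgm$, keeping the claim universally quantified over all pairs $S' \subseteq S$ so that the induction hypothesis can be instantiated on subprograms with \emph{different} initial sets than the original $S', S$. The unifying observation is that every occurrence of the initial set in the definition of $\ini \models \annopgm$ (Definition~\ref{def:anno-correct}) is monotone: the initial set appears only on the left-hand side of subset constraints, or it is pushed forward through operations that preserve inclusion. So weakening $S$ to $S'$ can only make each constraint easier to satisfy.

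For the base case $\annopgm = \loopfree$, correctness is defined to be \true, so $S' \models \loopfree$ holds trivially. For the conditional $\texttt{if } \beta \texttt{ then } \annopgm_1 \texttt{ else } \annopgm_2$, I would use that $S' \subseteq S$ implies $S' \cap \bracs{\beta} \subseteq S \cap \bracs{\beta}$ and likewise for $\bracs{\neg\beta}$; applying the induction hypothesis to $\annopgm_1$ and $\annopgm_2$ with these shrunk initial sets yields the two conjuncts required for $S' \models \annopgm$. For the sequence $\annopgm_1 ; \annopgm_2$, the first conjunct $S' \models \annopgm_1$ follows directly from the IH; for the second, I would note that the forward image $\ini \relcomp \bracs{\annopgm_1}$ is monotone in $\ini$, so $S' \relcomp \bracs{\annopgm_1} \subseteq S \relcomp \bracs{\annopgm_1}$, and then apply the IH to $\annopgm_2$ at this smaller post-set rather than at $S'$ itself.

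The loop case $\inv\: @\: \texttt{while } \beta \texttt{ do } \annopgm_1 \texttt{ od}$ is in fact the simplest: of its three conjuncts, only $\ini \subseteq \invar$ mentions the initial set, and it is preserved since $S' \subseteq S \subseteq \invar$. The inductiveness condition $(\inv \cap \bracs{\beta})\relcomp\bracs{\annopgm_1} \subseteq \inv \times \inv$ and the body-correctness condition $\inv \cap \bracs{\beta} \models \annopgm_1$ refer only to the fixed annotation $\invar$ and not to the initial set, so they carry over verbatim from $S \models \annopgm$ to $S' \models \annopgm$ with no appeal to the induction hypothesis.

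I do not expect a genuine obstacle here; the only point requiring care is formulating the induction hypothesis with the pair $(S', S)$ universally quantified, which is precisely what permits the sequence case to invoke the IH at the propagated post-sets. The single small fact worth recording is the monotonicity of the relational image in its set argument; beyond that, every case is direct substitution into the defining clauses of $\models$.
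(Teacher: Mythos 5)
Your proposal is correct and follows essentially the same route as the paper's proof: structural induction on $\annopgm$ with the initial-set pair kept universally quantified, so the IH can be invoked at the shrunk sets $S' \cap \bracs{\beta}$ and $S' \relcomp \bracs{\annopgm_1}$ in the conditional and sequence cases, while the loop case reduces to $S' \subseteq S \subseteq \invar$ with the inductiveness and body conjuncts carried over unchanged. Your explicit note on the monotonicity of the relational image is a fact the paper's sequence case uses implicitly, so you have if anything been slightly more careful than the original.
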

\begin{proof}
We do this by induction on the structure of the underlying program $\core{\annopgm}$:\\
\textbf{Base Case:} $\annopgm = \alpha$\\
The proof holds trivially from the rules of atomic statements as $S' \models \alpha$ holds for all $S'$.

\textbf{Inductive Case:}\\
\textbf{Case 1: } $\annopgm \equiv \texttt{if}(\beta) \texttt{ then } \annopgm_{1} \texttt{ else }\annopgm_{2}$
\setcounter{number}{1}
\begin{align*}
    & S \models \annopgm && \text{Antecedant (1)} && \mycounter & \\
    & \text{From (1), } S \cap \bracs{\beta} \models \annopgm{_1} &&&& \mycounter & \\
    & S \cap \bracs{\neg \beta} \models \annopgm{_2} &&&& \mycounter & \\
    & S' \subseteq S && \text{Antecedant (2)} && \mycounter &\\
    & \text{From (2, 4), using IH, } S' \cap \bracs{\beta} \models \annopgm{_1} &&&& \mycounter & \\
    & \text{From (3, 4), using IH, } S' \cap \bracs{\neg \beta} \models \annopgm{_2} &&&& \mycounter & \\
    & \text{From (5, 6), using IH, } S' \models \texttt{if}(\beta) \texttt{ then } \annopgm_{1} \texttt{ else }\annopgm_{2} && \text{Consequent}&&  & \\
\end{align*}

\noindent
\textbf{Case 2: } $\annopgm \equiv \annopgm_{1} \; ; \; \annopgm_{2}$
\setcounter{number}{1}
\begin{align*}
    & S \models \annopgm && \text{Antecedant (1)} && \mycounter & \\
    & \text{From (1), } S  \models \annopgm{_1} &&&& \mycounter & \\
    & S \circ \bracs{\annopgm{_1}} \models \annopgm{_2} &&&& \mycounter & \\
    & S' \subseteq S && \text{Antecedant (2)} && \mycounter &\\
    & \text{From (2, 4), using IH, } S' \models \annopgm{_1} &&&& \mycounter & \\
    & \text{From (3, 4), using IH, } S' \circ \bracs{\annopgm{_1}}\models \annopgm{_2} &&&& \mycounter & \\
    & \text{From (5, 6), using IH, } S' \models \annopgm{_1} ; \annopgm{_2} && \text{Consequent}&&  & \\
\end{align*}

\noindent
\textbf{Case 3: } $\annopgm \equiv \invar \; @ \; \texttt{while}(\beta) \texttt{ do }\annopgm{_1} \texttt{ od}$
\setcounter{number}{1}
\begin{align*}
    & S \models \annopgm && \text{Antecedant (1)} && \mycounter & \\
    & \text{From (1), } S  \subseteq  \invar &&&& \mycounter & \\
    & \invar \cap \bracs{\beta} \circ \bracs{\annopgm{_1}} \subseteq  \invar &&&& \mycounter & \\
    & \invar \cap \bracs{\beta} \models \annopgm{_1} &&&& \mycounter & \\
    & S' \subseteq S && \text{Antecedant (2)} && \mycounter &\\
    & \text{From (2, 5), using IH, } S  \models \invar &&&& \mycounter & \\
    & \text{From (3, 4, 6), using IH, } S' \models \invar \; @ \; \texttt{while}(\beta) \texttt{ do }\annopgm{_1} \texttt{ od} && \text{Consequent}&&  & \\
\end{align*}
\end{proof}

\begin{lemma}
\label{aplemma:conj}
If
\begin{enumerate}
    \item $S \models \annopgm{_1}$
    \item $S \models \annopgm{_2}$
    \item $\core{\annopgm{_1}} = \core{\annopgm{_2}}$
\end{enumerate}
Then $S \models \annopgm{_1} \meet \annopgm{_2}$ 
\end{lemma}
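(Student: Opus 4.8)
The plan is to proceed by structural induction on the underlying program $\pgm = \core{\annopgm_1} = \core{\annopgm_2}$, mirroring the inductive style of Lemmas~\ref{aplemma:annos} and~\ref{aplemma:less}. Because $\annopgm_1$ and $\annopgm_2$ share the same underlying program, at every node they have the same shape (both sequences, both conditionals, or both loops), so $\annopgm_1 \meet \annopgm_2$ is well defined by the rules of Figure~\ref{fig:annomeet} and the inductive hypothesis applies to corresponding subprograms. The base case $\annopgm_1 = \annopgm_2 = \loopfree$ is immediate, since correctness of a loop-free program is defined to be $\true$.

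For the conditional case, $\annopgm_1 \meet \annopgm_2$ branches on $\beta$ into $\annopgm'_1 \meet \annopgm'_2$ and $\annopgm''_1 \meet \annopgm''_2$. From $S \models \annopgm_1$ and $S \models \annopgm_2$ I would extract $S \cap \bracs{\beta} \models \annopgm'_i$ and $S \cap \bracs{\neg\beta} \models \annopgm''_i$ for $i \in \{1,2\}$, and then apply the inductive hypothesis to the two branches. Since the initial set for each branch does not change under the meet, this case is purely mechanical.

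The sequence and loop cases carry the real work, and both hinge on two earlier results: Lemma~\ref{aplemma:annos}, which gives $\bracs{\annopgm'_1 \meet \annopgm'_2} = \bracs{\annopgm'_1} \cap \bracs{\annopgm'_2}$, and Lemma~\ref{aplemma:less}, the downward closure of correctness under shrinking the initial set. In the sequence case $\annopgm_1 \meet \annopgm_2 = (\annopgm'_1 \meet \annopgm'_2) \:;\: (\annopgm''_1 \meet \annopgm''_2)$, the first clause $S \models \annopgm'_1 \meet \annopgm'_2$ follows directly from the inductive hypothesis. For the second clause I must show $S \relcomp \bracs{\annopgm'_1 \meet \annopgm'_2} \models \annopgm''_1 \meet \annopgm''_2$. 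Here Lemma~\ref{aplemma:annos} gives $S \relcomp \bracs{\annopgm'_1 \meet \annopgm'_2} = S \relcomp (\bracs{\annopgm'_1} \cap \bracs{\annopgm'_2}) \subseteq S \relcomp \bracs{\annopgm'_1}$, and since the premise yields $S \relcomp \bracs{\annopgm'_1} \models \annopgm''_1$, Lemma~\ref{aplemma:less} gives $S \relcomp \bracs{\annopgm'_1 \meet \annopgm'_2} \models \annopgm''_1$; symmetrically it models $\annopgm''_2$, so the inductive hypothesis closes the case. This is the main obstacle: taking the meet shrinks the set of states reaching the second component, so I cannot reuse the premises verbatim and must route through the downward-closure lemma.

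For the loop case, I would write $\invar = \invar_1 \cap \invar_2$, which lies in $\tempi$ because $\tempi$ is closed under intersection. The containment $S \subseteq \invar$ follows from $S \subseteq \invar_1$ and $S \subseteq \invar_2$. Body correctness $\invar \cap \bracs{\beta} \models \annopgm'_1 \meet \annopgm'_2$ follows from $\invar \cap \bracs{\beta} \subseteq \invar_i \cap \bracs{\beta}$, the premise $\invar_i \cap \bracs{\beta} \models \annopgm'_i$, Lemma~\ref{aplemma:less}, and then the inductive hypothesis. The remaining obligation is inductiveness: for $(s,s') \in (\invar \cap \bracs{\beta}) \relcomp \bracs{\annopgm'_1 \meet \annopgm'_2}$, Lemma~\ref{aplemma:annos} gives $(s,s') \in \bracs{\annopgm'_1} \cap \bracs{\annopgm'_2}$, and combining $s \in \invar_1 \cap \bracs{\beta}$ with the premise $(\invar_1 \cap \bracs{\beta}) \relcomp \bracs{\annopgm'_1} \subseteq \invar_1 \times \invar_1$ yields $s' \in \invar_1$; symmetrically $s' \in \invar_2$, hence $s' \in \invar$, and since $s \in \invar$ we conclude $(s,s') \in \invar \times \invar$. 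This verifies all three clauses of loop correctness and completes the induction.
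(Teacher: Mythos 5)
Your proof is correct and follows essentially the same route as the paper's: structural induction on the shared underlying program, with the sequence and loop cases routed through Lemma~\ref{aplemma:annos} (semantics of the meet equals the intersection of semantics) and Lemma~\ref{aplemma:less} (correctness is preserved under shrinking the initial set), and closure of $\tempi$ under intersection handling the loop annotation. The only difference is cosmetic: you spell out the loop inductiveness step pointwise, where the paper states the containment $(\invar_1 \cap \invar_2 \cap \bracs{\beta}) \relcomp \bracs{\annopgm_{11} \meet \annopgm_{21}} \subseteq \invar_1 \cap \invar_2$ directly.
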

\begin{proof}
$\core{\annopgm{_1}} = \core{\annopgm{_2}}$ ensures that the meet $\annopgm{_1} \meet \annopgm{_2}$ is defined as we define it only for annotated programs with the same underlying program.
Let $\core{\annopgm{_1}} = \core{\annopgm{_2}} = \prog$. We will do the proof by induction on the structure of $\prog$.\\
\textbf{Base Case:} $\annopgm{_1} = \annopgm{_2} = \alpha$\\
In this case, $\annopgm_3 = \annopgm{_1} \meet \annopgm{_2} = \alpha$. As we are given $S \models \annopgm{_1} (= \alpha)$, it follows that $S \models \annopgm_3 (= \alpha)$.

\noindent 
\textbf{Inductive Case:}\\
\textbf{Case 1: } $\annopgm{_1} = \texttt{if}(\beta) \texttt{ then } \annopgm_{11} \texttt{ else }\annopgm_{12} \quad \annopgm{_2} = \texttt{if}(\beta) \texttt{ then } \annopgm_{21} \texttt{ else }\annopgm_{22}$
\setcounter{number}{1}
\begin{align*}
    & S \models \annopgm{_1} && \text{Antecedant (1)} && \mycounter & \\
    & \text{From (1), } (S \cap \bracs{\beta})  \models \annopgm_{11} &&&& \mycounter & \\
    & (S \cap \bracs{\neg\beta})  \models \annopgm_{12} &&&& \mycounter & \\
    & S \models \annopgm{_2} && \text{Antecedant (2)} && \mycounter & \\
    & \text{From (4), } (S \cap \bracs{\beta})  \models \annopgm_{21} &&&& \mycounter & \\
    & (S \cap \bracs{\neg\beta})  \models \annopgm_{22} &&&& \mycounter & \\
    & \text{From (2, 5), using IH, } (S \cap \bracs{\beta})  \models \annopgm_{11} \meet \annopgm_{21} &&&& \mycounter & \\
    & \text{From (3, 6), using IH, } (S \cap \bracs{\neg\beta})  \models \annopgm_{12} \meet \annopgm_{22} &&&& \mycounter & \\
    & \text{From (7,8), } S \models \annopgm{_1} \meet \annopgm{_2} && \text{Consequent}&&  & \\
\end{align*}

\noindent
\textbf{Case 2: } $\annopgm{_1} = \annopgm_{11} \; ; \; \annopgm_{12} \quad \annopgm{_2} = \annopgm_{21} \; ; \; \annopgm_{22}$
\setcounter{number}{1}
\begin{align*}
    & S \models \annopgm{_1} && \text{Antecedant (1)} && \mycounter & \\
    & \text{From (1), } S  \models \annopgm_{11} &&&& \mycounter & \\
    & (S \circ \bracs{\annopgm_{11}}) \models \annopgm_{12} &&&& \mycounter & \\
    & S \models \annopgm{_2} && \text{Antecedant (2)} && \mycounter & \\
    & \text{From (4), } S  \models \annopgm_{21} &&&& \mycounter & \\
    & (S \circ \bracs{\annopgm_{21}}) \models \annopgm_{22} &&&& \mycounter & \\
    & \text{From (2, 5), using IH, } S  \models \annopgm_{11} \meet \annopgm_{21} &&&& \mycounter & \\
    & \text{From lemma~\ref{aplemma:annos}, }S \circ \bracs{(\annopgm_{11} \wedge \annopgm_{21})}) \equiv  S \circ (\bracs{\annopgm_{11}} \cap \bracs{\annopgm_{21}}) &&&& \mycounter & \\
    & \text{From (8), }S \circ \bracs{(\annopgm_{11} \wedge \annopgm_{21})}) \subseteq (S \circ \bracs{\annopgm_{11}}) &&&& \mycounter & \\
    & \text{From (8), }S \circ \bracs{(\annopgm_{11} \wedge \annopgm_{21})}) \subseteq (S \circ \bracs{\annopgm_{21}}) &&&& \mycounter & \\
    & \text{From (9), using Lemma~\ref{aplemma:less}, }S \circ \bracs{(\annopgm_{11} \wedge \annopgm_{21})}) \models \annopgm_{12} &&&& \mycounter & \\
    & \text{From (10), using Lemma~\ref{aplemma:less}, }S \circ \bracs{(\annopgm_{11} \wedge \annopgm_{21})}) \models \annopgm_{22} &&&& \mycounter & \\
    & \text{From (11, 12), using IH, } S \circ \bracs{(\annopgm_{11} \wedge \annopgm_{21})}) \models \annopgm_{12} \meet \annopgm_{22} &&&& \mycounter & \\
    & \text{From (7, 13), } S \models \annopgm{_1} \meet \annopgm{_2} && \text{Consequent}&&  & \\
\end{align*}

\noindent
\textbf{Case 3: } $\annopgm{_1} \equiv \invar_1 \; @ \; \texttt{while}(\beta) \texttt{ do }\annopgm_{11} \texttt{ od} \quad \annopgm{_2} \equiv\invar_2 \; @ \; \texttt{while}(\beta) \texttt{ do }\annopgm_{21} \texttt{ od}$\\
\setcounter{number}{1}
\begin{align*}
    & S \models \annopgm{_1} && \text{Antecedant (1)} && \mycounter & \\
    & \text{From (1), } S  \subseteq  \invar_1 &&&& \mycounter & \\
    & \invar_1 \cap \bracs{\beta} \circ \bracs{\annopgm_{11}} \subseteq  \invar_1 &&&& \mycounter & \\
    & \invar_1 \cap \bracs{\beta} \models \annopgm_{11} &&&& \mycounter & \\
    & S \models \annopgm{_2} && \text{Antecedant (2)} && \mycounter & \\
    & \text{From (5), } S  \subseteq  \invar_2 &&&& \mycounter & \\
    & \invar_2 \cap \bracs{\beta} \circ \bracs{\annopgm_{21}} \subseteq  \invar_2 &&&& \mycounter & \\
    & \invar_2 \cap \bracs{\beta} \models \annopgm_{21} &&&& \mycounter & \\
    & \text{From (2, 6), }S \subseteq \invar_1 \cap \invar_2 &&&& \mycounter & \\
    & \text{From (3, 7), }(\invar_1 \cap \invar_2 \cap \bracs{\beta}) \circ \bracs{\annopgm_{11} \meet \annopgm_{21}} \subseteq (\invar_1 \cap \invar_2) &&&& \mycounter & \\
    & \text{From 4 using Lemma~\ref{aplemma:less}, }\invar_1 \cap \invar_2 \cap \bracs{\beta} \models \annopgm_{11} &&&& \mycounter \\
    & \text{From 8 using Lemma~\ref{aplemma:less}, }\invar_1 \cap \invar_2 \cap \bracs{\beta} \models \annopgm_{21} &&&& \mycounter \\
    & \text{From (10, 11, 12), using IH, } S \models \annopgm{_1} \meet \annopgm{_2} && \text{Consequent}&&  & \\
\end{align*}
\end{proof}

\begin{lemma}
\label{aplemma: meeter}
        If $\annopgm{_1}, \annopgm{_2} \in \lati(L, \tempi)$ satisfy:
        \begin{enumerate}
            \item $\ini \models \annopgm{_1}$
            \item $\ini \models \annopgm{_2}$
            \item $\tempi$ is closed under intersection
        \end{enumerate}
        then $\annopgm{_1} \meet \annopgm{_2} \in \lati(L, \tempi)$ and $\ini \models \annopgm{_1} \meet \annopgm{_2}$.
\end{lemma}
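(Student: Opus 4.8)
The plan is to split the statement into its two conjuncts---membership in the lattice $\lati(L, \tempi)$ and correctness with respect to $\ini$---and dispatch each separately, since the second is essentially already available from Lemma~\ref{aplemma:conj}.

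First I would observe that since $\annopgm_1, \annopgm_2 \in \lati(L, \tempi)$, both share the same underlying program, namely $\core{\annopgm_1} = \core{\annopgm_2} = L$. This is exactly the precondition that makes $\annopgm_1 \meet \annopgm_2$ well-defined, as the meet (Figure~\ref{fig:annomeet}) is defined only for annotated programs with a common underlying program. To show $\annopgm_1 \meet \annopgm_2 \in \lati(L, \tempi)$, I would argue by structural induction on $L$ that the meet preserves the underlying program---each rule in Figure~\ref{fig:annomeet} recurses structurally and leaves the control-flow skeleton untouched---and that every invariant annotation in $\annopgm_1 \meet \annopgm_2$ remains a member of $\tempi$. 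The latter is the key point: at each $\texttt{while}$-loop the meet annotates the loop with $\inv_1 \cap \inv_2$, where $\inv_1, \inv_2 \in \tempi$ are the corresponding annotations in $\annopgm_1$ and $\annopgm_2$; by closure of $\tempi$ under intersection (hypothesis 3), this intersection again lies in $\tempi$. Hence $\annopgm_1 \meet \annopgm_2$ is an annotated program over $L$ with all annotations drawn from $\tempi$, i.e., an element of $\lati(L, \tempi)$. For the correctness conjunct I would invoke Lemma~\ref{aplemma:conj} directly with $S = \ini$: the hypotheses $\ini \models \annopgm_1$ and $\ini \models \annopgm_2$ together with $\core{\annopgm_1} = \core{\annopgm_2}$ are precisely its three premises, and its conclusion is $\ini \models \annopgm_1 \meet \annopgm_2$.

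The main obstacle---such as it is---lies entirely in the membership part, and specifically in the role of closure under intersection. The correctness conjunct is immediate once Lemma~\ref{aplemma:conj} is in hand, so the only genuine content is confirming that the inductively defined meet never escapes the invariant template: without hypothesis 3 the intersected invariant could fall outside $\tempi$, and the resulting object would not live in $\lati(L, \tempi)$. Everything else is a routine structural induction mirroring the four cases (atomic, sequence, loop, if) of the meet definition, so I would not grind through them in detail.
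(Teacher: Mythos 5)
Your proposal is correct and matches the paper's own proof: the paper likewise dispatches membership in $\lati(L, \tempi)$ by noting that the meet only ever intersects loop annotations, so closure of $\tempi$ under intersection suffices, and then derives $\ini \models \annopgm_1 \meet \annopgm_2$ directly from Lemma~\ref{aplemma:conj}. Your write-up is merely more explicit than the paper's (spelling out the structural induction and the well-definedness of the meet), but the decomposition and the key lemma are identical.
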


\begin{proof}
$\tempi$ being closed under intersection takes care of the fact that $\annopgm_3 = \annopgm{_1} \meet \annopgm{_2}$ is in $\lati(L, \tempi)$ as the meet operator takes the intersection of invariants in $\annopgm{_1}$ and $\annopgm{_2}$. Now, as $\ini  \models \annopgm{_1}$ and  $\ini  \models \annopgm{_2}$, $\ini \models \annopgm{_1} \meet \annopgm{_2}$ follows directly from the lemma~\ref{aplemma:conj}.
\end{proof}

\begin{lemma}
\label{aplemma:refinetrans}
     If $\core{\annopgm{_1}} = \core{\annopgm{_2}}$, then $\annopgm{_1} \less \annopgm{_2} \implies \bracs{\annopgm{_1}} \subseteq \bracs{\annopgm{_2}}$
\end{lemma}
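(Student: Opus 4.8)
The plan is to prove the containment by structural induction on the common underlying program $\prog = \core{\annopgm_1} = \core{\annopgm_2}$, mirroring the format of Lemmas~\ref{aplemma:annos}--\ref{aplemma:conj}. At each step I would invert the derivation of $\annopgm_1 \less \annopgm_2$ using the rules of Figure~\ref{fig:annoless} to extract the premises relating the corresponding sub-programs (and, in the loop case, the invariant containment $\invar_1 \subseteq \invar_2$), and then combine these with the inductive definition of $\bracs{\cdot}$. Note that the hypothesis $\core{\annopgm_1} = \core{\annopgm_2}$ guarantees that $\annopgm_1$ and $\annopgm_2$ share the same top-level constructor, so exactly one $\less$-rule applies in each case.

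For the base case $\prog = \loopfree$, inversion of $\less$ forces $\annopgm_1 = \annopgm_2 = \loopfree$, so $\bracs{\annopgm_1} = \bracs{\loopfree} = \bracs{\annopgm_2}$ and the containment holds as an equality. For the inductive cases I would rely on two monotonicity facts about the relational operators appearing in the semantics: relational composition $\relcomp$ and union $\cup$ are each monotone with respect to $\subseteq$ in every argument. In the sequence case $\annopgm_1 = \annopgm'_1;\annopgm''_1$, $\annopgm_2 = \annopgm'_2;\annopgm''_2$, inversion gives $\annopgm'_1 \less \annopgm'_2$ and $\annopgm''_1 \less \annopgm''_2$; the induction hypothesis yields $\bracs{\annopgm'_1} \subseteq \bracs{\annopgm'_2}$ and $\bracs{\annopgm''_1} \subseteq \bracs{\annopgm''_2}$, and monotonicity of $\relcomp$ gives $\bracs{\annopgm'_1}\relcomp\bracs{\annopgm''_1} \subseteq \bracs{\annopgm'_2}\relcomp\bracs{\annopgm''_2}$, i.e.\ $\bracs{\annopgm_1} \subseteq \bracs{\annopgm_2}$. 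The conditional case is analogous, using that the guarded pieces $\id{\bracs{\beta}}\relcomp\bracs{\cdot}$ and $\id{\bracs{\neg\beta}}\relcomp\bracs{\cdot}$ are each monotone and that $\cup$ preserves $\subseteq$.

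The loop case is the one to state carefully, but it is in fact the easiest: for $\annopgm_1 = \invar_1 @ \texttt{while}(\beta)\texttt{ do }\annopgm'_1\texttt{ od}$ and $\annopgm_2 = \invar_2 @ \texttt{while}(\beta)\texttt{ do }\annopgm'_2\texttt{ od}$, inversion of the \textsc{Loop} rule gives $\invar_1 \subseteq \invar_2$. Since the annotated semantics of a loop is $\bracs{\annopgm_i} = \invar_i \times (\invar_i \cap \bracs{\neg\beta})$ and depends only on the annotating invariant (not on the body), $\invar_1 \subseteq \invar_2$ immediately yields $\invar_1 \cap \bracs{\neg\beta} \subseteq \invar_2 \cap \bracs{\neg\beta}$ and hence $\invar_1 \times (\invar_1 \cap \bracs{\neg\beta}) \subseteq \invar_2 \times (\invar_2 \cap \bracs{\neg\beta})$ by monotonicity of the Cartesian product. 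Notably, the second premise $\annopgm'_1 \less \annopgm'_2$ (and thus the induction hypothesis on the body) is not even needed here.

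I do not anticipate a genuine obstacle: the argument is a clean structural induction whose only delicate point is bookkeeping — each inductive step must begin by inverting the correct syntactic $\less$-rule to obtain the right hypotheses, and the monotonicity of $\relcomp$, $\cup$, and $\times$ must be invoked explicitly. The loop case, which is where one might expect difficulty because it carries the invariant information, is paradoxically the shortest, precisely because the over-approximating loop semantics discards the body entirely.
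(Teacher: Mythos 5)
Your proposal is correct and follows essentially the same route as the paper's proof: structural induction on the common underlying program, with the sequence and conditional cases handled by the induction hypothesis plus monotonicity of $\relcomp$ and $\cup$, and the loop case discharged using only $\invar_1 \subseteq \invar_2$ since $\bracs{\invar\:@\:\texttt{while}\ \beta\ \texttt{do}\ \annopgm\ \texttt{od}} = \invar \times (\invar \cap \bracs{\neg\beta})$ ignores the body. Your explicit remark that the body premise of the \textsc{Loop} rule is unnecessary matches the paper's proof, which likewise never invokes the induction hypothesis in that case.
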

\begin{proof}
Let $\core{\annopgm{_1}} = \core{\annopgm{_2}} = \prog$. We will do the proof by induction on the structure of $\prog$.\\

    \textbf{Case 1 (Base case): $\annopgm{_1} \equiv \annopgm{_2} \equiv \alpha$}
    \setcounter{number}{1}
    \begin{align*}
        & \bracs{\annopgm{_1}} = \alpha && && \mycounter & \\
        & \bracs{\annopgm{_2}} = \alpha &&&& \mycounter & \\
        & \text{From (1, 2), }\bracs{\annopgm{_1}} \subseteq \bracs{\annopgm{_2}} &&&& \mycounter & \\
    \end{align*}
        \textbf{Case 2: }$\annopgm{_1} \equiv \invar_1 \ @ \ \texttt{while }(\beta) \texttt{ do }\annopgm'_1 \texttt{ od} \qquad \annopgm{_2} \equiv \invar_2 \ @ \ \texttt{while }(\beta) \texttt{ do }\annopgm'_2 \texttt{ od}$
    \setcounter{number}{1}
    \begin{align*}
        & \text{From } (\annopgm{_1} \less \annopgm{_2}), \; \invar_1 \subseteq \invar_2 && && \mycounter & \\
        & \bracs{\annopgm{_1}} = \invar_1 \times (\invar_1 \cap \bracs{\neg \beta}) &&&& \mycounter & \\
        & \bracs{\annopgm{_2}} = \invar_2 \times (\invar_2 \cap \bracs{\neg \beta})  &&&& \mycounter & \\
        & \text{From (1, 2, 3), }\bracs{\annopgm{_1}} \subseteq \bracs{\annopgm{_2}} &&&& \mycounter & \\
    \end{align*}
    \textbf{Case 3: }$\annopgm{_1} \equiv \annopgm'_1 ; \annopgm''_1 \qquad \annopgm{_2} \equiv \annopgm'_2 ; \annopgm''_2$
    \setcounter{number}{1}
    \begin{align*}
        & \text{From } (\annopgm{_1} \less \annopgm{_2}), \; \annopgm'_1 \less \annopgm'_2 \wedge \annopgm''_1 \less \annopgm''_2 && && \mycounter & \\
        & \bracs{\annopgm{_1}} = \bracs{\annopgm'_1} \circ \bracs{\annopgm''_1}   &&&& \mycounter & \\
        &\bracs{\annopgm{_2}} = \bracs{\annopgm'_2} \circ \bracs{\annopgm''_2} &&&& \mycounter & \\
        & \text{From 1 using IH, }\bracs{\annopgm'_1} \subseteq \bracs{\annopgm'_2} &&&& \mycounter & \\
        & \text{From 1 using IH, }\bracs{\annopgm''_1} \subseteq \bracs{\annopgm''_2} &&&& \mycounter & \\
                & \text{From (2,3,4,5), }\bracs{\annopgm{_1}} \subseteq \bracs{\annopgm{_2}} &&&& \mycounter & \\
        & 
    \end{align*}

\noindent
\textbf{Case 4: } $\annopgm{_1} \equiv \texttt{if}(\beta) \texttt{ then } \annopgm'_{1} \texttt{ else }\annopgm''_{1} \qquad \annopgm{_2} \equiv \texttt{if}(\beta) \texttt{ then } \annopgm'_{2} \texttt{ else }\annopgm''_{2}$
\setcounter{number}{1}
\begin{align*}
        & \text{From } (\annopgm{_1} \less \annopgm{_2}), \; \annopgm'_1 \less \annopgm'_2 \wedge \annopgm''_1 \less \annopgm''_2 && && \mycounter & \\
        & \bracs{\annopgm{_1}} = \id{\bracs{\beta}} \circ \bracs{\annopgm'_1} \cup \id{\bracs{\neg\beta}} \circ \bracs{\annopgm''_1}   &&&& \mycounter & \\
        & \bracs{\annopgm{_2}} = \id{\bracs{\beta}} \circ \bracs{\annopgm'_2} \cup \id{\bracs{\neg\beta}} \circ \bracs{\annopgm''_2}   &&&& \mycounter & \\
        & \text{From 1 using IH, }\bracs{\annopgm'_1} \subseteq \bracs{\annopgm'_2} &&&& \mycounter & \\
        & \text{From 1 using IH, }\bracs{\annopgm''_1} \subseteq \bracs{\annopgm''_2} &&&& \mycounter & \\
                & \text{From (2,3,4,5), }\bracs{\annopgm{_1}} \subseteq \bracs{\annopgm{_2}} &&&& \mycounter & \\
        & 
\end{align*}
\end{proof}

\begin{lemma}
\label{aplemma:name}
    If
    \begin{enumerate}
        \item $\ini \models \alooppgm{_1}, f$
        \item $\ini \models \alooppgm{_2}$
        \item $\alooppgm{_2} \less \alooppgm{_1}$
    \end{enumerate}
    then $\ini \models \alooppgm{_2},f$
\end{lemma}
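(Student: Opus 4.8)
The plan is to unfold Definition~\ref{def:correct-rank} for both the hypotheses and the goal. Writing the two annotated loops as $\alooppgm_1 = \inv_1\:@\:\texttt{while } \beta \texttt{ do } \anno'_1 \texttt{ od}$ and $\alooppgm_2 = \inv_2\:@\:\texttt{while } \beta \texttt{ do } \anno'_2 \texttt{ od}$, the goal $\ini \models \alooppgm_2, f$ splits into two conjuncts: (a) the annotation correctness $\ini \models \alooppgm_2$, and (b) the ranking conditions $f(s) \geq 0$ and $f(s) - f(s') \geq 1$ for every $(s,s') \in \id{\inv_2 \cap \bracs{\beta}}\relcomp\bracs{\anno'_2}$. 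Part (a) is precisely hypothesis (2), so it is immediate, and I am left to establish only (b).

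The key step I would isolate is the containment of transition relations
\[
\id{\inv_2 \cap \bracs{\beta}}\relcomp\bracs{\anno'_2} \ \subseteq\ \id{\inv_1 \cap \bracs{\beta}}\relcomp\bracs{\anno'_1}.
\]
Once this holds, (b) follows at once: any pair $(s,s')$ on the left also lies on the right, and hypothesis (1), via Definition~\ref{def:correct-rank} applied to $\alooppgm_1$, already guarantees $f(s)\geq 0$ and $f(s)-f(s')\geq 1$ for every such pair. Thus the ranking conditions transfer downward along $\less$ purely because the set of loop-body pairs shrinks.

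To prove the containment I would first read off from $\alooppgm_2 \less \alooppgm_1$ (hypothesis (3)), using the \textsc{Loop} rule of Figure~\ref{fig:annoless}, the two facts $\inv_2 \subseteq \inv_1$ and $\anno'_2 \less \anno'_1$. The first gives $\inv_2 \cap \bracs{\beta} \subseteq \inv_1 \cap \bracs{\beta}$, and hence $\id{\inv_2 \cap \bracs{\beta}} \subseteq \id{\inv_1 \cap \bracs{\beta}}$, since the identity restriction is monotone in its defining set. For the bodies, since $\alooppgm_1$ and $\alooppgm_2$ annotate the same underlying loop we have $\core{\anno'_1} = \core{\anno'_2}$, so Lemma~\ref{aplemma:refinetrans} applies to $\anno'_2 \less \anno'_1$ and yields $\bracs{\anno'_2} \subseteq \bracs{\anno'_1}$. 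Finally, relational composition $\relcomp$ is monotone in both arguments, so composing these two containments produces exactly the desired inclusion.

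I expect the main obstacle to be bookkeeping rather than conceptual. The one genuinely substantive ingredient is the monotonicity of $\bracs{\cdot}$ along $\less$, and that is delegated wholesale to Lemma~\ref{aplemma:refinetrans}; the remaining work is the set-theoretic monotonicity of $\id{\cdot}$ and of $\relcomp$. The only side condition needing care is the hypothesis $\core{\anno'_1} = \core{\anno'_2}$ required to invoke Lemma~\ref{aplemma:refinetrans}, which holds because $\alooppgm_1$ and $\alooppgm_2$ are annotations of the same loop $\looppgm$ in $\pgm$ and therefore share the same loop body up to the invariant annotations.
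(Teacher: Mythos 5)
Your proof is correct and takes essentially the same route as the paper's: both unfold Definition~\ref{def:correct-rank}, dispatch annotation correctness via hypothesis (2), derive the containment $\id{\inv_2 \cap \bracs{\beta}}\relcomp\bracs{\anno'_2} \subseteq \id{\inv_1 \cap \bracs{\beta}}\relcomp\bracs{\anno'_1}$ from the \textsc{Loop} rule together with Lemma~\ref{aplemma:refinetrans}, and then transfer the ranking conditions from hypothesis (1). If anything, you are slightly more careful than the paper in spelling out the monotonicity of $\id{\cdot}$ and $\relcomp$ and the side condition $\core{\anno'_1} = \core{\anno'_2}$ needed to invoke Lemma~\ref{aplemma:refinetrans}.
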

\begin{proof}
    Recall that $\alooppgm$ is an annotated loop program $\alooppgm = \inv\: @\: \texttt{while } \beta \texttt{ do } \annopgm \texttt{ od}$. From definition~\ref{def:correct-rank}  
    As defined, $\ini \models \alooppgm{_2},f$ holds when $\ini \models \alooppgm{_2}$ (already given) and $f$ is valid for the loop program $\alooppgm{_2} = \invar_2 \; @ \; \texttt{while}(\beta) \texttt{ do }\annopgm'_{2} \texttt{ od}$. We are given that $f$ is valid for loop program $\alooppgm{_1} = \invar_1 \; @ \; \texttt{while}(\beta) \texttt{ do }\annopgm'_{1} \texttt{ od}$, i.e. it satisfies $$\forall (s,s') \in \id{\inv_1 \cap \bracs{\beta}}\relcomp\bracs{\annopgm{_1}}, \rf(s) \geq 0 \text{ and } \rf(s) - \rf(s') \geq 1$$.

    Also, $\annopgm{_2} \less \annopgm{_1} \implies (\invar_2 \subseteq \invar_1) \wedge (\annopgm'_2 \less \annopgm'_1)$, which is equivalent to $(\invar_2 \subseteq \invar_1) \wedge (\bracs{\annopgm'_2} \subseteq \bracs{\annopgm'_1})$ from lemma~\ref{aplemma:refinetrans}. 
    This implies that $(\id{\inv_2 \cap \bracs{\beta}}\relcomp\bracs{\annopgm{_2}}) \subseteq (\id{\inv_1 \cap \bracs{\beta}}\relcomp\bracs{\annopgm{_1}})$.
    So, from the definition of $f$ being valid for $\alooppgm{_1}$, we can also say:

    $$\forall (s,s') \in \id{\inv_2 \cap \bracs{\beta}}\relcomp\bracs{\annopgm{_2}}, \rf(s) \geq 0 \text{ and } \rf(s) - \rf(s') \geq 1$$.
    
    This proves that $\ini \models \alooppgm{_2},f$.
\end{proof}

Proof for Proposition~\ref{prop:correct-refine}:
\begin{enumerate}
    \item $\ini \models \annopgm_2$ and $\ini \models \annopgm_1$, then $\ini \models \annopgm_1 \meet \annopgm_2$ follows from lemma ~\ref{aplemma: meeter}. 
    \item If $\ini \models \annopgm_1,f$, $\ini \models \annopgm_2$, and $\annopgm_2 \less \annopgm_1$, then $\ini \models \annopgm_2,f$ follows from Lemma~\ref{aplemma:name}
\end{enumerate}

\clearpage
\section{Ablation Studies}
\label{appendix:ablation}

\begin{figure}[H]
    \centering
    \begin{subfigure}{0.45\textwidth}
        \includegraphics[width=\linewidth]{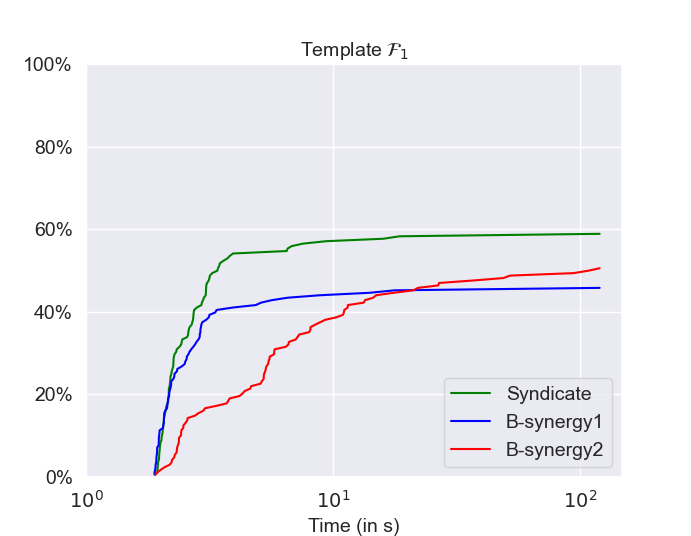}
        \caption{Template $\tempr_1$}
        \label{fig:baselines12a}
    \end{subfigure}
    \begin{subfigure}{0.45\textwidth}
        \includegraphics[width=\linewidth]{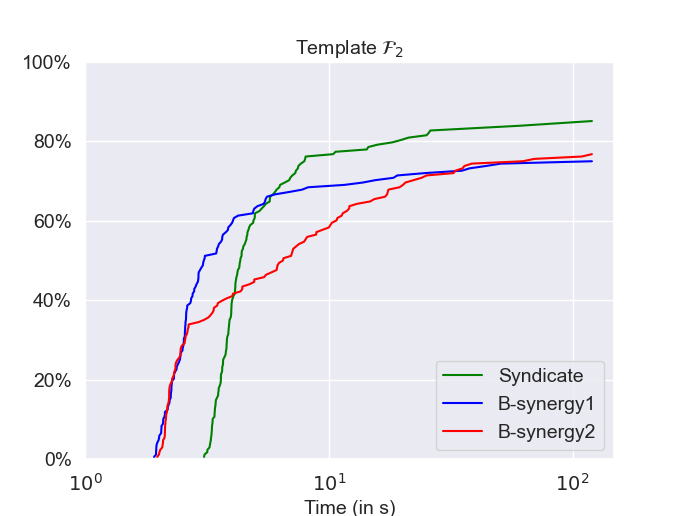}
        \caption{Template $\tempr_2$}
        \label{fig:baseline12b}
    \end{subfigure}
    \begin{subfigure}{0.45\textwidth}
        \includegraphics[width=\linewidth]{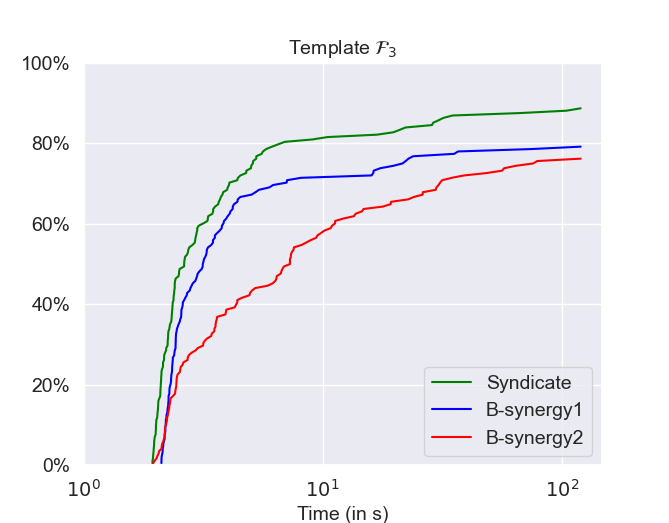}
        \caption{Template $\tempr_3$}
        \label{fig:baselines12c}
    \end{subfigure}
    \begin{subfigure}{0.45\textwidth}
        \includegraphics[width=\linewidth]{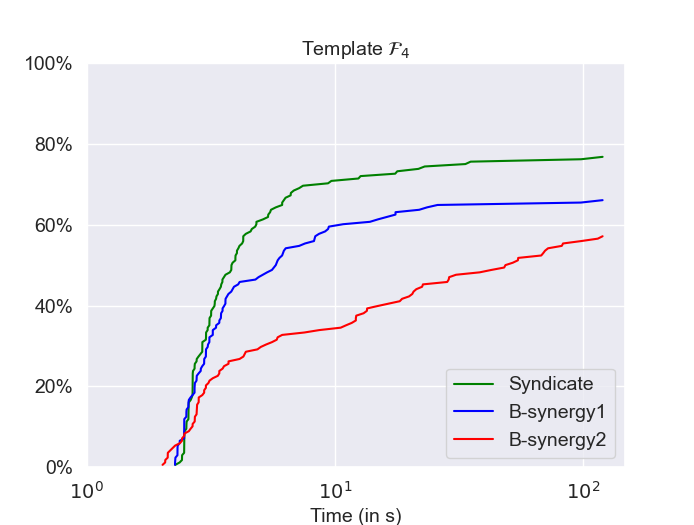}
        \caption{Template $\tempr_4$}
        \label{fig:baseline12d}
    \end{subfigure}
    \begin{subfigure}{0.45\textwidth}
        \includegraphics[width=\linewidth]{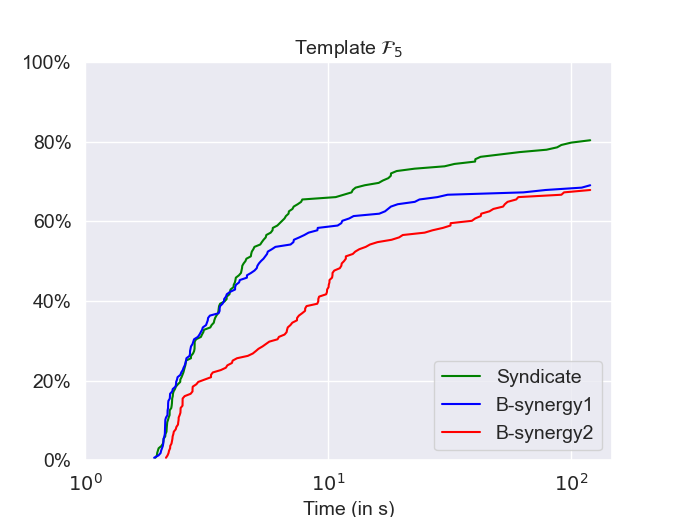}
        \caption{Template $\tempr_5$}
        \label{fig:baseline12e}
    \end{subfigure}
    \caption{Percentage of benchmarks proved terminating with growing running time.}
\end{figure}

\clearpage
\section{Implementation SMT Queries}
\label{appendix:smtqueries}

\textbf{$\gc$ -} This function returns an element of the set $\func(\rp, \tempr)$. Since every function in $\tempr$ is bounded below by 0, this problem reduces to solving the following SMT query:
\begin{align*}
\bigwedge_{(s,s') \in \rp} \Big(\big(\sum_i \max(a^i_{0} + \sum_j a^i_{j}x_j, 0)\big) - \big(\sum_i \max(a^i_{0} + \sum_j a^i_jx'_j, 0)\big) \geq 1 \Big) 
\end{align*}
Here, $s = (x_0, \cdots x_j)$ represents a program state with concrete valuations of all program variables, and $a^i_j$ are variables representing the coefficients of the ranking function. \\

\textbf{$\checkr$ -} This function takes as input $f$ and $\anno$. As discussed in \S~\ref{sec:sub-algorithms}, using the over-approximation of the loop body defined by $\bracs \anno$, we define variables representing the state at the start of one iteration, $(x_1, \cdots, x_j)$, the end of the iteration,  $(x'_1, \cdots, x'_j)$, and at the exit of each loop within the body of the outer loop, $(x_{1,j}, \cdots, x_{i,j})$. Then, we determine the satisfiability of the formula: $ f((x_1, \cdots, x_j))-f((x'_1, \cdots, x'_j)) < 1 $. If this formula is unsatisfiable, then we have proven the validity of $f$ because we have already established that $f$ is bounded. If this formula is satisfiable, then the algorithm continues to refine either $\anno$ or $\rp$.\\

\textbf{$\gcounter$ -} From the SMT query in $\checkr$, we have found a satisfying assignment to $(x_1, \cdots, x_n)$ and $(x'_1, \cdots, x'_n)$. $\gcounter(f, \anno)$ returns the counter-example $(p,p')$, where $p = (x_1, \cdots, x_n)$ and $p' = (x'_1, \cdots, x'_n)$. In our implementation, since we defined $\checkr$ to define variables, $(x_{1,j}, \cdots, x_{n,j})$, for the states at the end of each inner loop, instead of only returning $(p,p')$, this function can return $(p_1, \cdots, p_m)$, where $p_1 = p$, $p_m = p'$ and $p_i = (x_{1,j}, \cdots, x_{i,j})$ for all $i \in [1,m]$. This sequence of states will be used to define $\refine$ to satisfy the properties defined in \S~\ref{sec:theory}.\\

\textbf{$\refine$ -}
As described in \S~\ref{sec:sub-algorithms}, when considering a program with one loop, we use Equations~\ref{equation:invgen} and ~\ref{equation:invcheck} to generate and check the validity of candidate invariants that exclude $(p_1, \cdots, p_m)$ from the over-approximation of the transition relation of the loop. When considering multiple loops, we try to refine the invariants for the loops in individual SMT queries to exclude the state $p_i$ from $\invar_i$. If we cannot refine the invariants of any of the loops to exclude the corresponding counter-example, we assume that the states $(p_1, \cdots, p_m)$ are reachable.


\clearpage
\section{Parameterized Algorithms}
\label{appendix:newalgos}

\begin{figure}[H]
    \begin{algorithm}[H]
            \caption{\toolname Parametrized Version}
            \label{ap:genalgo}
            \raggedright\textbf{Inputs:} $\rp$ - Set of traces, $\annopgm$ - Program annotated with the invariants (initialized with $\invar_{\true}$) \\
            \raggedright\textbf{Output:} $\true$ if termination is proved, $\false$ otherwise
            \begin{algorithmic}[1]
                \Procedure{$\fr$}{$\pgm, \ptraces, \prefnum, \prefiter$}
                \State $\rp \gets \gtraces(\pgm, \ptraces)$
                \State $\rques \gets \{\}$
                \State $\annopgm \gets \pgm$ annotated with $\invar_{\true}$ for all loops.           
                \State $\texttt{should\_gen\_rf} \gets \true$
                \State $\texttt{refine\_per\_rf} \gets 0$
                    \While{\true }
                        \If{$\texttt{should\_gen\_rf}$}
                            \State $\texttt{gen},f \gets \gc(\rp \cup  \rques)$
                            \State $\texttt{refine\_per\_rf} \gets 0$
                            \If{$\neg \texttt{gen}$}
                                \State \Return $\false$
                            \EndIf
                        \EndIf
                        \If{$\checkr(f, \anno)$} 
                            \State \Return $\true$
                        \EndIf
                        \State $(p,p') \gets \gcounter(f, \annopgm)$
                        \State $\texttt{is\_refined}, \; \annopgm, t, \texttt{reached\_lim}  \gets \refine(\annopgm, (p,p'), t, \prefiter)$                        
                        \If{$\neg \texttt{is\_refined}$}
                            \If{$\texttt{reached\_lim}$}
                                \State $\rques \gets \rques \cup \{(p, p')\}$
                            \Else
                                \State $\rp \gets \rp \cup \{(p, p')\}$
                            \EndIf
                        \EndIf
                        \State $\texttt{refine\_per\_rf} \gets \texttt{refine\_per\_rf} + 1$
                        \State $\texttt{should\_gen\_rf} \gets \neg \texttt{is\_refined} \; || \; (\texttt{refine\_per\_rf} == \prefnum)$

                    \EndWhile 
                \EndProcedure
            \end{algorithmic}
        \end{algorithm}
\end{figure}

\begin{figure}
    \begin{algorithm}[H]
    \centering
    \caption{Algorithm for Refine State}\label{ap:refalgo}
    \raggedright\textbf{Inputs:} Annotated program ($\annopgm$), counter-example ($(p, p'$), reachable state pairs ($t$)\\
    \raggedright\textbf{Output:} $\true$ or $\false$, refined annotated program ($\annopgm'$), additional reachable states pairs ($t$)
    \begin{algorithmic}[1]    
\Procedure{$\mathsf{refine}$}{$\anno, (p, p'), t, \prefiter$}
    \State $\mathsf{inv_c}$ = \{ \}
    \State $\anno', \text{gen} \gets \gi(t, (p, p'), \mathsf{inv_c})$
    \State $\texttt{iter} \gets 1$
    \State $\texttt{reached\_lim} \gets \false$
    \While{\text{gen}}
        \If{$\checki(\anno')$} 
            \State \Return $\true$, $\anno \meet \anno'$, $t$
        \EndIf
        \State $(c,c'), \mathsf{for\_pre} \gets \gcounterinv(\anno')$
        \If{$\mathsf{for\_pre}$}
            \State $t \gets t \cup \{(c, c')\}$
        \Else
            \State $\mathsf{inv_c} \gets \mathsf{inv_c} \cup \{(c, c')\}$ 
        \EndIf
        \If{$\texttt{iter} == \prefnum$}
            \State $\texttt{reached\_lim} \gets \true$
            \State \textbf{break}
            \Comment{Break if we have reached the invariant generation limit.}
        \EndIf
        \State $\anno', \text{gen} \gets \gi(t, (p, p'), \mathsf{inv_c})$
    \EndWhile
    \State \Return $\false$, $\anno$, $t$, $\texttt{reached\_lim}$
\EndProcedure
\end{algorithmic}
\end{algorithm}
\end{figure}

\end{document}